\newcommand{\acli}[1]{\emph{\acl{#1}}}		% for italicized acro
\newcommand{\acdef}[1]{\emph{\acl{#1}} \textup{(\acs{#1})}\acused{#1}}		% for acro def
\colorlet{MyRed}{Crimson!90!Black}
\colorlet{MyGreen}{DarkGreen!80!Black}
\colorlet{MyBlue}{MediumBlue}
\newcommand{\afterhead}{.\;}		% for changing headings
\newcommand{\ackperiod}{}		% for AMS bug in acknowledgments
\newcommand{\para}[1]{\medskip\paragraph{\textbf{#1\afterhead}}}
\newcommand{\cmark}{\checkmark}		
\newcommand{\xmark}{\ding{53}}
\theoremstyle{plain}
\newtheorem{theorem}{Theorem}		% for theorems
\newtheorem{proposition}{Proposition}		% for propositions
\newtheorem*{corollary*}{Corollary}		% for corollaries (unnumbered)
\theoremstyle{definition}
\newtheorem{definition}{Definition}		% for definitions
\newtheorem{assumption}{Assumption}		% for assumptions
\newtheorem{example}{Example}		% for examples
\newtheorem*{definition*}{Definition}		% for definitions (unnumbered)
\newtheorem*{assumption*}{Assumptions}		% for assumptions (unnumbered)
\newtheorem*{example*}{Example}		% for examples (unnumbered)
\theoremstyle{remark}
\newtheorem{remark}{Remark}		% for remarks
\newtheorem*{remark*}{Remark}		% for remarks (unnumbered)
\def\endenv{\hfill\S}
\newenvironment{Proof}[1][Proof]{\begin{proof}[#1]}{\end{proof}}
\newcounter{proofpart}
\theoremstyle{TH}
\newtheorem{informal}{Informal Version of Theorem}
\numberwithin{example}{section}		% for example numbering
\newcommand{\EMAIL}[1]{\email{\href{mailto:#1}{#1}}}
\crefname{algorithm}{Algorithm}{Algorithms}
\crefname{equation}{Equation}{Equations}
\crefname{assumption}{Assumption}{Assumptions}
\crefname{app}{Appendix}{Appendices}
\newcommand{\debug}[1]{#1}		% for removing macro coloring
\newcommand{\revise}[1]{#1}		% for revision markup
\def\endedit{\color{black}}		% for ending markup
\newcommand{\newmacro}[2]{\newcommand{#1}{\debug{#2}}}		% for shorthand definitions
\newcommand{\newop}[2]{\DeclareMathOperator{#1}{\debug{#2}}}		% for shorthand definitions
\DeclarePairedDelimiter{\braces}{\{}{\}}		% for braces
\DeclarePairedDelimiter{\bracks}{[}{]}		% for brackets
\DeclarePairedDelimiter{\parens}{(}{)}		% for parentheses
\DeclarePairedDelimiter{\abs}{\lvert}{\rvert}		% for absolute value
\DeclarePairedDelimiterX{\setdef}[2]{\{}{\}}{#1:#2}		% for set builder notation
\DeclarePairedDelimiterXPP{\exclude}[1]{\mathopen{}\setminus}{\{}{\}}{}{#1}
\newcommand{\R}{\mathbb{R}}		% for reals
\DeclareMathOperator*{\argmin}{arg\,min}		% for argmin
\DeclareMathOperator*{\union}{\bigcup}		% for unions
\DeclareMathOperator{\bigoh}{\mathcal{O}}		% for Landau O
\DeclarePairedDelimiterXPP{\bigof}[1]{\bigoh}{(}{)}{}{#1}		% for Landau O
\newmacro{\diam}{D}		% for diameter
\DeclareMathOperator{\dist}{dist}		% for distance
\DeclareMathOperator{\ess}{ess}		% for essential
\DeclareMathOperator{\grad}{\nabla}		% for gradient
\DeclareMathOperator{\one}{\mathds{1}}		% for indicator
\DeclareMathOperator{\relint}{ri}		% for relative interior
\newcommand{\cf}{cf.\xspace}		% for consistency
\newcommand{\eg}{e.g.,\xspace}		% for consistency
\newcommand{\ie}{i.e.,\xspace}		% for consistency
\newcommand{\wrt}{w.r.t\xspace}		% for consistency
\newcommand{\textpar}[1]{\textup(#1\textup)}		% for upshape parentheses
\newcommand{\txs}{\textstyle}		% for forcing inline style
\newcommand{\alt}[1]{#1'}		% for alternate version
\newcommand{\altalt}[1]{#1''}		% for second alternate
\newmacro{\dd}{\:d}		% for integration
\newcommand{\pd}{\partial}		% for derivatives
\newcommand{\insum}{\sum\nolimits}		% for compact sums
\newmacro{\const}{{A}}		% for generic constant
\newmacro{\constalt}{{B}}		% for generic constant
\newmacro{\constdual}{\const_{\texttt{dual}}}		% for generic constant
\newmacro{\conststoch}{\const_{\texttt{stoch}}}		% for generic constant
\newmacro{\multi}{\texttt{H}}		% for generic constant
\newmacro{\coef}{\lambda}		% for generic coefficient
\newmacro{\param}{\theta}		% for parameter
\newmacro{\params}{\Theta}		% for set of parameters
\newmacro{\pexp}{r}		% for first exponent
\newmacro{\qexp}{q}		% for second exponent
\newmacro{\rexp}{r}		% for third exponent
\newmacro{\beforestart}{0}		% for before start index
\newmacro{\start}{1}		% for start index
\newmacro{\afterstart}{2}		% for second index
\newmacro{\running}{\start,\afterstart,\dotsc\,}		% for running
\newmacro{\run}{t}		% for main sequence index
\newmacro{\runprev}{\run-1}		% for main sequence index
\newmacro{\runalt}{s}		% for alternate index
\newmacro{\runaltalt}{\alt \run}		% for second alternate
\newmacro{\nRuns}{T}		% for total number of runs
\newmacro{\runs}{\mathcal{\nRuns}}		% for set of indices
\newmacro{\state}{f}		% for main state
\newmacro{\statealt}{\alt\state}		% for alternate state
\newmacro{\disstate}{x}		% for main state
\newcommand{\beforeinit}[1][\state]{\debug{#1}^{\beforestart}}		% for zeroth iterate (X by default)
\newcommand{\init}[1][\state]{\debug{#1}^{\start}}		% for initial iterate (X by default)
\newcommand{\afterinit}[1][\state]{\debug{#1}^{\afterstart}}		% for second iterate (X by default)
\newcommand{\iter}[1][\state]{\debug{#1}^{\runalt}}		% for running iterate (X by default)
\newcommand{\prev}[1][\state]{\debug{#1}^{\run-1}}		% for previous iterate (X by default)
\newcommand{\curr}[1][\state]{\debug{#1}^{\run}}		% for current iterate (X by default)
\renewcommand{\next}[1][\state]{\debug{#1}^{\run+1}}		% for next iterate (X by default)
\newcommand{\beforelast}[1][\state]{\debug{#1}^{\nRuns-1}}		% for before last iterate (X by default)
\newcommand{\last}[1][\state]{\debug{#1}^{\nRuns}}		% for last iterate (X by default)
\newop{\Eq}{Eq}		% for Nash equilibria
\newop{\Nash}{NE}		% for Nash equilibria
\newop{\gap}{Gap}
\newop{\brep}{br}		% for best responses
\newop{\reg}{\overline{\regstoch}}		% for regret
\newop{\regstoch}{\mathcal{R}}		% for regret
\newop{\preg}{Reg}		% for pseudo-regret
\newop{\val}{val}		% for value function
\newmacro{\play}{i}		% for player index
\newmacro{\playalt}{j}		% for alternate player index
\newmacro{\playaltalt}{k}		% for second alternate
\newmacro{\nPlayers}{N}		% for number of players
\newmacro{\players}{\mathcal{\nPlayers}}		% for set of players
\newmacro{\pure}{\alpha}		% for pure strategy
\newmacro{\purealt}{\beta}		% for alternate pure strategy
\newmacro{\purealtalt}{\gamma}		% for second alternate
\newmacro{\nPures}{A}		% for number of pure strategies
\newmacro{\pures}{\mathcal{\nPures}}		% for set of pure strategies
\newmacro{\strat}{x}		% for mixed strategy
\newmacro{\stratalt}{\alt\strat}		% for alternate strategy
\newmacro{\strataltalt}{\altalt\strat}		% for second alternate
\newmacro{\strats}{\mathcal{X}}		% for set of mixed strategies
\newmacro{\intstrats}{\strats^{\circle}}		% for set of interior strategies
\newmacro{\pay}{u}		% for payoff function
\newmacro{\payv}{v}		% for payoff vector
\newmacro{\pot}{\Phi}		% for potential function
\newmacro{\meanpot}{\pot}		% for potential function
\newmacro{\game}{\Gamma}		% for game
\newmacro{\meangame}{\game}		% for game
\newmacro{\gameall}{\game(\players,\points,\loss)}		% for game with all elements
\newmacro{\fingame}{\Gamma}		% for finite game
\newmacro{\fingameall}{\Gamma(\players,\pures,\pay)}		% for finite game with all elements
\newmacro{\gmat}{g}		% for metric tensor
\newmacro{\gdist}{\dist_{\gmat}}
\newmacro{\mfld}{M}		% for manifold
\newmacro{\form}{\omega}		% for generic form
\newmacro{\tvec}{z}		% for tangent vector
\newmacro{\uvec}{u}		% for unit vector
\newmacro{\ball}{\mathbb{B}}		% for ball
\newmacro{\sphere}{\mathbb{S}}		% for sphere
\newmacro{\vertex}{v}		% for vertex
\newmacro{\vertexbase}{v^*}		% for vertex
\newmacro{\vertexalt}{u}		% for alternate vertex
\newmacro{\vertexaltalt}{\alt\vertexalt}		% for second alternate
\newmacro{\nVertices}{V}		% for number of vertices
\newmacro{\vertices}{\mathcal{\nVertices}}		% for set of vertices
\newcommand{\head}[1]{\vertex^{-}_{#1} }%Starting vertex of an edge
\newcommand{\tail}[1]{\vertex^{+}_{#1}}%Ending vertex of an edge
\newcommand{\In}[2]{{\text{\ttfamily In}^{#2}_{#1}}} 
\newcommand{\Out}[2]{{\text{\ttfamily Out}^{#2}_{#1}} }
\newcommand{\child}[2]{ \text{\ttfamily Child} ^{#2}_{#1}}
\newcommand{\parent}[2]{\text{\ttfamily Parent} ^{#2}_{#1}}
\newmacro{\vchild}{\alt \vertex}
\newmacro{\vparent}{u}
\newmacro{\edge}{e}		% for edge
\newmacro{\edgebase}{e^*}		% for edge
\newmacro{\edgealt}{\alt\edge}		% for alternate edge
\newmacro{\edgealtalt}{\altalt\edge}		% for second alternate
\newmacro{\nEdges}{E}		% for number of edges
\newmacro{\edges}{\mathcal{\nEdges}}		% for set of edges
\newmacro{\graph}{\mathcal{\graphsize}}		% for graph
\newmacro{\graphsize}{G}		% for graph
\newmacro{\graphall}{\graph(\vertices,\edges)}		% for graph with all elements
\newmacro{\length}{K}   %Length of the longest paths
\newmacro{\vecspace}{\mathcal{V}}		% for generic vector space
\newmacro{\subspace}{\mathcal{W}}		% for vector subspace
\newmacro{\bvec}{e}		% for basis vector
\newmacro{\bvecs}{\mathcal{E}}		% for basis vectors
\newmacro{\coord}{k}		% for coordinate index
\newmacro{\coordalt}{\coord^{\prime}}		% for alternate coordinate
\newmacro{\coordaltalt}{\coordalt^{\prime}}		% for second alternate
\newmacro{\nCoords}{d}		% for number of coordinates
\newmacro{\dims}{\nCoords}		% for dimension
\newmacro{\vdim}{\nCoords}		% for dimension (legacy alias)
\newmacro{\pspace}{\mathcal{X}}		% for primal space
\newmacro{\dspace}{\mathcal{Y}}		% for dual space
\newmacro{\ppoint}{\pstate}		% for generic primal point
\newmacro{\ppointalt}{\pstatealt}		% for alternate primal point
\newmacro{\pstate}{z}		% for generic primal point
\newmacro{\pstatealt}{\bar\pstate}		% for alternate primal point
\newmacro{\ppointaltalt}{\altalt\ppoint}		% for second alternate
\newmacro{\pvec}{u}		% for primal displacement vector
\newmacro{\pstateavg}{\bar\state}		% for alternate primal point
\newmacro{\signalavg}{\bar\signal}		% for alternate primal point
\newmacro{\dstate}{w}		% for dual state
\newmacro{\dvec}{{v}}		% for dual displacement vector
\newmacro{\disdstate}{\hat{\dstate}}		% for dual state
\newmacro{\Anc}{\textsf{r}}
\newmacro{\anchor}{r}
\newmacro{\Anchor}{\Anc^{\pair}_{\vertex}}
\newmacro{\tildeAnchor}{\tilde{\Anc}^{\pair}_{\vertex}}
\newmacro{\ptest}{\tilde{\pstate}}		% for primal point in test phase
\newmacro{\test}{\tilde{\state}}		% queried point in test phase
\newmacro{\disqtest}{\tilde{\disflow}}		% queried point in test phase
\newmacro{\distest}{\tilde{\disflow}}		% queried point in test phase
\newmacro{\dtest}{\tilde{\drecom}}		% for dual state
\newmacro{\testsignal}{\tilde{\signal}}		% for primal point in test phase
\newmacro{\precom}{\pstate}	
\newmacro{\drecom}{\dstate}		% for dual state
\newmacro{\dispstate}{Z}		% for generic primal point
\newmacro{\dispstatealt}{\bar\dispstate}		% for alternate primal point
\newmacro{\dpoint}{w}		% for generic dual point
\newmacro{\dpointave}{\bar{\dpoint}}		% for generic dual point
\newmacro{\dpointalt}{{W}}		% for alternate dual point
\newmacro{\dpointaltalt}{\altalt\dpoint}		% for second alternate
\newmacro{\dpoints}{\mathcal{Y}}		% for set of dual points
\newmacro{\mat}{M}		% for generic matrix
\newmacro{\hmat}{H}		% for Hessian matrix
\newmacro{\ones}{\mathbf{1}}		% for matrix of ones
\newmacro{\eye}{I}		% for identity matrix
\newmacro{\zer}{0}		% for zero matrix
\DeclarePairedDelimiter{\norm}{\lVert}{\rVert}		% for norm
\DeclarePairedDelimiterXPP{\dnormdef}[1]{}{\lVert}{\rVert}{_{\ast}}{#1}
\DeclarePairedDelimiterXPP{\dnorm}[1]{}{\lVert}{\rVert}{_{\infty}}{#1}
\DeclarePairedDelimiterXPP{\onenorm}[1]{}{\lVert}{\rVert}{_{1}}{#1}		% for L1 norm
\DeclarePairedDelimiterXPP{\twonorm}[1]{}{\lVert}{\rVert}{_{2}}{#1}		% for L2 norm
\DeclarePairedDelimiterXPP{\supnorm}[1]{}{\lVert}{\rVert}{_{\infty}}{#1}		% for sup norm
\DeclarePairedDelimiterX{\braket}[2]{\langle}{\rangle}{#1\mathopen{},\mathopen{}#2}
\DeclarePairedDelimiterX{\inner}[2]{\langle}{\rangle}{#1,#2}		% for scalar product
\newmacro{\cartprod}{\bigtimes}
\newcommand{\defeq}{\coloneqq}		% for direct definition
\newcommand{\from}{\colon}		% for function definition
\newop{\Opt}{Opt}		% for value of problem
\newop{\Sol}{Sol}		% for solution of problem
\newop{\orcl}{Or}		% for oracle
\newmacro{\obj}{g}		% for objective function
\newmacro{\objalt}{\alt \obj}		% for alternate objective (smooth etc.)
\newmacro{\sobj}{F}		% for stochastic objective
\newmacro{\func}{\textsl{g}}
\newmacro{\gvec}{g}		% for gradient vector
\newmacro{\oper}{A}		% for operator
\newmacro{\vecfield}{v}		% for vector field (selection etc.)
\newcommand{\sol}[1][\flow]{#1^{\ast}}		% for solution point (x by default)
\newcommand{\sols}{\sol[\flows]}		% for set of solutions
\newmacro{\vbound}{M}		% for vector bound
\newmacro{\lips}{L}
\newmacro{\lipsEW}{\lips_{\textrm{EW}}}		% for Lipschitz modulus
\newmacro{\strong}{\kappa}		% for strong convexity modulus
\newmacro{\smooth}{\beta}		% for strong smoothness modulus
\newmacro{\lagran}{\hat{\lambda}}
\newmacro{\cvx}{\mathcal{K}}		% for generic convex set
\newmacro{\subd}{\partial}		% for subdifferential
\newmacro{\minmax}{L}		% for minmax objective
\newmacro{\minvar}{\theta}		% for minimization variable
\newmacro{\minvaralt}{\alt\minvar}		% for alternate minvar
\newmacro{\minvars}{\Theta}		% for set of minvars
\newmacro{\maxvar}{\phi}		% for maximization variable
\newmacro{\maxvaralt}{\alt\maxvar}		% for alternate maxvar
\newmacro{\maxvars}{\Phi}		% for set of maxvars
\newmacro{\hreg}{h}		% for regularizer
\newmacro{\mprox}{P}		% for mirror prox-mapping
\newmacro{\fench}{F}		% for Fenchel coupling
\newmacro{\hstr}{K}		% for strong convexity constant
\newmacro{\depth}{H}		% for regularizer depth
\newmacro{\proxdom}{\points^{\hreg}}		% for prox-domain
\DeclarePairedDelimiterXPP{\proxof}[2]{\mprox_{#1}}{(}{)}{}{#2}		% for Bregman prox step
\newmacro{\zone}{\mathbb{D}}		% for Bregman zone
\newop{\Eucl}{\Pi}		% for Euclidean projection
\newop{\logit}{\Lambda}		% for logit map
\DeclarePairedDelimiterXPP{\dkl}[2]{\mathrm{KL}}{(}{)}{}{#1 \nonscript\,\delimsize\|\nonscript\,\mathopen{} #2}
\newmacro{\subgrad}{W}
\newmacro{\point}{z}		% for generic point
\newmacro{\pointalt}{\alt\point}		% for alternate point
\newmacro{\pointaltalt}{\altalt\point}		% for second alternate
\newmacro{\points}{\mathcal{K}}		% for set of points
\newmacro{\intpoints}{\relint\points}		%for point set interior
\newmacro{\base}{\point^{\ast}}		% for reference point
\newmacro{\basealt}{u^{\ast}}		% for alternate reference point
\newmacro{\real}{x}
\newmacro{\realalt}{\alt\real}
\newmacro{\realaltalt}{\altalt \real}
\newmacro{\open}{\mathcal{U}}		% for open sets
\newmacro{\closed}{\mathcal{C}}		% for closed sets
\newmacro{\cpt}{\mathcal{K}}		% for compact sets
\newmacro{\nhd}{\mathcal{U}}		% for neighborhoods
\newop{\ex}{\mathbb{E}}		% for expectations
\newop{\prob}{\mathbb{P}}		% for probability
\newop{\Var}{Var}		% for variance
\newop{\simplex}{\Delta}		% for simplices
\DeclarePairedDelimiterXPP{\exof}[1]{\ex}{[}{]}{}{%		% for conditional expectations
	 #1}
\DeclarePairedDelimiterXPP{\probof}[1]{\prob}{(}{)}{}{%		% for conditional probabilities
	 #1}
\newcommand{\oneof}[1]{\one_{\braces*{#1}}}		% for almost surely
\newmacro{\sample}{\omega}		% for sample
\newmacro{\samples}{\Omega}		% for set of samples
\newmacro{\filter}{\mathcal{F}}		% for filtration
\newmacro{\probspace}{(\samples,\filter,\prob)}		% for probability space
\newmacro{\event}{E}       % for event
\newmacro{\eventalt}{H}       % for alternate event
\newmacro{\mean}{\mu}		% for mean of distribution
\newmacro{\sdev}{\sigma}		% for mean of distribution
\newmacro{\variance}{\sdev^{2}}		% for mean of distribution
\newmacro{\step}{\alpha}		% for step-size
\newmacro{\stepalt}{\gamma}		% for step-size
\newmacro{\stepaltalt}{\theta}		% for step-size
\newmacro{\stepada}{1}		% for step-size
\newmacro{\learn}{\eta}		% for learning rate
\newmacro{\dstep}{\psi}		% for step-size
\newmacro{\proper}{\tau}		% for proper time
\newmacro{\signal}{C}		% for signal
\newmacro{\altsignal}{\bar\signal}		% for signal
\newmacro{\error}{Z}		% for error
\newmacro{\bias}{b}		% for bias
\newmacro{\brown}{W}		% for Wiener process
\newmacro{\serror}{\theta}		% for scalar error
\newmacro{\snoise}{\xi}		% for scalar noise
\newmacro{\sbias}{\psi}		% for scalar bias
\newmacro{\sbound}{M}		% for signal bound
\newmacro{\bbound}{B}		% for bias bound
\newmacro{\noisepar}{\sdev}		% for noise parameter
\newmacro{\noisevar}{\texttt{var}}		% for noise variance
\newcommand{\accelegrad}{\debug{\textsc{AcceleGrad}}\xspace}
\newcommand{\unixgrad}{\debug{\textsc{UnixGrad}}\xspace}
\newcommand{\expweight}{\debug{\textsc{Exp\-Weight}}\xspace}
\newcommand{\acceleweight}{\debug{\textsc{Acce\-le\-Weight}}\xspace}
\newcommand{\adaweight}{\debug{\textsc{Ada\-Weight}}\xspace}
\newcommand{\adapush}{\debug{\textsc{Ada\-Light}}\xspace}
\newcommand{\ppm}{\debug{\textsc{Push\-Pull\-Match}}\xspace}
\newcommand{\scorepush}{\debug{\textsc{Push\-Pull\-Match}}\xspace}
\newmacro{\cost}{c}
\newmacro{\meancost}{\bar\cost}
\newcommand{\source}[1]{O^{#1}}		% for origin
\newcommand{\sink}[1]{D^{#1}}		% for destination
\newmacro{\pair}{i}		% for pair index
\newmacro{\pairalt}{j}		% for alternate pair
\newmacro{\nPairs}{N}		% for number of pairs
\newmacro{\pairs}{\mathcal{\nPairs}}		% for set of pairs
\newmacro{\route}{p}		% for path
\newmacro{\routealt}{q}		% for alternate path
\newmacro{\nRoutes}{P}		% for number of paths
\newmacro{\routes}{\mathcal{\nRoutes}}		% for set of paths
\newmacro{\flow}{f}		% for flow profile
\newmacro{\aveplay}{\bar{\state}}		% for flow profile
\newmacro{\flowbase}{\flow^{\ast}}
\newmacro{\flowalt}{\alt\flow}		% for alternate flow
\newmacro{\flowaltalt}{\altalt\flow}		% for second alternate
\newmacro{\flows}{\mathcal{F}}		% for set of flows
\newmacro{\flowave}{\bar{\pstate}}		% for flow average
\newmacro{\flowavehalf}{\state}		% for flow average
\newmacro{\mirrorvec}{y}		% for flow average
\newmacro{\load}{\ell}		% for load profile
\newmacro{\loadalt}{\alt\load}		% for alternate load
\newmacro{\loadaltalt}{\altalt\load}		% for second alternate
\newmacro{\loads}{\mathcal{L}}		% for set of loads
\newmacro{\obs}{\signal}	% For observed costs of x^t
\newmacro{\obsave}{\testsignal} 		% For observed costs of \bar{z}^t
\newmacro{\ar}{\mu}		% for arriving mass
\newmacro{\late}{c}
\newmacro{\Late}{\barlate}
\newmacro{\latevec}{\vv{\late}}
\newmacro{\dislate}{\hat{\late}}
\newmacro{\barlate}{\bar{\late}}     %for average latency
\newmacro{\derlate}{\late^{\prime}}
\newmacro{\latency}{\late_{\time, \edge}}		% for latency
\newmacro{\latencyTimeAlt}{\late_{\timealt, \edge}}	 %latency alternate time
\newmacro{\latencyEdgeAlt}{\late_{\time, \alt\edge}}	%latency alternate edge
\newmacro{\latencyTimeEdgeAlt}{\late_{\timealt, \alt\edge}}	%latency alternate both time and edge
\newmacro{\latencyTimeAltalt}{\late_{\altalt\time, \edge}}	 %latency alternate time
\newmacro{\latencyEdgeAltalt}{\late_{\time, \altalt\edge}}	%latency alternate edge
\newmacro{\bderiv}{L}  %The bound of derivative of cost
\newmacro{\ub}{H}  %The bound of derivative of cost
\newmacro{\bound}{\ub}  %The bound of derivative of cost
\newmacro{\noiseB}{\theta}  %The bound of derivative of cost
\newmacro{\m}{M}
\newcommand{\mass}[1]{\m^{#1}}
\newmacro{\massmax}{\m_{\max}} %Max rate
\newmacro{\massmin}{\m_{\min}} 
\newmacro{\masssum}{\m_{\mathrm{tot}}}
\newmacro{\noise}{U}  %The bound of derivative of cost
\newmacro{\noiseave}{\tilde{\noise}}  %The bound of derivative of cost
\newmacro{\noisetest}{\tilde{\noise}}  %The bound of derivative of cost
\newmacro{\mindiff}{\kappa}
\newmacro{\diff}{\xi}
\newmacro{\rat}{\mass}
\newmacro{\rate}{\rat}
\newmacro{\rateTimeAlt}{\rat_{\timealt, \pair}}	 %rate alternate time
\newmacro{\ratePairAlt}{\rat_{\time,  \pairalt}}	%rate alternate pair
\newmacro{\rateTimePairAlt}{\rat_{\timealt, \pairalt}}	 %rate alternate both time and pair
\newmacro{\rateTimeAltalt}{\rat_{\altalt\time, \pair}}	 %rate alternate time
\newmacro{\ratePairAltalt}{\rat_{\time,  \pairaltalt}}	%rate alternate pair
\newmacro{\ratmax}{\rat_{\max}} %Max rate
\newmacro{\SP}{\scorepush}
\newmacro{\SPA}{\textrm{SPAvg}}
\newmacro{\Dpoint}{Y}		% for graph alternative
\newmacro{\graphalt}{\tilde{\graph}}		% for graph alternative
\newmacro{\edgesalt}{\tilde{\edges}}
\newmacro{\nEdgesalt}{\tilde{\nEdges}}
\newmacro{\verticesalt}{\tilde{\vertices}}
\newmacro{\nVerticesalt}{\tilde{\nVertices}}
\newmacro{\routesalt}{\tilde{\routes}}
\newmacro{\nRoutesalt}{\tilde{\nRoutes}}
\newmacro{\score}{S}
\newmacro{\Weight}{\textsf{w}}
\newmacro{\Weightanchor}{W^{\textrm{anchor}}}
\newcommand{\scorebw}[3]{\texttt{BWscore} ^{#2}_{#1} }
\newcommand{\scorefw}[3]{\texttt{FWscore}^{#2}_{#1} }
\newcommand{\logscore}[1]{\texttt{logscore}_{#1} }
\newcommand{\edgescore}[1]{\texttt{edgescore}_{#1} }
\newmacro{\loadavg}{\bar{\load}}
\newmacro{\grade}{\mathcal{S}}
\newmacro{\gradeave}{\bar{\grade}}
\newmacro{\gradealt}{\grade^{\prime}}
\newmacro{\dpointaltada}{\mathcal{Z}}
\newmacro{\disflow}{x}		% for flow profile
\newmacro{\disflowbase}{\widehat{\disflow}}		% for flow profile
\newmacro{\disflowalt}{\alt\disflow}		% for alternate flow
\newmacro{\disflowaltalt}{\altalt\disflow}		% for second alternate
\newmacro{\disflows}{{\mathcal{X}}}
\newmacro{\dppointave}{\prescript{\flowave}{}{\disflow}}		% for flow profile
\newmacro{\disload}{{\load}}		% for load profile
\newmacro{\disloadalt}{\alt\disload}		% for load profile
\newmacro{\drecomdis}{\Weight}		
\newmacro{\dtestdis}{\tilde{\drecomdis}}		
\newmacro{\dpointA}{Y}
\newmacro{\dpointAalt}{Z}
\newmacro{\weightDSP}{\bar{w}}
\colorlet{DQVcolor}{black}
\newcommand{\DQV}{\DQVmargincomment}
\colorlet{KAcolor}{black}
\colorlet{PMcolor}{black}
\newcommand{\PM}{\PMmargincomment}
\def\beginPM{\color{PMcolor}}		% for revision markup
\begin{document}

%**********************************************************************
%***    FRONTMATTER AND METADATA
%**********************************************************************

%----------------------------------------------------------------------
%%% TITLE AND AUTHORS
%----------------------------------------------------------------------
\title
[Routing in an Uncertain World]
{Routing in an Uncertain World:\\
Adaptivity, Efficiency, and Equilibrium}

\author
[D. Q. Vu]
{Dong Quan Vu$^{\ast}$}
\address{$^{\ast}$\,%
Univ. Grenoble Alpes, CNRS, Inria, LIG, 38000, Grenoble, France.}
\EMAIL{dong-quan.vu@inria.fr}

%-------------------------------------------------------------------
\author
[K. Antonakopoulos]
{Kimon Antonakopoulos$^{\ast}$}
\EMAIL{kimon.antonakopoulos@inria.fr}

%-------------------------------------------------------------------
\author
[P.~Mertikopoulos]
{Panayotis Mertikopoulos$^{\ast,\sharp}$}
\address{$^{\sharp}$\,%
Criteo AI Lab.}
\EMAIL{panayotis.mertikopoulos@imag.fr}

%----------------------------------------------------------------------
%%% KEYWORDS
%----------------------------------------------------------------------
\subjclass[2010]{%
Primary 91A10, 91A26;
secondary 68Q32, 68T02.}

\keywords{%
Nonatomic congestion games;
traffic equilibrium;
adaptive methods;
exponential weights;
uncertainty}

%----------------------------------------------------------------------
%%% THANKS
%----------------------------------------------------------------------
\thanks{%----------------------------------------------------------------------
%%% THANKS
%----------------------------------------------------------------------
% !TEX root = ./Main.tex
%
%
The authors are grateful for financial support by the French National Research Agency (ANR) in the framework of
the ``Investissements d'avenir'' program (ANR-15-IDEX-02),
the LabEx PERSYVAL (ANR-11-LABX-0025-01),
MIAI@Grenoble Alpes (ANR-19-P3IA-0003),
and the grant ALIAS (ANR-19-CE48-0018-01).
The authors' research was also supported by the COST Action CA16228 ``European Network for Game Theory'' (GAMENET)\ackperiod}

%----------------------------------------------------------------------
%%% ACRONYMS
%----------------------------------------------------------------------
%\newacro{cen}{centralized perspective}
\newacro{dis}{local flow}
%\newacro{spush}{scores pushing}
%\newacro{dag}{directed acyclic graph}
%\newacro{cenprob}{centralized latency-minimization problem}
%\newacro{disprob}{distributed latency-minimization problem}
%\newacro{off}{offline setting}
%\newacro{on}{online setting}

%----------------------------------------------------------------------
\newacro{EW}[\expweight]{exponential weights}
\newacro{AEW}[\adaweight]{adaptive exponential weights}
\newacro{ALW}[\adapush]{adaptive local weights}
\newacro{XLEW}[\acceleweight]{accelerated exponential weights}
\newacro{ACSA}[AC-SA]{accelerated stochastic approximation}
\newacro{UPGD}{universal primal gradient descent}
\newacro{DE}{dual extrapolation}

\newacro{BMW}{Beckmann\textendash McGuire\textendash Winsten}

\newacro{LHS}{left-hand side}
\newacro{RHS}{right-hand side}
\newacro{iid}[i.i.d.]{independent and identically distributed}
\newacro{lsc}[l.s.c.]{lower semi-continuous}
\newacro{NE}{Nash equilibrium}
\newacroplural{NE}[NE]{Nash equilibria}
\newacro{WE}{Wardrop equilibrium}
\newacroplural{WE}[WE]{Wardrop equilibria}
\newacro{OD}[O/D]{origin-destination}
\newacro{DAG}{directed acyclic graph}

%----------------------------------------------------------------------
%%% ABSTRACT
%----------------------------------------------------------------------
\begin{abstract}
%----------------------------------------------------------------------
%%% ABSTRACT
%----------------------------------------------------------------------
% !TEX root = ./Main.tex

\beginPM
We consider the traffic assignment problem in nonatomic routing games where the players' cost functions may be subject to random fluctuations (\eg weather disturbances, perturbations in the underlying network, etc.).
We tackle this problem from the viewpoint of a control interface that makes routing recommendations based solely on observed costs and without any further knowledge of the system's governing dynamics \textendash\ such as the network's cost functions, the distribution of any random events affecting the network, etc.
In this online setting, learning methods based on the popular \acl{EW} algorithm converge to equilibrium at an $\bigof{1/\sqrt{\nRuns}}$ rate:
this rate is known to be order-optimal in stochastic networks, but it is otherwise suboptimal in static networks.
%in terms of the time-average of the sequence of play.
%which is known to be order-optimal in stochastic networks.
In the latter case, it is possible to achieve an $\bigof{1/\nRuns^{2}}$ equilibrium convergence rate via the use of finely tuned accelerated algorithms;
on the other hand, these accelerated algorithms fail to converge altogether in the presence of persistent randomness, so it is not clear how to achieve the ``best of both worlds'' in terms of convergence speed.
Our paper seeks to fill this gap by proposing an adaptive routing algortihm with the following desirable properties:
\begin{enumerate*}
[(\itshape i\hspace*{.5pt}\upshape)]
\item
it seamlessly interpolates between the $\bigoh(1/\nRuns^{2})$ and $\bigoh(1/\sqrt{\nRuns})$ rates for static and stochastic environments respectively;
\item
its convergence speed is polylogarithmic in the number of paths in the network;
\item
the method's per-iteration complexity and memory requirements are both linear in the number of nodes and edges in the network;
and
\item
it does not require any prior knowledge of the problem's parameters.
\end{enumerate*}
%Taken together, these properties indicate that the method's convergence guarantees scale gracefully to large, real-life urban networks with thousands of nodes and \ac{OD} pairs.
%a fact which we validate numerically with traffic data from Berlin, Eastern Massachusetts, and other metropolitan areas.
\endedit

\end{abstract}
\maketitle
\acresetall		% for resetting acros

%**********************************************************************
%***    MAIN BODY
%**********************************************************************
\maketitle
\thispagestyle{empty}

\acresetall		% for resetting acros
\allowdisplaybreaks		% for breaking long displays

%----------------------------------------------------------------------
%%% INTRODUCTION
%----------------------------------------------------------------------
\section{Introduction}
\label{sec:intro}
%----------------------------------------------------------------------
%%% INTRODUCTION
%----------------------------------------------------------------------
% !TEX root = ./Main.tex

\beginPM
Transportation networks in major metropolitan areas carry several million car trips and commutes per day, giving rise to a chaotic and highly volatile environment for the average commuter.
As a result, navigation apps like Google Maps, Waze and MapQuest have seen an explosive growth in their user base, routinely receiving upwards of $10^{4}$ routing requests per second during rush hour \textendash\ and going up to $10^{5}$ queries/second in the largest cities in the US and China \citep{cabannes2019regrets}.
\endedit
This vast number of users must be routed efficiently, in real-time, and without causing any ``ex-post'' regret at the user end;
otherwise, if a user could have experienced better travel times along a non-recommended route, they would have no incentive to follow the app recommendation in the first place.
In the language of congestion games \citep{NRTV07}, this requirement is known as a ``Wardrop equilibrium'', and it is typically represented as a high-dimensional vector describing the traffic flow along each path in the network \citep{War52}.

\beginPM
Ideally, this traffic equilibrium should be computed \emph{before} making a routing recommendation
in order to minimize the number of disgruntled users in the system.
%so that traffic requests can be routed in a way that does not lead to disgruntled users.
In practice however, this is rarely possible:
the state of the network typically depends on random, unpredictable factors that may vary considerably from one epoch to the next \textendash\ \eg due to road incidents, rain, fog and/or other weather conditions \textendash\ 
%fluctuations in the total number of commuters in the system, etc. \textendash\
so it is generally unrealistic to expect that such a recommendation can be made in advance.
Instead,
%routing recommendations are updated from one epoch to the next  ``on the fly'' once the state of the network has been observed.
%In more detail, motivated by applications to GPS routing and navigation apps,
it is more apt to consider an online recommendation paradigm that unfolds as follows:
\begin{enumerate}
[1.]
\item
At each time slot $\run=\running,\nRuns$, a control interface \textendash\ such as Google Maps \textendash\ determines a traffic assignment profile and routes all demands received within this slot according to this profile.
\item
The interface observes and records the travel times of the network's users within the time slot in question;
based on this feedback, it updates its candidate profile for the next epoch and the process repeats.
\end{enumerate}

Of course, the crux of this paradigm is the optimization algorithm used to update traffic flow profiles from one epoch to the next.
Any such algorithm would have to satisfy the following \emph{sine qua non} requirements:
\begin{enumerate}

\item
\emph{Universal convergence rate guarantees in terms of the number of epochs:}
The distance from equilibrium of the candidate traffic assignment after $\nRuns$ epochs should be as small as possible in terms of $\nRuns$.
However, learning methods that are well-suited for rapidly fluctuating environments may be too slow in static environments (\ie when cost functions do not vary over time);
conversely, methods that are optimized for static environments may fail to converge in the presence of randomness.
As a result, it is crucial to employ \emph{universal} algorithms that achieve the ``best of both worlds'' in terms of convergence speed.

\item
\emph{Fast convergence in terms of the size of the network:}
The algorithm's convergence speed must be at most polynomial in the size $\graphsize$ of the underlying graph (\ie its number of edges $\nEdges$ plus the number of vertices $\nVertices$).
In turn, since the number of paths $\nRoutes$ in the network is typically exponential in $\graphsize$, the algorithm's convergence speed must be at most polylogarithmic in $\nRoutes$.

\item
\emph{Scalable per-iteration complexity:}
An algorithm can be implemented efficiently only if the number of arithmetic operations and total memory required at each iteration remains scalable as the network grows in size.
In practice, this means that that the algorithm's per-itration complexity must not exceed $\bigof{\graphsize}$.

\item
\emph{Parameter-freeness:}
In most practical situations, the parameters of the model \textendash\ \eg the distribution of random events or the smoothness modulus of the network's cost functions \textendash\ cannot be assumed known, so any routing recommendation algorithm must likewise not require such parameters as input.
\end{enumerate}
\indent
To put these desiderata in context, we begin below by reviewing the most relevant works in this direction.
\endedit

%----------------------------------------------------------------------
%%% Background
%----------------------------------------------------------------------
\subsection*{Related work\afterhead}

The \emph{static} regime of our model matches the framework of \citet{BEL06} who showed that a variant of the \acdef{EW} algorithm \citep{Vov90,LW94,ACBFS95,BecTeb03} converges to equilibrium at an $\bigof{\log\nRoutes \big/\sqrt{\nRuns}}$ rate (in the sense of time averages).
This result was subsequently extended to routing games with stochastic cost functions by \citet{KBB15,KKDB15}, who showed that \ac{EW} also enjoys an $\bigof{\log\nRoutes \big/\sqrt{\nRuns}}$ convergence rate to \emph{mean} \aclp{WE} (again, in the Cesàro sense).
%As we discuss in the sequel, the convergence speed of the \ac{EW} algorithm of \citet{BEL06} and \citet{KKDB15} is $\bigof{\log\nRoutes \big/\sqrt{\nRuns}}$ in both cases;
However, if the learning rate of the \ac{EW} algorithm is not chosen appropriately in terms of $\nRuns$, the method may lead to non-convergent, chaotic behavior, even in symmetric congestion games over a $2$-link Pigou network \citep{PPP17}.

\beginPM
From an optimization viewpoint, nonatomic congestion games with stochastic cost functions correspond to convex minimization problems with stochastic first-order oracle feedback.
In this setting, the $\bigof{1/\sqrt{\nRuns}}$ rate is, in general, unimprovable \cite{Nes04,Bub15}, so the guarantee of \citet{KBB15} is order-optimal.
On the other hand, in the static regime, the $\bigof{1/\sqrt{\nRuns}}$ convergence speed of \citet{BEL06} is \emph{not} optimal:
since nonatomic congestion games admit a smooth convex potential \textendash\ known as the \acdef{BMW} potential \citep{BMW56} \textendash\ this rate can be improved to $\bigof{1/\nRuns^{2}}$ via the seminal ``accelerated gradient'' algorithm of \citet{Nes83}.
On the downside, if applied directly to our problem, the algorithm of \citet{Nes83} has a catastrophic $\Theta(\nRoutes)$ dependence on the number of paths;
however, by coupling it with a ``mirror descent'' template in the spirit of \cite{Nes09,Xia10}, \citet{KBB15} proposed an accelerated method with an exponential projection step that is particularly well-suited for congestion problems.
In fact, as we show in the sequel, it is possible to design an \acl{XLEW} method \textendash\ which we call \acs{XLEW} \textendash\ that achieves an $\bigof{\log\nRoutes \big/\nRuns^{2}}$ rate in static environments.
\acused{XLEW}
\endedit

Importantly, despite its optimality in the static regime, \ac{XLEW} fails to converge altogether in stochastic problems;
moreover, the method's step-size must also be tuned with prior knowledge of the problem's smoothness parameters
(which are not readily available to the optimizer).
The \acdef{UPGD} algorithm of \citet{Nes15} provides a work-around to resolve the latter issue, but it relies on a line-search mechanism that cannot be applied to stochastic problems, so it does not resolve the former.
Instead, a partial solution for the stochastic case is achieved by the \ac{ACSA} algorithm of \citet{Lan12} which achieves order-optimal rates in both the static and stochastic regimes;
however:
\begin{enumerate*}
[\itshape a\upshape)]
\item
the running time $\nRuns$ of the \ac{ACSA} algorithm must be fixed in advance as a function of the accuracy threshold required;
and
\item
\ac{ACSA} further assumes full knowledge of the smoothness modulus of the game's cost functions (which cannot be computed ahead of time).
\end{enumerate*}
%Moreover, the iterates of \ac{ACSA} in the stochastic case are only approximately feasible, so the algorithm's implementation requires a projection mechanism that can be very costly in large networks.

%\PMedit{To avoid parameter-tuning altogether, we consider an ``inverse-sum-of-squares'' learning rate in the spirit of \adagrad \citep{DHS11} \textendash\ though \adagrad itself lacks an acceleration mechanism, so it is suboptimal in static environments \citep{LYC18,LO19}.
%achieves an $\bigof{1/\nRuns}$ rate in static environments \citep{LYC18}.
To the best of our knowledge, the first parameter-free algorithm with optimal rate interpolation guarantees is the \accelegrad algorithm of \citet{LYC18}, which was originally developed for \emph{unconstrained} problems (and assumes knowledge of a compact set containing a solution of the problem).
The \unixgrad proposal of \citet{KLBC19} subsequently achieved the desired adaptation in constrained problems, but under the requirement of a bounded Bregman diameter.
This requirement rules out the \ac{EW} template (the simplex has infinite diameter under the entropy regularizer that generates the \ac{EW} algorithm), so the convergence speed of \unixgrad ends up being \emph{polynomial} in the number of paths in the network \textendash\ and since the latter scales exponentially with the size of the network, \unixgrad is unsuitable for networks with more than $30$ or so nodes.
%For convenience, we compare all these works in \cref{tab:related} above.

\beginPM
%Previous works focus mainly on examining convergence properties and omit detailed analyses on the per-iteration complexities of the involved methods.
Finally, another major challenge in terms of scalability is the algorithm's \emph{per-iteration complexity}, \ie the memory and processing requirements for making a single update.
In regard to this point, the per-iteration complexity of the \ac{UPGD}, \ac{ACSA}, \accelegrad and \unixgrad algorithms is linear in the number of paths $\nRoutes$ in the network, and hence \emph{exponential} in the size of the network.
%This rules out the use of these algorithms in network routing since the dimension in consideration is $\nRoutes$ (\ie the number of paths).
The only exception to this list is the (non-accelerated) \ac{EW} algorithm, which can be implemented efficiently with $\bigof{\graphsize}$ operations per iteration via a technique known as ``weight-pushing'' \cite{TW03,GLLO07}.
%it is possible to implement the \ac{EW} algorithm with $\bigof{\graphsize}$ per-iteration complexity.
However, the weight-pushing technique is by no means universal:
it was specifically designed for the \ac{EW} algorithm, and it is not applicable to any of the universal methods discussed above (precisely because of the acceleration mechanism involved).
%In our knowledge, the only algorithm available so far that has an efficient per-iteration complexity is exponential weights (EW): in network-like environments, there exists variants of EW, built upon a technique often known as ``weight-pushing'' \citep{TW03,gyorgy2007}, that only require an $\bigof{\log \nRoutes}$ number of computations per iteration (\ie polynomial in terms of the network' size).
%Importantly, weight-pushing is by no mean a universal tool: designed specifically for the task of computing the exponential weights updates, weight-pushing is \emph{not} applicable to other~tasks.

%----------------------------------------------------------------------
%%% Contributions
%----------------------------------------------------------------------
\subsection*{Our contributions\afterhead}

In summary, all existing traffic assingment algorithms and methods fail in at least two of the axes mentioned above:
either they are slow / non-convergent outside the specific regime for which they were designed,
or they cannot be implemented efficiently (see also \cref{tab:related} for an overview).
Consequently, our paper focuses on the following question:
\medskip
\begin{center}
\itshape
Is it possible to design a \textbf{scalable, paramater-free traffic assignment algorithm}\\
%with linear per-iteration complexity\\
which is \textbf{simultaneously order-optimal} in both static and stochastic environments?
% and whose convergence rate and per-iteration complexity are polynomial in the size of the network?
\end{center}
\medskip

To provide a positive answer to this question, we take a two-step approach.
\endedit
Our first contribution is to design an \acl{AEW} algorithm \textendash\ dubbed \acs{AEW} \textendash\ which is simultaneously order-optimal, in both $\nRuns$ and $\nRoutes$, and in both static and stochastic environments.
Informally, we have:
\acused{AEW}
\setcounter{informal}{\getrefnumber{thm:AEW}-1}

\begin{informal}
The \ac{AEW} algorithm enjoys the following equilibrium convergence guarantees after $\nRuns$ epochs:
\begin{enumerate}
\item
In static networks, \ac{AEW} converges to a \acl{WE} at a rate of $\bigof[\big]{(\log\nRoutes)^{3/2} \big/ \nRuns^{2}}$.
\item
In stochastic networks, it converges to a mean \acl{WE} at a rate of $\bigof[\big]{(\log\nRoutes)^{3/2} \big/ \sqrt{\nRuns}}$.
\end{enumerate}
\end{informal}

In the above,
%even though $\nRoutes$ may grow exponentially in $\graphsize$ (the size of the underlying routing graph),
the logarithmic dependency on $\nRoutes$ ensures that the convergence speed of \ac{AEW} is polynomial \textendash\ and in fact, \emph{subquadratic} \textendash\ in $\graphsize$.
In this way, \ac{AEW} successfully solves the challenge of achieving optimal convergence rates in both static and stochastic environments, while remaining parameter-free and ``anytime'' (\ie there is no need to tune the algorithm in terms of $\nRuns$).
%while being an adaptive and parameters-agnostic algorithm.
To the best of our knowledge, this is the first method that simultaneously achieves these desiderata.

\beginPM
On the other hand, a crucial drawback of \ac{AEW} is that it operates at the path level, \ie it updates at each stage a state variable of dimension $\nRoutes$;
as a result, the algorithm's per-iteration complexity is typically \emph{exponential} in $\graphsize$.
%in terms of both memory and processing requirements.
To overcome this, we propose an improved algorithm, which we call \acs{ALW} (short for \acli{ALW} algorithm), and which emulates the \ac{AEW} template with two fundamental differences:
\acused{ALW}
%\begin{enumerate}
%\item
First, instead of maintaining a state variable \emph{per path}, \ac{ALW} maintains a state vector \emph{per node}, with dimension equal to the node's out-degree.
%\item
Second, the recommended path for any particular user of the navigation interface is constructed ``bottom-up'' via a sequence of routing probabilities updated at each node, so there is no need to ever keep in memory (or update) a path variable.
%\end{enumerate}

This process looks similar to the weight-pushing technique of \cite{TW03,GLLO07} used to scale down the per-iteration complexity of the \ac{EW} algorithm.
\endedit
%\cite{TW03} \emph{can we apply weight-pushing to improve} the per-iteration complexity of \ac{AEW}? \emph{The answer is negative}:
However, the acceleration mechanism of \ac{AEW}
%(and also \acceleweight)
involves several moving averages that \emph{cannot} be implemented via weight-pushing;
and since these averaging steps are downright essential for the universality of \ac{AEW}, they cannot be omitted from the algorithm.
Instead, our solution for executing these averaging steps is based on the following observation:
although the averaged recommendations profiles cannot be ``weight-pushed'', the induced edge-load profiles (\ie the mass of traffic induced on each edge) can.
This observation allows us to introduce a novel dynamic programming subroutine \textendash\ called \ppm \textendash\ which
%we introduce novel techniques that leverage weight-pushing to compute efficiently
computes efficiently these averaged loads and then uses them to reconstruct a set of routing recommendations that are consistent with these loads.

%Combine the template of \ac{AEW} with these techniques, we propose the \ac{ALW}~algorithm that maintains all desirable convergence properties of \ac{AEW} while having an efficient per-iteration complexity: it is only polynomial in the network's size.
\beginPM
In this way, by combining the \ac{AEW} blueprint with the \ppm subroutine, \ac{ALW} only requires $\bigof{\graphsize}$ memory and processing power per update, all the while retaining the sharp convergence properties of the \ac{AEW} mother scheme.
In more detail, we have:

\setcounter{informal}{\getrefnumber{thm:adapush}-1}
\begin{informal}
The \ac{ALW} algorithm converges to equilibrium at a rate of $\bigof[\big]{(\log\nRoutes)^{3/2} \big/\nRuns^{2}}$ in static environments and $\bigof[\big]{(\log\nRoutes)^{3/2} \big/\sqrt{\nRuns}}$ in stochastic environments;
moreover, the total amount of arithmetic operations required per epoch is $\bigof{\graphsize}$ where $\graphsize$ is the size of the network.
%\ac{ALW} maintains all the convergence properties of \ac{AEW};
%particularly, it converges to a (mean) \acl{WE} with an $\bigof{(\log\nRoutes)^{\frac{3}{2}}/\nRuns^{2}}$ rate in static cases and at an $\bigof{(\log\nRoutes)^{\frac{3}{2}}/\sqrt{\nRuns}}$~rate in stochastic cases.
%Moreover, the number of computations required at each iteration of \ac{ALW} is polynomial in terms of the underlying network's size.
\end{informal}
\endedit

%This result shows that \ac{ALW} \textendash\ while maintaining all good properties of \ac{AEW} \textendash\ is efficiently implementable even when the number of the available paths in the network is exponentially large.
This result shows that \ac{ALW} successfully meets the desiderata stated earlier,
so \ac{AEW} might appear redundant.
However, it is not possible to derive the equilibrium convergence properties of \ac{ALW} without first going through \ac{AEW},
%However, \ac{ALW} is a fairly intricate combination of various algorithmic modules and subroutines, of which \ac{AEW} (already a complicated algorithm) is an essential component.
so,
to better convey the ideas involved, we also describe the \ac{AEW} algorithm in detail (even though it is not scalable per se).

%----------------------------------------------------------------------
%% Related work table begins here

\begin{table}[t]
\renewcommand{\arraystretch}{1.3}
\footnotesize
\centering
%----------------------------------------------------------------------
%%% PARAMETERS
%----------------------------------------------------------------------
% !TEX root = ../Main.tex

%\scshape
\begin{tabular}{lcccccc}
\toprule
	&\acs{EW}
%		\citep{BEL06}
	&\acs{XLEW}
%		\citep{KBB15}
	&\unixgrad
%		\citep{KLBC19}
	&\acs{UPGD}
%		\citep{Nes15}
	&\acs{ACSA}
%		\citep{Lan12}
	&\acs{ALW}
	\\
\midrule
Static
	&${\log\nRoutes \big/\sqrt{\nRuns}}$
	&${\log\nRoutes \big/\nRuns^{2}}$
	&${\nRoutes \big/\nRuns^{2}}$
	&${\log\nRoutes \big/\nRuns^{2}}$
	&${\log\nRoutes \big/\nRuns^{2}}$
	&${(\log\nRoutes)^{3/2}/\nRuns^{2}}$
	\\
Stochastic
	&${\log\nRoutes \big/\sqrt{\nRuns}}$
	&\xmark
	&${\nRoutes \big/\sqrt{\nRuns}}$
	&\xmark
	&${\log\nRoutes \big/\sqrt{\nRuns}}$
	&${(\log\nRoutes)^{3/2}/\sqrt{\nRuns}}$
	\\
Anytime
	&partially
	&\cmark
	&\cmark
	&\xmark
	&\xmark
	&\cmark
	\\
Param.-Agn.
	&\xmark
	&\xmark
	&\cmark
	&\cmark
	&\xmark
	&\cmark
	\\
Compl./Iter.
	&${\graphsize}$
	&${\nRoutes}$
	&${\nRoutes}$
	&${\nRoutes}$
	&${\nRoutes}$
	&${\graphsize}$
	\\
\bottomrule
\end{tabular}
\medskip
\caption{{\beginPM
Overview of related work in comparison to the \acs{ALW} algorithm (this paper).
For the purposes of this table,
$\graphsize$ refers to the size of the underlying graph while $\nRoutes$ refers to the number of relevant paths in the network (so $\nRoutes$ is typically exponential in $\graphsize$).
The ``anytime'' property refers to whether the number of iterations $\nRuns$ must be fixed at the outset and included as a parameter in the algorithm;
if not, the algorithm is labeled ``anytime''.
Finally, the ``parameter-agnostic'' property refers to whether any other parameters \textendash\ such as the Lipschitz modulus of the network's cost functions \textendash\ need to be known beforehand or not.
All estimates are reported in the $\bigof{\cdot}$ sense.
\endedit}}
\label{tab:related}
\vspace{-1ex}
\end{table}

%% Related work table ends here
%----------------------------------------------------------------------

%----------------------------------------------------------------------
%%% Outline
%----------------------------------------------------------------------
\subsection*{Paper outline\afterhead}

Our paper is structured as follows.
In \cref{sec:setup}, we formally define our congestion game setup, the relevant equilibrium notions, and our learning model.
Subsequently, to set the stage for our main results, we present in \cref{sec:nonadapt} the classic \acl{EW} algorithm as well as the \acceleweight variant which achieves an $\bigof{\log\nRoutes/\nRuns^{2}}$ convergence rate in static environments;
both algorithms are non-adaptive, and they are used as a baseline for our adaptive results.
Our analysis proper begins in \cref{sec:adaweight}, where we present the \ac{AEW} algorithm and its convergence analysis.
Then, in \cref{sec:adapush}, we present the \acl{dis} setup used to construct the \ac{ALW} algorithm and prove its convergence guarantees.
Finally, in \cref{sec:numerics}, we report a series of numerical experiments validating our theoretical results in real-life transport networks.
%All the results in this work are provided with detailed proofs.

%----------------------------------------------------------------------
%%% SETUP
%----------------------------------------------------------------------
\section{Problem setup}
\label{sec:setup}
%----------------------------------------------------------------------
%%% SETUP
%----------------------------------------------------------------------
% !TEX root = ./Main.tex

We begin in this section by introducing the basic elements of our model.

%----------------------------------------------------------------------
%%% Game
%----------------------------------------------------------------------
\subsection{The game\afterhead}
\label{sec:game}

Building on the classic congestion framework of \citet{BMW56}, we consider a class of nonatomic routing games defined by the following three primitives:
\begin{enumerate*}[(\itshape i\hspace*{.5pt}\upshape)]
\item
an underlying \emph{network structure};
\item
the associated set of \emph{traffic demands};
and
\item
the network's \emph{cost functions}.
\end{enumerate*}
The formal definition of each of these primitives is as follows:

\begin{enumerate}
[label=\bfseries\arabic*.,itemsep=\medskipamount]

\item
\textbf{Network structure:}
Consider a multi-graph $\graph \equiv \graphall$ with vertex set $\vertices$ and edge set $\edges$.
%\footnote{For the sake of simplicity, we assume that $\graph$ is a simple directed graph; however, note that our results in this work can also be extended to cases where $\graph$ is an undirected graph or where $\graph$ is a multi-graph.}
The focal point of interest is a set of \PMedit{distinct} \acdef{OD} pairs $(\source{\pair},\sink{\pair}) \in \vertices\times \vertices$ indexed by $\pair\in\pairs = \{1,\dotsc,\nPairs\}$.
\beginPM
For each $\pair \in \pairs$, we assume given a directed acyclic subgraph $\graph^{\pair} \equiv (\vertices^{\pair},\edges^{\pair})$ of $\graph$ that determines the set of routing paths from $\source{\pair}$ to $\sink{\pair}$, and we write $\routes^{\pair}$ for the corresponding set of paths joining $\source{\pair}$ to $\sink{\pair}$ in $\graph^{\pair}$.
%\footnote{$\graph^{\pair}$ is often called the routing table of the \ac{OD} pair $\pair$.}
For posterity, we will also write $\routes = \union_{\pair\in\pairs} \routes^{\pair}$ for the set of all routing paths in the network, and $\nRoutes = \abs{\routes}$ and $\nRoutes^{\pair} = \abs{\routes^{\pair}}$ for the respective cardinalities;
likewise, we will write $\graphsize^{\pair} = \abs{\vertices^{\pair}} + \abs{\edges^{\pair}}$ for the size of $\graph^{\pair}$, and $\graphsize = \sum_{\pair\in\pairs} \graphsize^{\pair}$ for the total size of the network.
\endedit

\item
\textbf{Traffic demands and flows:}
Each pair $\pair\in\pairs$ is associated to a \emph{traffic demand} $\mass{\pair} > 0$ that is to be routed from $\source{\pair}$ to $\sink{\pair}$ via $\routes^{\pair}$;
we also write $\masssum = \onenorm{\mass{}} = \sum_{\pair \in \pairs} \mass{\pair}$ and $\massmax = \supnorm{\mass{}} = \max_{\pair \in \pairs} \mass{\pair}$ for the total and maximum traffic demand associated to the network's \ac{OD} pairs respectively.

Now, to route this traffic, the set of feasible \emph{traffic assignment profiles} \textendash\ or \emph{flows} \textendash\  is defined as
\begin{equation}
\label{eq:flows}
\txs
\flows
	\defeq \setdef[\big]{\flow \in\R_{+}^{\routes}}{\sum_{\route\in\routes^{\pair}} \flow_{\route} = \mass{\pair}, \pair=1,\dotsc,\nPairs}
\end{equation}
\ie as the product of scaled simplices $\flows = \prod_{\pair} \mass{\pair} \simplex(\routes^{\pair})$. 
%For a feasible flow profile $\flow\in\flows$, $\flow_{\route}^{\pair}$ represents the amount of $\pair$-type traffic routed via path $\route \in \routes^{\pair}$.
In turn, each feasible flow profile $\flow\in\flows$ induces on each edge $\edge\in\edges$ the corresponding \emph{traffic load}
\begin{equation}
\label{eq:load}
\load_{\edge}(\flow)
	= \sum_{\route\in\routes} \oneof{\edge\in\route} \flow_{\route}
%	= \sum_{\route\in\routes, \route\ni\edge} \flow_{\route}
%	= \sum_{\pair\in\pairs} \sum_{\route\in\routes^\pair} \oneof{\edge\in\route} \flow_{\route}
%	= \sum_{\pair\in\pairs} \sum_{\route\in\routes^\pair,\route\ni\edge} \flow_{\route}
\end{equation}
\ie the accumulated mass of all traffic going through $\edge$.
It is also worth noting here that the path index $\route\in\routes$ completely characterizes the \ac{OD} pair $\pair\in\pairs$ to which it belongs;
when we want to make this relation explicit, we will write $\flow_{\route}^{\pair}$ instead of $\flow_{\route}$.

\item
\textbf{Congestion costs:}
The traffic routed through a given edge $\edge\in\edges$ incurs a \emph{congestion cost} depending on the total traffic on the edge and/or any other exogenous factors.
Formally, we will collectively encode all exogenous factors in a \emph{state variable} $\sample\in\samples$ \textendash\ the ``\emph{state of the world}'' \textendash\ that takes values in some ambient probability space $\probspace$.
In addition, we will assume that each edge $\edge\in\edges$ is endowed with a \emph{cost function} $\cost_{\edge} \from \R_{+}\times\samples \to \R_{+}$ which determines the cost $\cost_{\edge}(\load_{\edge}(\flow);\sample)$ of traversing $\edge\in\edges$ when the network is at state $\sample\in\samples$ and traffic is assigned according to the flow profile $\flow\in\flows$.
%so, given a flow profile $\flow\in\flows$, the cost to traverse $\edge\in\edges$ is $\cost_{\edge}(\load_{\edge}(\flow);\sample)$ when the network is at state $\sample\in\samples$.
Analogously, the cost to traverse a path $\route\in\routes$ will be given by the induced \emph{path-cost function} $\cost_{\route}\from\flows\times\samples\to\R_{+}$ defined as
\begin{equation}
\label{eq:cost-path}
\cost_{\route}(\flow;\sample)
	= \sum_{\edge\in\route} \cost_{\edge}(\load_{\edge}(\flow);\sample).
\end{equation}
%where $\cost_{\edge}(\load_{\edge}(\flow);\sample)$ denotes the cost to traverse edge $\edge\in\edges$ under the flow profile $\flow\in\flows$ and when the state of the world is $\sample\in\samples$.
\end{enumerate}

In this general setting, the only assumption that we will make for the game's cost functions is as follows:
\begin{assumption}
\label{asm:cost}
Each cost function $\cost_{\edge}(\real;\sample)$, $\edge\in\edges$, is
measurable in $\sample$
and
non-decreasing, bounded and Lipschitz continuous in $\real$.
Specifically, there exist $\ub >0$ and $\lips>0$ such that
$\cost_\edge (\real;\sample) \leq \ub$
and
$\abs{\cost_{\edge}(\real;\sample) - \cost_{\edge}(\realalt;\sample)} \leq \lips \abs{\realalt - \real}$ for all $\real,\realalt \in [0, \masssum]$, all $\edge\in\edges$, and $\prob$-almost all $\sample\in\samples$.
\end{assumption}

\cref{asm:cost} represents a very mild regularity requirement that is satisfied by most congestion models that occur in practice \textendash\ including BPR, polynomial, or regularly-varying latency functions, \cf \cite{NRTV07,BEL06,LMP75,CBCMS20,patriksson2015traffic,OCW16} and references therein.
For this reason, we will treat \cref{asm:cost} as a standing, blanket assumption and we will not mention it explicitly in the sequel.

%----------------------------------------------------------------------
%%% Static vs. stochastic
%----------------------------------------------------------------------
\subsection{Regimes of uncertainty and examples\afterhead}

The advent of uncertainty in the game's cost functions (as modeled by $\sample\in\samples$) is an important element of our congestion game framework.
To elaborate further on this, it will be convenient to define the \emph{mean cost function} of edge $\edge\in\edges$ as
\PM{I did not touch this, but why does $\meancost$ take flow profiles as arguments while $\cost$ takes loads as arguments?
I know we \emph{can} do this, I'm just curious why\dots basically, is this intentional or is it a typo?}\DQV{I did this intentionally, but I am open to changes. $\cost_{\edge}$ should be viewed as 1-D to lighten Assumption 1. Meanwhile, I'd like to highlight that $\meancost_{\edge}$ can be seen as a function of flows (\eg Definition of equi flow).}
\PM{Not sure I understand this:
the definition of equilibrium flows uses $\meancost_{\route}$ not $\meancost_{\edge}$, no? [\cf \eqref{eq:cost-mean}]
Is there any other point you have in mind?}
\begin{align}
\meancost_{\edge}(\flow)
	&= \ex_{\sample} \bracks*{ \cost_{\edge}(\load_{\edge}(\flow);\sample)}
%	\quad
%	\text{for all $\flow \in \flows$}
\intertext{and consider the corresponding (random) fluctuation process}
\label{eq:noise}
\noise_{\edge} (\flow;\sample)
	&= \cost_{\edge}(\load_{\edge}(\flow);\sample)  - \meancost_{\edge}(\flow)
%	\quad
%	\text{for all $\flow\in\flows$, $\sample\in\samples$}
\end{align}
\ie the deviation of the network's cost functions at state $\sample\in\samples$ from their mean value.
The magnitude of these deviations may then be quantified by the randomness parameter
%When the network is at a certain state $\sample \in \samples$, the ``noise'' with respect to the mean cost is quantified by $\noise_{\edge} (\flow;\sample) \defeq   \cost_{\edge}(\load_{\edge}(\flow);\sample)  - \meancost_{\edge}(\flow)$. Based on these notions, we define:
\begin{equation}
\label{eq:noisevar}
\sdev
	= \ess\sup\nolimits_{\sample\in\samples} \max\nolimits_{\flow\in\flows, \edge\in\edges}
		\abs{\noise_{\edge}(\flow;\sample)}
\end{equation}
\PMedit{where $\ess\sup$ denotes the essential supremum over $\samples$ with respect to $\prob$ (so $\sdev \leq 2\ub$ by \cref{asm:cost}).}
Informally, larger values of $\sdev$ indicate a higher degree of randomness, implying in turn that the traffic assignment problem becomes more difficult to solve;
on the other hand,
if $\sdev=0$, one would expect algorithmic methods to achieve better results.
This distinction plays a key role in the sequel so we formalize it as follows:
% the problem is more stochastic and we expect worse performance from our considered methods. In the special case where the randomness is absent (\ie the network state is fixed), $\samples$ is a singleton and trivially, $\sdev = 0$. As we shall see in the sequel, this latter case allows algorithmic methods to achieve better performances.
%We formally distinguish between these regimes in the following definition.

\begin{definition}[Static and Stochastic Environments]
\label{def:regime}
When $\sdev =0$, we say that the environment is \emph{static};
otherwise, if $\sdev>0$, we say that the environment is \emph{stochastic}.
\end{definition}

The dependence of the network's cost functions on exogenous random factors is the main difference of our setup with standard congestion models in the spirit of \citet{BMW56} and \citet{NRTV07}.
For concreteness, we mention below two existing models that can be seen as special cases of our framework:

%----------------------------------------------------------------------
%\begin{example}
%[Stochastically perturbed BPR costs]\label{ex:BPR1}
%Standard BPR models \citep{LMP75,CBCMS20,patriksson2015traffic,OCW16}, under the form $\cost^{\textrm{BPR}}_{\edge}(\real) = a_{\edge} + b_{\edge}\parens*{\real / \textrm{cap}_{\edge}}^{\pexp}, \forall \real \ge 0$, capture the effects of a link's length, its capacity, free-flow speed, etc., on urban network congestion.
%However, they neglect other miscellaneous factors such as weather conditions, accidents and other random factors which may cause a temporary increase \textendash\ or decrease \textendash\ in congestion.
%In our notation, if $\sample_{\edge}$ is an additive random fluctuation for the cost of edge $\edge\in\edges$, the induced cost at a flow $\flow \in \flows$ becomes $\cost(\flow;\sample) = \cost^{\textrm{BPR}}_{\edge}\parens*{\load_{\edge}(\flow)} + \sample_{\edge}$.
%The noise can also appear in cases where the actual latency / congestion can only be measured up to a certain error (a case of vital importance for Internet-like networks).
%%it is inefficiently expensive to measure the precise cost, thus an estimation from samples is used instead.
%\hfill
%\endenv
%\end{example}

\begin{example}
[Deterministic routing games]
\label{ex:stationary}
The \emph{static regime} described above matches the deterministic routing models of \citet{FV04,BEL06} and \citet{KDB14,KDB15}.
In these models, the network only has a single (deterministic) state
%and its cost functions are likewise fixed;
%specifically, $\meancost_{\edge}(\real) \equiv \cost_{\edge}(\real;\sample_{0})$,
so, trivially, $\sdev=0$.
\endenv
\end{example}

\begin{example}
[Noisy cost measurements]
\label{ex:BPR1}
\beginPM
To model uncertainty in the cost measurement process, \citet{KDB15,KKDB15} considered the case where the network's cost functions are fixed, but cost measurements are only accurate up to a random, zero-mean error.
In our framework, this can be modeled by simply assuming a familly of cost functions of the form $\cost_{\edge}(\flow;\sample) = \meancost_{\edge}(\flow) + \sample_{\edge}$ with $\exof{\sample_{\edge}} = 0$ for all $\edge\in\edges$.
%Consider the static regime described above; however, whenever a flow profile $\flow \in \flows$ is employed, only a stochastically ``perturbed'' cost $\cost_\edge\parens*{\load_{\edge}(\flow);\sample} =\cost_\edge\parens*{\load_{\edge}(\flow);\sample} +\sample_\edge$ is observed at each edge $\edge$. Here, $\sample_\edge$ is a random noise such that $\ex\bracks{\sample_{\edge}} = 0, \forall \edge$. This set-up matches the game model studied by \citet{KDB15,KKDB15}. 
\hfill
\endenv
\end{example}
%----------------------------------------------------------------------

Besides these standard examples, our model is sufficiently flexible to capture other random factors such as weather conditions, traffic accidents, road incidents, etc.
For instance, to model the difference between dry and wet weather, one can take $\samples = \{\sample_{\textrm{dry}},\sample_{\textrm{wet}}\}$ and consider a set of cost functions with $\cost_{\edge}(\load_\edge(\flow); \sample_{\textrm{dry}}) < \cost_{\edge} (\load_\edge(\flow);\sample_{\textrm{wet}})$ to reflect the fact that the congestion costs are higher when the roads are wet.
\PMedit{However, to maintain the generality of our model, we will not focus on any particular application.}

%----------------------------------------------------------------------
%\begin{example}
%[Routing games with stochastic exogenous loads]
%\label{ex:BPR2}
%In practice, the total traffic is the aggregation of all commuters, irrespective of whether they are using a navigation app or not.
%With this in mind, consider the problem of a navigation app making a flow recommendation $\flow \in \flows$ for its users.
%If $\sample_\edge$ denotes the \emph{exogenous} traffic load on edge $\edge \in \edges$ (\ie commuters \emph{not} using the app), the total load on $\edge$ will be $\load_{\edge}(\flow) + \sample_{\edge}$.
%Thus, the induced cost is \mbox{$\cost_{\edge} \parens*{\flow;\sample} = \cost^{\textrm{fixed}}_{\edge} (\load_{\edge}(\flow) + \sample_{\edge})$}, where $\cost^{\textrm{fixed}}: \R_{+} \rightarrow \R_{+}$ is a pre-determined function. Here, the randomness is woven \emph{implicitly} in the model.
%%Here, without data on exogenous traffics, Google Maps might assume that $\sample_{\edge}$ randomly fluctuates between computation epochs (\eg hours,~days).
%\hfill
%\endenv
%\end{example}
%----------------------------------------------------------------------

%----------------------------------------------------------------------
%%% Equilibrium
%----------------------------------------------------------------------
\subsection{Notions of equilibrium\afterhead}
\label{sec:equilibrium}

In the above framework, each state variable $\sample\in\samples$ determines an instance of a routing game, defined formally as a tuple $\meangame_{\sample} \equiv \meangame_{\sample}(\graph,\pairs,\routes,\cost_{\sample})$ where $\cost_{\sample}$ is shorthand for the network's cost functions $\{\cost_{\edge}(\cdot;\sample)\}_{\edge\in\edges}$ instantiated at $\sample$.
Of course, in analyzing the game, each individual instance $\meangame_{\sample}$ is meaningless by itself unless $\prob$ assigns positive probability only to a \emph{single} $\sample$.
For this reason, we will instead focus on the \emph{mean game} $\meangame \equiv \meangame(\graph,\pairs,\routes,\meancost)$ which has the same network and routing structure as every $\meangame_{\sample}$, $\sample\in\samples$, but with congestion costs given by the \emph{mean cost functions}
\begin{equation}
\label{eq:cost-mean}
\meancost_{\route}(\flow)
	= \exof{\cost_{\route}(\flow;\sample)}
	= \insum_{\edge\in\route} \exof{\cost_{\edge}(\load_{\edge}(\flow);\sample)}.
\end{equation}
Motivated by the route recommendation problem described in the introduction, we will focus on traffic assignment profiles where Wardrop's unilateral optimality principle \citep{War52} holds on average, \ie \emph{all traffic is routed along a path with minimal mean cost.}
Formally, we have:

\begin{definition}
[Mean equilibrium flows]
\label{def:equilibrium}
We say that $\flowbase \in \flows$ is a \emph{mean equilibrium flow} if and only if $\meancost_{\route}(\flowbase) \leq \meancost_{\routealt}(\flowbase)$ for all $\pair\in\pairs$ and all $\route,\routealt\in\routes^{\pair}$ such that $(\flowbase)_{\route}^{\pair} > 0$.
\end{definition}

\begin{remark}
\cref{def:equilibrium} means that, on average, no user has an incentive to deviate from the recommended route;
obviously, when the support of $\prob$ is a singleton, we recover the usual definition of a \emph{Wardrop equilibrium} \citep{War52,BMW56,NRTV07}.
This special case will be particularly important and we discuss it in detail in the next section.
\end{remark}

Importantly, the problem of finding an equilibrium flow of a (fixed) routing game $\meangame_{\sample}$ admits a \emph{potential function} \textendash\ often referred to as the \acf{BMW} potential \citep{BMW56,DS69}.
Specifically, for a given instance $\sample\in\samples$, the \ac{BMW} potential is defined as
\begin{equation}
\label{eq:BMW-inst}
\tag{BMW}
\pot_{\sample}(\flow)
	= \insum_{\edge \in \edges} \int_{0}^{\load_\edge(\flow)} \cost_{\edge} \parens{\real;\sample} \dd\real
	\quad
	\text{for all $\flow\in\flows$},
\end{equation}
and it has the fundamental property that $\argmin\pot_{\sample}$ coincides with the set $\Eq(\meangame_{\sample})$ of equilibria of $\meangame_{\sample}$.
In our stochastic context, a natural question that arises is whether this property can be extended to the mean game $\meangame$ where the randomness is ``averaged over''.
Clearly, the most direct candidate for a potential function in this case is the \emph{mean \ac{BMW} potential}
\begin{equation}
\label{eq:BMW-mean}
\meanpot(\flow)
	= \exof{\pot_{\sample}(\flow)}
	= \exof*{ \sum_{\edge\in\edges} \int_{0}^{\load_\edge(\flow)} \cost_{\edge}(\real;\sample)\dd\real}.\end{equation}
Since $\cost_{\edge}$ is continuous and non-decreasing for all $\edge\in\edges$, the mean \ac{BMW} potential $\meanpot$ is, in turn, continuously differentiable and convex on $\flows$.
Of course, since the probability law $\prob$ is not known, we will \emph{not} assume that $\meanpot$ (and/or its gradients) can be explicitly computed in general.
Nevertheless, building on the deterministic characterization of \citet{BMW56}, we have the following equivalence between mean equilibria and minimizers of $\meanpot$:
%Extend the classic result of \citet{BMW56}, we can establish the equivalence between mean equilibrium flows and the minimizers of the mean potential $\meanpot$ as follows:

\begin{proposition}
\label{prop:pot-mean}
$\meanpot$ is a potential function for the mean game $\meangame \equiv \meangame(\graph,\pairs,\routes,\meancost)$;
in particular, we have
\begin{equation}
\label{eq:pot-mean}
\frac{\pd\meanpot}{\pd\flow_{\route}}
%	= \ex_{\sample}[\cost_{\route}(\flow;\sample)]
	= \meancost_{\route}(\flow)
	\quad
	\text{for all $\flow\in\flows$ and all $\route\in\routes$}.
\end{equation}
Accordingly,
%we also have $\Eq(\meangame) = \argmin_{\flow\in\flows} \meanpot(\flow)$, \ie
a flow profile $\flowbase \in \flows$ is an equilibrium of the mean game $\meangame$ if and only if it is a minimizer of $\meanpot$ over $\flows$;
more succinctly,
\PMedit{the equilibrium set $\Eq(\meangame)$ of $\meangame$ satisfies $\Eq(\meangame) = \argmin_{\flow\in\flows} \meanpot(\flow)$.}
\DQV{Is the $\Eq$ notation standard? Or you take this equation here as a definition?}
\PM{I think it's quite standard, but I added a phrase}
\end{proposition}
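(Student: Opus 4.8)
The plan is to prove \eqref{eq:pot-mean} by direct differentiation of \eqref{eq:BMW-mean}, and then deduce the equilibrium characterization from the first-order optimality conditions for a convex program over the product of simplices $\flows=\prod_{\pair}\mass{\pair}\simplex(\routes^{\pair})$. For the gradient formula, fix a path $\route\in\routes$ and differentiate the integrand of \eqref{eq:BMW-mean} in $\flow_{\route}$: by \eqref{eq:load} the load map $\flow\mapsto\load_{\edge}(\flow)$ is linear with $\pd\load_{\edge}/\pd\flow_{\route}=\oneof{\edge\in\route}$, so the chain rule together with the fundamental theorem of calculus (valid since $\cost_{\edge}(\cdot;\sample)$ is continuous by \cref{asm:cost}) gives $\pd/\pd\flow_{\route}\,\sum_{\edge}\int_{0}^{\load_{\edge}(\flow)}\cost_{\edge}(\real;\sample)\dd\real=\sum_{\edge\in\route}\cost_{\edge}(\load_{\edge}(\flow);\sample)=\cost_{\route}(\flow;\sample)$, where the last step is \eqref{eq:cost-path}. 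The only point requiring care is the interchange of $\exof{\cdot}$ and $\pd/\pd\flow_{\route}$; I would justify this by dominated convergence, using that \cref{asm:cost} bounds the relevant difference quotients uniformly (by a multiple of $\ub$, independently of $\sample$). Taking expectations then yields $\pd\meanpot/\pd\flow_{\route}=\exof{\cost_{\route}(\flow;\sample)}=\meancost_{\route}(\flow)$, which is precisely the assertion that $\meanpot$ is a potential for $\meangame$.

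Next I would reduce the equilibrium statement to a variational inequality. Since each $\cost_{\edge}(\cdot;\sample)$ is non-decreasing, the map $\real\mapsto\int_{0}^{\real}\cost_{\edge}(\realalt;\sample)\dd\realalt$ is convex; composing with the linear load map and averaging over $\sample$ shows that $\meanpot$ is convex and $C^{1}$ on the compact convex set $\flows$. Consequently $\flowbase$ minimizes $\meanpot$ over $\flows$ if and only if $\inner{\grad\meanpot(\flowbase)}{\flow-\flowbase}\ge0$ for all $\flow\in\flows$, which by the gradient formula reads $\sum_{\route\in\routes}\meancost_{\route}(\flowbase)\,(\flow_{\route}-\flowbase_{\route})\ge0$ for all $\flow\in\flows$.

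It remains to check that this variational inequality is equivalent to \cref{def:equilibrium}. Because $\flows$ is a product over the pairs $\pair\in\pairs$, the inequality decouples pairwise, so it suffices to argue for a single pair. If $\flowbase$ is a mean equilibrium, set $m_{\pair}=\min_{\route\in\routes^{\pair}}\meancost_{\route}(\flowbase)$; then $\meancost_{\route}(\flowbase)\ge m_{\pair}$ always and equals $m_{\pair}$ whenever $\flowbase_{\route}>0$, so for every $\flow\in\flows$ we get $\sum_{\route\in\routes^{\pair}}\meancost_{\route}(\flowbase)\flow_{\route}\ge m_{\pair}\mass{\pair}=\sum_{\route\in\routes^{\pair}}\meancost_{\route}(\flowbase)\flowbase_{\route}$, i.e.\ the variational inequality holds. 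Conversely, if $\flowbase$ solves the variational inequality but fails \cref{def:equilibrium}, there exist $\pair$ and $\route,\routealt\in\routes^{\pair}$ with $\flowbase_{\route}>0$ and $\meancost_{\route}(\flowbase)>\meancost_{\routealt}(\flowbase)$; shifting a mass $\delta\in(0,\flowbase_{\route})$ from $\route$ to $\routealt$ yields a feasible $\flow\in\flows$ with $\sum_{\routealtalt}\meancost_{\routealtalt}(\flowbase)(\flow_{\routealtalt}-\flowbase_{\routealtalt})=\delta\bigl(\meancost_{\routealt}(\flowbase)-\meancost_{\route}(\flowbase)\bigr)<0$, a contradiction. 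Combining this with the previous step gives $\Eq(\meangame)=\argmin_{\flow\in\flows}\meanpot(\flow)$.

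The main obstacle I anticipate is the measure-theoretic bookkeeping in the first step — namely differentiation under the expectation sign — which is where \cref{asm:cost} (measurability in $\sample$ and the uniform bound $\ub$) does the real work; everything afterwards is the classical Beckmann--McGuire--Winsten argument transcribed verbatim to the mean costs, with the product structure of $\flows$ rendering the final equivalence entirely elementary.
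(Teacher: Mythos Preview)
Your proposal is correct and follows essentially the same approach as the paper: both use dominated convergence (justified by the uniform bound $\ub$ from \cref{asm:cost}) to interchange $\exof{\cdot}$ and $\pd/\pd\flow_{\route}$, and then invoke the classical \ac{BMW} argument for the equilibrium characterization. The only difference is cosmetic: the paper defers the second half to the references \cite{BMW56,RT02}, whereas you spell out the variational-inequality equivalence explicitly, which is if anything more self-contained.
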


\begin{Proof}
First, apply the dominated convergence theorem (saying that gradients and expectations commute), we have $\frac{\pd\meanpot}{\pd\flow_{\route}} = \ex \bracks*{ \frac{\pd\pot_{\sample}}{\pd\flow_{\route}} } = \ex \bracks*{\cost_{\route}(\flow; \sample)} = \meancost_{\route}(\flow)$. The rest of the proof of \cref{prop:pot-mean} can be obtained by following \citep{BMW56,RT02} for the mean game $\game$.
\PM{We still need to show that gradients and expectations commute, and all these pesky little details, no?
Wouldn't it make sense to include a \emph{short} discussion of this and then say ``the rest of the proof follows as in blah blah''?}\DQV{done.}
\end{Proof}

In view of \cref{prop:pot-mean}, $\meanpot$ provides a natural merit function for examining how close a given flow profile $\flow\in\flows$ is to being an equilibrium.
\beginPM
Formally, given a sequence of candidate flow profiles $\curr\in\flows$, $\run=\running$, we define the associated \emph{equilibrium gap} after $\nRuns$ epochs as
\begin{equation}
\label{eq:gap}
\gap(\nRuns)
	= \meanpot(\last[\state]) - \min\meanpot.
\end{equation}
Clearly, the sequence in question converges to $\Eq(\meangame)$ if and only if $\gap(\nRuns)\to0$.
Moreover, the speed of this convergence is also captured by $\gap(\nRuns)$, so all our convergence certificates will be stated in terms of $\gap(\nRuns)$.
\endedit

%\begin{definition}
%\label{def:gap}
%For any sequence of flows profiles $\flow^{\start}, \flow^{\afterstart}, \ldots, \last[\flow], \ldots$, we~define
%\begin{equation}
%\gap(\nRuns) \defeq \meanpot(\last[\flow]) - \min \nolimits_{\alt\flow \in \flows}\meanpot(\alt\flow),
%\end{equation}
% and we say that this sequence \emph{converges towards equilibria} if $\gap({\nRuns}) \stackrel{\nRuns \rightarrow \infty}{\rightarrow} 0$.
%\end{definition}
%%
%From \cref{def:gap} and \cref{prop:pot-mean}, we have $\gap(\nRuns) =  \meanpot(\last[\flow]) - \meanpot(\flowbase)$ where $\flowbase$ is any mean equilibrium flow. Intuitively, $\gap(\nRuns)$ quantifies the difference between the values of the mean BMW potential at the flow $\last[\flow]$ and at the mean equilibrium flow.

%----------------------------------------------------------------------
%%% Sequence of events
%----------------------------------------------------------------------
\subsection{Sequence of events\afterhead}
\label{sec:learn}

With all this in hand, the sequence of events in our model unfolds as follows:
%The last component of our model is the actual learning process that unfolds over time.
%The specific sequence of events evolving is as follows: 
\begin{enumerate}
\item
\PMedit{At each time slot $\run=\running$, the navigation interface selects a traffic assignment profile $\curr \in \flows$ and it routes all demands received within this time slot according to $\curr$.}
\item
Concurrently, the state $\curr[\sample]$ of the network is drawn \acs{iid} from $\samples$ according to $\prob$.
\item
\PMedit{The interface observes the realized congestion costs $\curr[\signal]_{\edge} = \cost_\edge(\load_{\edge}(\curr);\curr[\sample])$ along each edge $\edge\in\edges$;
subsequently, it uses this information to update the routing profile $\curr$, and the process repeats.}
\end{enumerate}

%Recall our notions of the stochastic and static environments (cf. \cref{def:regime}) \textendash\ the two main set-ups in which we will derive our results in the sequel.
%From the learning model above, we re-emphasize that in the stochastic regime, the network state $\curr[\sample]$ might change over time while in the static regime, it is fixed through time, \ie  $\curr[\sample] = \sample$ for any $\run$.
%Particularly, we focus on analyzing upper-bound guarantees of the gap (cf. \cref{def:gap}) provided by these methods in terms of the number of learning iterations and other parameters of the game. 

\beginPM
In what follows, we will analyze different learning algorithms for updating $\curr$ in the above sequence of events, and we will examine their equilibrium convergence properties in terms of the equilibrium gap function \eqref{eq:gap}.
For posterity, we emphasize again that the optimizer has no prior knowledge of the enviroment, the distribution of random events, the network's cost functions, etc., so we will pay particular attention to the dependence of each algorithm's guarantees on these (otherwise unknown) parameters.
\endedit

%For a given learning window, it might not be a priori clear whether the system is in the static or stochastic regime. As such, the key question that we aim to answer is as follows:
%\emph{is it possible to design routing algorithms that automatically adapt to the appropriate setting and provide optimal convergence guarantees in both static and stochastic environments?}

%----------------------------------------------------------------------
%%% NON-ADAPTIVE
%----------------------------------------------------------------------
\section{Non-adaptive methods}
\label{sec:nonadapt}
%----------------------------------------------------------------------
%%% NON-ADAPTIVE
%----------------------------------------------------------------------
% !TEX root = ./Main.tex

In the rest of our paper, we discuss a series of routing algorithms and their equilibrium convergence guarantees.
To set the stage for our main contributions, we begin by presenting two learning methods that are order-optimal in the two basic regimes described in the previous section:
\begin{enumerate}
\item
The ``vanilla'' \acf{EW} method of \citet{BEL06} for \emph{stochastic} environments.
\item
An \acl{XLEW} algorithm \textendash\ which we call \acs{XLEW} \textendash\ for \emph{static} environments.
\end{enumerate}
Both methods rely crucially on the (rescaled) \emph{logit choice map} $\logit\from\R^{\routes}\to\flows$, given in components as
\begin{equation}
\label{eq:logit}
\logit_{\route}(\dpoint)
	= \frac
		{\mass{\pair} \exp(\dpoint_{\route})}
		{\sum_{\routealt\in\routes^{\pair}} \exp(\dpoint_{\routealt})}
	\qquad
	\text{for all $\dpoint\in\R^{\routes}$, $\route\in\routes^{\pair}$ and $\pair\in\pairs$}.
\end{equation}
For concision, we will also write $\curr[\signal]_{\route} = \sum_{\edge\in\route} \curr[\signal]_{\edge}$ for the cost of path $\route\in\routes$ at stage $\run$, and $\curr[\signal] = (\curr[\signal]_{\route})_{\route\in\routes}$ for the profile thereof.

%----------------------------------------------------------------------
%%% Stochastic
%----------------------------------------------------------------------
\subsection{\Acl{EW} in stochastic environments\afterhead}
\label{sec:EW}

We begin by presenting the \acl{EW} algorithm of \citet{BEL06} in standard pseudocode form:
\PM{I did not touch anything, but why do we write $\curr[\dstate] = \prev[\dstate] - \curr[\step] \curr[\signal]$ instead of $\next[\dstate] = \curr[\dstate] - \curr[\step] \curr[\signal]$?
Is this convention consistent with the proofs?
If so, where do we stop the history filtration?}\DQV{Thanks. For consistency with AdaWeight, I changed it to the latter.}

%----------------------------------------------------------------------
%%% EXPWEIGHT algorithm begins here

\begin{algorithm}[H]
\small
\DontPrintSemicolon

\textbf{Initialize} $\init[\dstate] \gets \zer$\;
\label{line:flow0}
\For{$\run = \running$}
{
	set $\curr[\state]  \gets \logit(\curr[\dstate])$
	and
	get $\curr[\signal] \gets \cost(\curr[\state],\curr[\sample])$
\tcp*[r]{route and measure costs}
\label{line:EW_flow}
	set $\next[\dstate]  = \curr[\dstate]  - \curr[\stepalt]\curr[\signal]$
\tcp*[r]{update path scores}
\label{line:EW_dpoint} 
}

\caption{\Acf{EW}}
\label{alg:EW}
\acused{EW}
\end{algorithm}

%%% EXPWEIGHT algorithm ends here
%----------------------------------------------------------------------

%Then, by applying the classical analysis of \acl{EW} methods \citep{SS11,BCB12}, we obtain the following equilibrium convergence guarantee:
We then have the following equilibrium convergence result:

\begin{theorem}
\label{thm:EW}
Suppose that \ac{EW} \textpar{\cref{alg:EW}} is run for $\nRuns$ epochs with a fixed learning rate $\curr[\stepalt] = \sqrt{\log\parens*{{\massmax\nRoutes} \big/ {\masssum}}} / (\ub \sqrt{\nRuns})$ for any $\run$.
Then the time-averaged flow profile $\last[\aveplay] = (1/\nRuns) \sum_{\run=\start}^{\nRuns} \curr$ enjoys the equilibrium convergence rate:
\PM{Can we put the value of $\stepalt$ that optimizes the rate?
It would make the result sharper (and also highlight the fact that we need to tune the algorithm's parameters).}\DQV{Done}
\begin{equation}
\label{eq:rate-EW}
%\exof*{\meanpot \parens*{\aveplay^{\nRuns}}} - \min\meanpot
\exof*{\gap \parens*{\nRuns }}
	\leq \frac{2 \masssum \ub \sqrt{\log\parens*{{\massmax\nRoutes} \big/ {\masssum}}}} {\sqrt{\nRuns}}
	= \bigoh \parens*{\frac{(\log\nRoutes)^{1/2}}{\sqrt{\nRuns}}}.
\end{equation}
%where $\flowbase\in\flows$ is any equilibrium flow of the mean game $\mgame$.
\end{theorem}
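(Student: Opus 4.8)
The plan is to recognize that \ac{EW} \textpar{\cref{alg:EW}} is exactly lazy (dual‑averaging) mirror descent run on the feasible flow set $\flows = \prod_{\pair}\mass{\pair}\simplex(\routes^{\pair})$ with the separable entropic regularizer $\hreg(\flow) = \sum_{\route}\flow_{\route}\log\flow_{\route}$, whose induced mirror map is precisely the logit choice map $\logit$ of \eqref{eq:logit}; indeed, \cref{alg:EW} produces $\curr = \logit(\curr[\dstate])$ with $\curr[\dstate] = -\sum_{\runalt=\start}^{\runprev}\iter[\stepalt]\iter[\signal]$. The crucial observation is that the surrogate gradient $\curr[\signal]$ fed to the scheme is an \emph{unbiased} stochastic first‑order oracle for the mean potential $\meanpot$: since $\curr$ is a deterministic function of $\sample^{\start},\dotsc,\sample^{\runprev}$ while $\curr[\sample]$ is drawn afresh and independently, \cref{prop:pot-mean} (together with dominated convergence, which already appears in its proof) gives $\exof{\curr[\signal] \given \filter_{\runprev}} = \nabla\meanpot(\curr)$, where $\filter_{\runprev} \defeq \sigma(\sample^{\start},\dotsc,\sample^{\runprev})$. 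This reduces the theorem to a standard stochastic‑mirror‑descent rate bound for the convex, differentiable function $\meanpot$ over $\flows$.

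With this reformulation in hand I would argue in four steps. \emph{Step 1 (deterministic regret).} Along every realization, the usual energy inequality for dual averaging with the entropic regularizer gives
\[
\sum_{\run=\start}^{\nRuns}\inner{\curr[\signal]}{\curr-\flowbase}
  \;\le\; \frac{D_{\hreg}(\flowbase,\init)}{\stepalt} + \frac{\stepalt}{2}\,\masssum\sum_{\run=\start}^{\nRuns}\supnorm{\curr[\signal]}^{2}
  \qquad\text{for every }\flowbase\in\flows,
\]
where $\init = \logit(\zer)$ is the barycenter of $\flows$ and the factor $\masssum$ enters because $\hreg$ is $1/\mass{\pair}$‑strongly convex in the $\ell_{1}$‑norm on the $\pair$‑th scaled simplex (so the per‑block dual terms $\mass{\pair}\supnorm{\curr[\signal]}^{2}$ sum to $\masssum\supnorm{\curr[\signal]}^{2}$). \emph{Step 2 (regret $\Rightarrow$ potential gap).} Fix any minimizer $\flowbase$ of $\meanpot$ over $\flows$. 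Convexity of $\meanpot$ and the oracle identity of the previous paragraph give, for each $\run$, $\exof{\meanpot(\curr)-\min\meanpot \given \filter_{\runprev}} \le \exof{\inner{\nabla\meanpot(\curr)}{\curr-\flowbase}\given\filter_{\runprev}} = \exof{\inner{\curr[\signal]}{\curr-\flowbase}\given\filter_{\runprev}}$; summing over $\run$, taking total expectations via the tower rule, and invoking Step 1 with $\supnorm{\curr[\signal]}\le\ub$ yields $\sum_{\run}\exof{\meanpot(\curr)} - \nRuns\min\meanpot \le D_{\hreg}(\flowbase,\init)/\stepalt + (\stepalt/2)\,\masssum\,\ub^{2}\,\nRuns$. \emph{Step 3 (averaging).} Since $\meanpot$ is convex and $\last[\aveplay]=\tfrac1\nRuns\sum_{\run}\curr$, Jensen's inequality gives $\exof{\gap(\nRuns)} = \exof{\meanpot(\last[\aveplay])}-\min\meanpot \le \tfrac1\nRuns\bigl(\sum_{\run}\exof{\meanpot(\curr)}-\nRuns\min\meanpot\bigr)$. \emph{Step 4 (tuning).} Combining Steps 2--3, $\exof{\gap(\nRuns)} \le D_{\hreg}(\flowbase,\init)/(\stepalt\nRuns) + (\stepalt/2)\,\masssum\,\ub^{2}$; the stated constant step size $\stepalt = \sqrt{\log(\massmax\nRoutes/\masssum)}\big/(\ub\sqrt{\nRuns})$ balances the two terms (up to the absolute constant) and, after inserting the bound on $D_{\hreg}(\flowbase,\init)$ from the next paragraph, reproduces \eqref{eq:rate-EW}.

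The one genuinely delicate point is pinning down the two constants in Step 1 for the \emph{scaled product} of simplices. For the dual‑norm term this is just the $1/\mass{\pair}$ strong‑convexity modulus of the entropy on each block, summing to the $\masssum$ prefactor; for the primal term one needs the Bregman radius $D_{\hreg}(\flowbase,\init) \le \masssum\log(\massmax\nRoutes/\masssum)$. This follows by writing $\flowbase^{\pair} = \mass{\pair}p^{\pair}$ with $p^{\pair}\in\simplex(\routes^{\pair})$: the cross‑term in $D_{\hreg}(\flowbase,\init)$ vanishes because $\init$ is proportional to the all‑ones vector on each block and $\flowbase,\init$ carry equal mass there, leaving $D_{\hreg}(\flowbase,\init) = \sum_{\pair}\mass{\pair}\bigl(\log\nRoutes^{\pair} - \text{(Shannon entropy of }p^{\pair}\text{)}\bigr) \le \sum_{\pair}\mass{\pair}\log\nRoutes^{\pair}$, and Jensen's inequality applied to $\log$, together with $\mass{\pair}\le\massmax$ and $\sum_{\pair}\nRoutes^{\pair}=\nRoutes$, turns this into $\masssum\log(\massmax\nRoutes/\masssum)$. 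Everything else \textendash{} convexity of $\meanpot$, the unbiasedness identity from \cref{prop:pot-mean}, Jensen, and the one‑line scalar optimization over $\stepalt$ \textendash{} is routine bookkeeping.
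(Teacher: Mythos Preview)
The paper does not prove this theorem: it writes that the proof ``relies on standard techniques'' and defers to the literature. Your argument \textemdash{} recognizing \ac{EW} as dual-averaging mirror descent with the entropic regularizer on the product of scaled simplices, invoking the unbiased-oracle identity $\exof{\curr[\signal]\mid\text{past}}=\nabla\meanpot(\curr)$ from \cref{prop:pot-mean}, applying the standard energy inequality block by block, and then Jensen on the time average \textemdash{} is exactly the standard argument being deferred to, and your block-wise Bregman-radius bound $D_{\hreg}(\flowbase,\init)\le\masssum\log(\massmax\nRoutes/\masssum)$ is the same estimate the paper itself uses later (see the end of the proof of \cref{thm:XLEW}).

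One small slip: you claim $\supnorm{\curr[\signal]}\le\ub$, but $\curr[\signal]$ is the vector of \emph{path} costs $\curr[\signal]_{\route}=\sum_{\edge\in\route}\curr[\signal]_{\edge}$, whereas \cref{asm:cost} bounds each \emph{edge} cost by $\ub$; the correct uniform bound is $\supnorm{\curr[\signal]}\le\length\ub$, with $\length$ the length of the longest path (cf.\ \cref{prop:smooth}). The stated learning rate and the constant in \eqref{eq:rate-EW} carry $\ub$ without a $\length$, so your derivation faithfully reproduces the paper's stated bound; but strictly speaking both should carry an extra factor of $\length$. This does not affect the $\bigoh\!\bigl((\log\nRoutes)^{1/2}/\sqrt{\nRuns}\bigr)$ conclusion.
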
%

\beginPM
The proof of \cref{thm:EW} relies on standard techniques, so we omit it;
for a series of closely related results, we refer the reader to \citet{KKDB15} and \citet[Corollary 2.14 and Theorem 4.1]{SS11};
What is more important for our purposes is that \cref{thm:EW} confirms that
\endedit
\ac{EW} achieves an $\bigof[\big]{1 \big/ \sqrt{\nRuns}}$ convergence speed in the number of epochs $\nRuns$, so it is order-optimal in stochastic environments \citep{Nes04,Bub15}.
Moreover, the algorithm's convergence speed is logarithmic in $\nRoutes$, and hence (at most) linear in the size $\graphsize$ of the underlying network.
%Note that \cref{thm:EW} is a direct application of standard results of \ac{EW} into the set-up of routing games (e.g., one can derive \cref{thm:EW} from Corollary 2.14 and Theorem 4.1 of \cite{SS11}).
%Since \ac{EW} is not the main focus of this work, we omit the details of the proof of \cref{thm:EW}; instead, we make two relevant remarks:

On the downside, the convergence rate obtained for Algorithm \ref{alg:EW} concerns the time-averaged flow $\last[\aveplay]$, not the actual recommendation, a distinction which is important for practical applications. Second, \cref{thm:EW} is not an ``anytime'' convergence result in the sense that the algorithm's learning rate $\curr[\stepalt]$ must be tuned relative to $\nRuns$ (the rate is not valid otherwise).
\beginPM
This issue can be partially resolved either via a doubling trick \citep{SS11} (at the cost of restarting the algorithm ever so often) or by using a variable learning rate of the form $\curr[\stepalt] \propto 1/\sqrt{\run}$ (at the cost of introducing an additional $\log\nRuns$ factor in the algorithm's convergence speed).
However, in either case, the rate \eqref{eq:rate-EW} only applies to the sequence of time averages, not the actual recomendations.
\endedit

%----------------------------------------------------------------------
%%% Static
%----------------------------------------------------------------------
\subsection{\Acl{XLEW} in static environments\afterhead}
\label{sec:XLEW}

We now turn our attention to the static regime, \ie when there are no exogenous variations in the network's cost functions ($\sdev = 0$);
in this case, it is reasonable to expect that the $\bigof{1/\sqrt{\nRuns}}$ convergence speed of \ac{EW} can be improved.
\beginPM
Indeed, as we show below, the \ac{BMW} potential of the game is \emph{Lipschitz smooth}, so the optimal convergence speed in static environments is $\bigof{1/\nRuns^{2}}$ \citep{Nes04,Bub15}.
\PM{I reworked the proof below to make it a bit more transparent.}\DQV{Checked}

\begin{proposition}
\label{prop:smooth}
The \ac{BMW} potential $\meanpot$ is Lipschitz smooth relative to the $L^{1}$ norm on $\flows$.
%and the $L^{\infty}$ norm on $\R^{\nRoutes}$.
In particular, its smoothness modulus is $\smooth = \length\bderiv$, where $\length = \max_{\route\in\routes} \sum_{\edge\in\route} \oneof{\edge\in\route}$ is the length of the longest path in $\routes$.
\end{proposition}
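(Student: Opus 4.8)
Here is how I would approach the proof.

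\medskip

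The plan is to write $\grad\meanpot$ in closed form via \cref{prop:pot-mean} and then estimate its variation between two feasible flows directly from the Lipschitz property of the edge costs postulated in \cref{asm:cost}. To fix conventions, recall that $\meanpot$ being $\smooth$-smooth relative to the $L^{1}$ norm on $\flows$ means precisely that its gradient is $\smooth$-Lipschitz as a map from $\parens*{\R^{\routes},\onenorm{\cdot}}$ into the dual space $\parens*{\R^{\routes},\supnorm{\cdot}}$, \ie $\supnorm{\grad\meanpot(\flow) - \grad\meanpot(\flowalt)} \leq \smooth\,\onenorm{\flow - \flowalt}$ for all $\flow,\flowalt\in\flows$. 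Thus, the whole statement reduces to establishing this inequality with $\smooth = \length\bderiv$, and by \cref{prop:pot-mean} the left-hand side equals $\max_{\route\in\routes}\abs{\meancost_{\route}(\flow) - \meancost_{\route}(\flowalt)}$, so it suffices to bound each such coordinate.

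\medskip

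The estimate rests on two one-line observations. First, the \emph{load variation}: from the definition \eqref{eq:load} of the load map, $\load_{\edge}(\flow) - \load_{\edge}(\flowalt) = \insum_{\routealt\in\routes}\oneof{\edge\in\routealt}\parens*{\flow_{\routealt} - \flowalt_{\routealt}}$, whence $\abs{\load_{\edge}(\flow) - \load_{\edge}(\flowalt)} \leq \insum_{\routealt\in\routes}\abs{\flow_{\routealt} - \flowalt_{\routealt}} = \onenorm{\flow - \flowalt}$ for every edge $\edge\in\edges$. Second, the \emph{gradient variation}: fixing a path $\route\in\routes$ and writing $\meancost_{\route}(\flow) = \insum_{\edge\in\route}\exof{\cost_{\edge}(\load_{\edge}(\flow);\sample)}$ as in \cref{prop:pot-mean}, the triangle inequality, the elementary bound $\abs{\exof{X}}\leq\exof{\abs{X}}$, and the pointwise Lipschitz estimate of \cref{asm:cost} (legitimate because all loads lie in $[0,\masssum]$, and the bound is preserved after taking expectations) give
\begin{equation*}
\abs{\meancost_{\route}(\flow) - \meancost_{\route}(\flowalt)}
	\leq \bderiv\insum_{\edge\in\route}\abs{\load_{\edge}(\flow) - \load_{\edge}(\flowalt)}
	\leq \bderiv\insum_{\edge\in\route}\onenorm{\flow - \flowalt}
	\leq \length\bderiv\,\onenorm{\flow - \flowalt},
\end{equation*}
where the final inequality uses that $\route$ contains at most $\length$ edges, by the very definition of $\length$. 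Since $\route\in\routes$ was arbitrary, taking the maximum over $\route$ yields $\supnorm{\grad\meanpot(\flow) - \grad\meanpot(\flowalt)} \leq \length\bderiv\,\onenorm{\flow - \flowalt}$, which is exactly the asserted $L^{1}$-smoothness with modulus $\smooth = \length\bderiv$.

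\medskip

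There is no genuine obstacle in this argument beyond bookkeeping; the two points that merit attention are (i) the duality convention \textendash\ the gradient-Lipschitz estimate must be measured in $\supnorm{\cdot}$, which is the norm dual to $\onenorm{\cdot}$ \textendash\ and (ii) the legitimacy of the closed form $\grad\meanpot = \meancost$, which relies on differentiating under the expectation sign and has already been justified via dominated convergence in the proof of \cref{prop:pot-mean}. I would also note in passing that the bound $\abs{\load_{\edge}(\flow) - \load_{\edge}(\flowalt)} \leq \onenorm{\flow - \flowalt}$ can be tightened to $\tfrac12\onenorm{\flow - \flowalt}$ (the coordinates of $\flow - \flowalt$ sum to zero over $\routes$), which would shave the modulus down to $\tfrac12\length\bderiv$; since this factor is immaterial for the convergence rates that follow, I keep the simpler constant $\length\bderiv$.
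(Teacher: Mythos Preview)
Your proof is correct and follows essentially the same approach as the paper: express the gradient via \cref{prop:pot-mean}, apply the edge-wise Lipschitz bound from \cref{asm:cost}, and count the edges in the longest path. Your version is in fact slightly more direct \textendash\ you bound $\abs{\load_{\edge}(\flow) - \load_{\edge}(\flowalt)} \leq \onenorm{\flow - \flowalt}$ immediately, whereas the paper carries the indicator $\oneof{\edge\in\routealt}$ through a sum-swap before arriving at the same constant $\length\bderiv$.
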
    

\begin{Proof}[Proof of \cref{prop:smooth}]
It suffices to show that the gradient $\nabla\meanpot$ of $\meanpot$ is Lipschitz continuous with respect to the (primal) $L^{1}$ norm on $\flows$ and the (dual) $L^{\infty}$ norm on $\R^{\routes}$.
Indeed, for all $\flow,\flowalt\in\flows$, combining \cref{prop:pot-mean,asm:cost} yields
\begin{align}
\supnorm{\nabla\meanpot(\flow) - \nabla\meanpot(\flowalt)}
	&= \max_{\route\in\routes} 
		\abs{\meancost_{\route}(\flow) - \meancost_{\route}(\flowalt)}
	\tag*{[by \cref{prop:pot-mean}]}
	\\
%	&= \max_{\route\in\routes}
%		\sum_{\edge\in\route}
%			\ex_{\sample} \bracks{\abs{\cost_{\edge}(\load_{\edge}(\flow);\sample) - \cost_{\edge}(\load_{\edge}(\flowalt);\sample)}}
%	\tag{definition of $\meancost_{\route}$}
%	\\
	&\leq \max_{\route\in\routes} \sum_{\edge \in \route} \lips \abs*{  \load_{\edge}(\flow) - \load_{\edge}(\flowalt)}
	\tag*{[by \cref{asm:cost}]}
	\\
	&\leq \lips \cdot
		\max_{\route\in\routes}
		\sum_{\edge\in\edges} \oneof{\edge\in\route}
		\sum_{\routealt\in\routes}
			\oneof{\edge\in\routealt} \abs{\flow_{\routealt} - \flowalt_{\routealt}}
	\tag*{[definition of $\load_{\edge}$]}
	\\
	&\leq \lips \cdot
		\sum_{\edge\in\edges}
		\sum_{\routealt\in\routes}
			\oneof{\edge\in\routealt} \abs{\flow_{\routealt} - \flowalt_{\routealt}}
	\tag*{[since $\oneof{\edge\in\route} \leq 1$]}
	\\
	&\leq \lips \cdot
		\max_{\route\in\routes} \sum_{\edge\in\edges} \oneof{\edge\in\route} \cdot
		\sum_{\routealt\in\routes} \abs{\flow_{\routealt} - \flowalt_{\routealt}}
	= \length \lips \onenorm{\flow - \flowalt}
	\tag*{[by Cauchy-Schwarz]}
\end{align}
so our claim follows.
\end{Proof}

Given that $\meanpot$ is smooth, the iconic $\bigof{1/\nRuns^{2}}$ rate mentioned above can be achieved by the accelerated gradient algorithm of \citet{Nes83}.
However, if \citeauthor{Nes83}'s algorithm were applied ``off the shelf'', the constants involved would be linear in $\nRoutes$ (the problem's dimensionality), and hence exponential in the size of the network.
Instead, building on ideas by \citet{AZO17} and \citet{KBB15}, this problem can be overcome by combining the acceleration mechanism of \citet{Nes83} with the dimensionally-efficient \ac{EW} template.
The resulting \acdef{XLEW} method proceeds in pseudocode form as follows:
\endedit

%----------------------------------------------------------------------
%%% ACCELEWEIGHT algorithm begins here

\begin{algorithm}[H]
\DontPrintSemicolon
\small

\KwIn{Smoothness parameter $\smooth = \length\bderiv$}

\textbf{Initialize}
	$\init[\dstate] \gets \zer$;\,
	$\beforeinit[\state] \gets \zer$;\,
	$\beforeinit[\step] \gets 0$;\,
	$\beforeinit[\stepalt] \gets (\nPairs \massmax \smooth)^{-1}$
	\label{line:XLEW_initial}\;

\For{$\run = \running$}
{
		set $\curr[\pstate] \gets \logit (\curr[\dstate])$
	\tcp*[r]{exploratory flow obtained from path scores}
	\label{line:XLEW_ppoint}
		set $\curr[\state] \gets \prev[\step] \prev[\state] + (1 - \prev[\step]) \curr[\pstate]$
	\tcp*[r]{average with previous state}
	\label{line:XLEW_flow}%
		set $\curr[\stepalt]
			\gets \prev[\stepalt]
			+ \beforeinit[\stepalt]/2
			+ \sqrt{\prev[\stepalt] \beforeinit[\stepalt] + (\beforeinit[\stepalt]/2)^{2}}$
	\tcp*[r]{update step-size}
	\label{line:XLEW_stepalt}
		set $\curr[\step] \gets {\prev[\stepalt]} \big/ {\curr[\stepalt]} $
	\tcp*[r]{update averaging weight} 
	\label{line:XLEW_step}
		set $\curr[\pstateavg] \gets \curr[\step] \curr[\state] + (1 - \curr[\step] ) \curr[\pstate]$
		and
		get $\curr[\signalavg] \gets \cost \parens*{\curr[\pstateavg],\curr[\sample]}$
	\tcp*[r]{route and measure costs}
	\label{line:XLEW_dpoint}
		set $\next[\dstate]
			\gets \curr[\dstate]
			- (1 - \curr[\step]) \curr[\stepalt]\curr[\signalavg]$
	\tcp*[r]{update path scores}  
}

\caption{\Acf{XLEW}}
\label{alg:XLEW}
\acused{XLEW}
\end{algorithm}

%%% ACCELEWEIGHT algorithm ends here
%----------------------------------------------------------------------

%Importantly, the step-size of Algorithm \ref{alg:XLEW} is finetuned relative to the smoothness modulus of $\meanpot$, which is assumed known to the navigation interface.
At a high level, \ac{XLEW} follows the acceleration template of \citet{Nes83}, but replaces Euclidean projections with the logit map \eqref{eq:logit}.
With this in mind, we obtain the following convergence guarantee:

\begin{theorem}
\label{thm:XLEW}
The sequence of flow profiles $\init,\afterinit,\dots$ generated by \ac{XLEW} \textpar{\cref{alg:XLEW}}
\PMedit{in a static environment \textpar{$\sdev=0$}}
enjoys the equilibrium convergence rate
\begin{equation}
\label{eq:thm_acceleweight}
\gap \parens*{{\nRuns}} 
	\leq \frac{4\smooth \nPairs^{2} \massmax^{2} \log\parens*{\massmax\nRoutes/\masssum}}{(\nRuns-\start)^2}
	= \bigof*{\frac{\log\nRoutes}{\nRuns^2}}.
\end{equation}
%where $\flowbase\in\flows$ is any equilibrium flow of $\game$.
\end{theorem}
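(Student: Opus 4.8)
The plan is to recognize \ac{XLEW} as an accelerated mirror‑descent scheme — Nesterov's acceleration with the entropic (logit) map in place of a Euclidean projection, in the spirit of \citet{Nes83,AZO17} — and then to run the standard accelerated‑gradient energy analysis, the key simplification being that a static environment supplies \emph{exact} gradient feedback. \emph{Reduction.} When $\sdev=0$, \cref{def:regime} gives $\cost_{\edge}(\cdot\,;\sample)=\meancost_{\edge}(\cdot)$ for $\prob$‑almost every $\sample$, so by \cref{prop:pot-mean} the path‑cost feedback of \cref{line:XLEW_dpoint} is exactly $\curr[\signalavg]=\nabla\meanpot(\curr[\pstateavg])$; hence \ac{XLEW} is a deterministic first‑order method on the convex potential $\meanpot$, which by \cref{prop:smooth} is $\smooth$‑smooth relative to $\onenorm{\cdot}$ on the compact convex $\flows$. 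Two standing facts complete the setup: by \eqref{eq:logit}, $\logit$ is the mirror map generated by the entropy $\hreg(\flow)=\insum_{\route}\flow_{\route}\log\flow_{\route}$ on $\flows=\inprod_{\pair}\mass{\pair}\simplex(\routes^{\pair})$, so that \cref{line:XLEW_ppoint} followed by the score update $\next[\dstate]=\curr[\dstate]-(1-\curr[\step])\curr[\stepalt]\curr[\signalavg]$ realizes a greedy mirror step; and a short Cauchy--Schwarz estimate on the product of scaled simplices shows $\hreg$ is $\mu$‑strongly convex relative to $\onenorm{\cdot}$ on $\flows$ with $\mu=1/(\nPairs\massmax)$, so the initialization $\beforeinit[\stepalt]=(\nPairs\massmax\smooth)^{-1}$ in \cref{line:XLEW_initial} is precisely the critical calibration $\smooth\beforeinit[\stepalt]=\mu$.

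\emph{Schedule algebra.} Squaring the recursion of \cref{line:XLEW_stepalt} yields the identity $(\curr[\stepalt]-\prev[\stepalt])^{2}=\beforeinit[\stepalt]\curr[\stepalt]$; combined with \cref{line:XLEW_step} this gives $\curr[\step]\curr[\stepalt]=\prev[\stepalt]$ and $(1-\curr[\step])^{2}=\beforeinit[\stepalt]/\curr[\stepalt]$, while \cref{line:XLEW_flow,line:XLEW_dpoint} (with the index shifted) give the small‑step relation $\next[\state]-\curr[\pstateavg]=(1-\curr[\step])(\next[\pstate]-\curr[\pstate])$. The same identity forces $\sqrt{\curr[\stepalt]/\beforeinit[\stepalt]}$ to increase by at least $\tfrac12$ per step, so starting from $\sqrt{\beforeinit[\stepalt]/\beforeinit[\stepalt]}=1$ one gets $\beforelast[\stepalt]\ge\beforeinit[\stepalt](\nRuns-\start)^{2}/4$.

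\emph{Energy telescoping.} Introduce $\curr[\mathcal{E}]=\prev[\stepalt]\,(\meanpot(\curr[\state])-\min\meanpot)+D(\flowbase,\curr[\pstate])$, where $D$ is the Bregman divergence of $\hreg$ and $\curr[\pstate]=\logit(\curr[\dstate])$, and prove the one‑step bound $\next[\mathcal{E}]\le\curr[\mathcal{E}]$ by chaining, at the query point $\curr[\pstateavg]$: (i) $\smooth$‑smoothness of $\meanpot$ between $\curr[\pstateavg]$ and $\next[\state]$; (ii) convexity of $\meanpot$ at $\curr[\pstateavg]$, evaluated against both $\flowbase$ and $\curr[\state]$; and (iii) the three‑point inequality for the mirror step, $(1-\curr[\step])\curr[\stepalt]\inner{\nabla\meanpot(\curr[\pstateavg])}{\next[\pstate]-\flowbase}\le D(\flowbase,\curr[\pstate])-D(\flowbase,\next[\pstate])-\tfrac{\mu}{2}\onenorm{\next[\pstate]-\curr[\pstate]}^{2}$, which uses $\mu$‑strong convexity of $\hreg$. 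Substituting $\next[\state]-\curr[\pstateavg]=(1-\curr[\step])(\next[\pstate]-\curr[\pstate])$ and decomposing $\curr[\pstateavg]-\flowbase$ over $\curr[\state]-\flowbase$ and $\curr[\pstate]-\flowbase$, the function terms realign via $\curr[\step]\curr[\stepalt]=\prev[\stepalt]$, and the residual quadratic $\tfrac{\curr[\stepalt]\smooth}{2}(1-\curr[\step])^{2}\onenorm{\next[\pstate]-\curr[\pstate]}^{2}=\tfrac{\smooth\beforeinit[\stepalt]}{2}\onenorm{\next[\pstate]-\curr[\pstate]}^{2}$ coming from smoothness is annihilated by the term $-\tfrac{\mu}{2}\onenorm{\next[\pstate]-\curr[\pstate]}^{2}$ extracted from (iii) \emph{because} $\smooth\beforeinit[\stepalt]=\mu$. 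Telescoping over $\run$ and using that $\beforeinit[\step]=0$ and $\init[\dstate]=\zer$ force $\init[\state]=\init[\pstate]=\logit(\zer)$ — so the $\run=\start$ term collapses to the Bregman term at the dual‑averaging anchor — one obtains the master bound
\[
\gap(\nRuns)=\meanpot(\last[\state])-\min\meanpot\;\le\;\frac{D(\flowbase,\logit(\zer))}{\beforelast[\stepalt]}.
\]

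\emph{Plugging in, and the main obstacle.} Since $\logit(\zer)$ is the demand‑proportional uniform flow, i.e.\ the minimizer of $\hreg$ over $\flows$, we get $D(\flowbase,\logit(\zer))\le\hreg(\flowbase)-\hreg(\logit(\zer))=\insum_{\pair}\mass{\pair}(\log\nRoutes^{\pair}-\mathcal{H}^{\pair})\le\insum_{\pair}\mass{\pair}\log\nRoutes^{\pair}$, with $\mathcal{H}^{\pair}\ge0$ the Shannon entropy of $\flowbase^{\pair}/\mass{\pair}$; Jensen's inequality on $\insum_{\pair}(\mass{\pair}/\masssum)\log\nRoutes^{\pair}$ together with $\masssum\le\nPairs\massmax$ then yield $D(\flowbase,\logit(\zer))\le\masssum\log(\massmax\nRoutes/\masssum)\le\nPairs\massmax\log(\massmax\nRoutes/\masssum)$; combined with $\beforelast[\stepalt]\ge\beforeinit[\stepalt](\nRuns-\start)^{2}/4=(\nRuns-\start)^{2}/(4\nPairs\massmax\smooth)$ this gives exactly $\gap(\nRuns)\le 4\smooth\nPairs^{2}\massmax^{2}\log(\massmax\nRoutes/\masssum)/(\nRuns-\start)^{2}=\bigof{\log\nRoutes/\nRuns^{2}}$. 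The reduction and the constant‑chasing in the last step are routine; the real work is the one‑step energy inequality, where one must match \ac{XLEW}'s bespoke step‑size recursion to the identities $\curr[\step]\curr[\stepalt]=\prev[\stepalt]$ and $(1-\curr[\step])^{2}=\beforeinit[\stepalt]/\curr[\stepalt]$ so that the smoothness residual cancels the strong‑convexity slack exactly under the knife‑edge calibration $\smooth\beforeinit[\stepalt]=\mu$, and run the whole argument with the \emph{entropic} divergence (whose infinite diameter over the simplex is harmless here, since only $D(\flowbase,\logit(\zer))$ — not a diameter — enters). The one delicate bookkeeping point is the base case of the telescoping, which relies on $\beforeinit[\step]=0$ making $\init[\state]$ coincide with the anchor $\logit(\zer)$, so that the first energy reduces to $D(\flowbase,\logit(\zer))$ with no leftover potential gap.
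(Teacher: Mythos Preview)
Your proof is correct and takes essentially the same approach as the paper: the identical energy function $\prev[\stepalt]\,(\meanpot(\curr[\state])-\min\meanpot)+D(\flowbase,\curr[\pstate])$, the same step-size identities $(\curr[\stepalt]-\prev[\stepalt])^{2}=\beforeinit[\stepalt]\curr[\stepalt]$ and $\curr[\step]\curr[\stepalt]=\prev[\stepalt]$, the same one-step energy decrease via smoothness, convexity, and the three-point identity, and the same final bounds on $\beforelast[\stepalt]$ and $D(\flowbase,\logit(\zer))$. The only cosmetic difference is where strong convexity is invoked: the paper first upgrades the smoothness residual $\tfrac{\smooth}{2}\onenorm{\cdot}^{2}$ to $\smooth\strong\,\dkl{\cdot}{\cdot}$ (with $\strong=\nPairs\massmax$) and then cancels against the exact $-\dkl{\next[\pstate]}{\curr[\pstate]}$ from the three-point \emph{identity}, whereas you keep the residual in $\onenorm{\cdot}$ and cancel against the $-\tfrac{\mu}{2}\onenorm{\cdot}^{2}$ term coming from the three-point \emph{inequality}; the cancellation condition $\curr[\stepalt]\strong\smooth(1-\curr[\step])^{2}=1$ in the paper is exactly your $\smooth\beforeinit[\stepalt]=\mu$.
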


The proof of \cref{thm:XLEW} is based on techniques that are widely used in the analysis of accelerated methods so, to streamline our presentation, we relegate it to \cref{app:non-adaptive}.

\cref{thm:XLEW} confirms that, in the static regime, \ac{XLEW} converges to equilibrium with an optimal $\bigof[\big]{1\big/\nRuns^2}$ convergence speed, as desired.
For our purposes, it is more important to note that \cref{thm:XLEW} confirms that, in static environments, \acl{XLEW} achieves a much faster convergence rate than \ac{EW}.
However, this improvement comes with several limitations:
\PMedit{
First and foremost, as we discuss in \cref{sec:numerics}, \ac{XLEW} fails to converge altogether in the stochastic regime, so it is not a universal method (\ie an algorithm that is simultaneously order-optimal in both static and stochastic environments).
}
%Intuitively, this is due to the fact that any fluctuations induced by randomness (\ie the stochastic gradients of the \ac{BMW} potential) fail to accumulate consistent momentum towards a minimizer, so Nesterrov's acceleration has a non-vanishing ``friction term''.
%Second, the convergence rate of \ac{XLEW} concerns a sequence of routing flows which is never recommended, a disparity which limits the algorithm's applicability.
Second, to obtain the accelerated rate \eqref{eq:thm_acceleweight}, the step-size of \ac{XLEW} must be tuned with prior knowledge of the problem's smoothness modulus $\smooth$ (which is not realistically available to the learner), so the method is not adaptive.
\PMedit{These are both crucial limitations that we seek to overcome in \cref{sec:adaweight,sec:adapush}.}

%----------------------------------------------------------------------
%%% Per-iteration complexity
%----------------------------------------------------------------------
\subsection{Per-iteration complexity of each method\afterhead}
%\para{The per-iteration complexity of \ac{EW} and \ac{XLEW}}

\PMedit{We close this section with a short discussion on the per-iteration complexity of \ac{EW} and \ac{XLEW}.}
The key point of note is that \cref{alg:EW,alg:XLEW} both require $\bigof{\nRoutes}$ time and space per iteration (\ie in terms of arithmetic operations and memory storage).
This is inefficient in large-scale networks where $\nRoutes$ is exponentially large in $\graphsize$ (the size of the network), so it is not clear how to implement either of these algorithms in practice.

As discussed in \cref{sec:intro}, 
there exist variants of the \acl{EW} algorithm that \emph{can} be implemented efficiently in $\bigoh(\graphsize)$ time and space per iteration.
This is accomplished via a dynamic programming procedure known as ``weight-pushing'' \cite{TW03,GLLO07};
however, this technique is limited to boosting the computation step involving the logit mapping (Line 3 of \cref{alg:EW} and \cref{line:XLEW_ppoint} of \cref{alg:XLEW}) and cannot efficiently execute other steps of \ac{XLEW} that require $\bigof{\nRoutes}$ operations or storage bits (like the averaging steps involved in the acceleration mechanism).
To maintain the flow of our discussion, we will revisit this issue in detail in \cref{sec:adapush}.

%----------------------------------------------------------------------
%%% ADAPTIVE
%----------------------------------------------------------------------
\section{\Ac{AEW}: Adaptive learning in the presence of uncertainty}
\label{sec:adaweight}
%----------------------------------------------------------------------
%%% ADAWEIGHT
%----------------------------------------------------------------------
% !TEX root = ./Main.tex

\subsection{Adaptivity of \adaweight in optimizing equilibrium-convergence rates}
\label{sec:adaweight_statement}

To summarize the situation so far, we have seen that \expweight attains an $\bigoh(\log(\nRoutes)/\sqrt{\nRuns})$ rate, which is order-optimal in the stochastic case but suboptimal in static environments;
by contrast, \ac{XLEW} attains an $\bigoh(\log(\nRoutes)/\nRuns^{2})$ rate in static environment, but has no convergence guarantees in the presence of randomness and uncertainty.
Consequently, neither of these algorithms meets our stated objective to concurrently achieve order-optimal guarantees in both the static and stochastic cases (and without requiring prior knowledge of the problem's smoothness modulus).
%\PM{Kim, would there be an easy way ($\approx 10$ min, max) to write this bound as $\bigoh(1/\nRuns^{2} + \sdev/\sqrt{\nRuns})$}

To resolve this gap, we propose below an \acli{AEW} method \textendash\ \acs{AEW} for short \textendash\ which achieves these objectives by mixing the acceleration template of \ac{XLEW} with the \acl{DE} method of \citet{Nes07}.
We present the pseudocode of \ac{AEW}~below:
%For concreteness, we first present the method in pseudocode form below:

%The \ac{XLEW} Algorithm, despite its improvements in the deterministic setting, fails to provide good convergence rates in the stochastic setting. On the other hand, in the literature, a work that is close to our purpose is \cite{KLBC19}; in which, the authors propose an approach to simultaneously obtain convergence guarantees in both settings---deterministic and stochastic---in the general context of convex optimization. However, combining directly the result of \cite{KLBC19} with the EW template (which provides almost dimension-free convergence) to the context of routing games fails to give meaningful results: it provides a convergence rate that is polynomial in terms of the Bregman diameter---\wrt the entropy regularizer---of the set of feasible normalized flows $\flows$ (also of the set of flows $\loads$) which, unfortunately, can be infinitely large in~general.

%----------------------------------------------------------------------
%%% AdaWeight algorithm begins here

\begin{algorithm}[htbp]
\DontPrintSemicolon
\ttfamily
\small
    \SetKw{Return}{Return} %Change keyword return to Return
% 	\KwIn{A regularizer $\hreg: \R^\dims \rightarrow \cvx$}
			%%%%%%%%%%%%%%
 		 \textbf{Initialize}  $\beforeinit[\step] \gets 0$, $\beforeinit[\precom] \gets \zer$, $\init[\learn] \gets 1$ and $\init[\drecom] \gets \zer$\;
		\For{$\run = \start, \afterstart, \ldots$}{%
		set $\curr[\anchor]  \gets \sum_{\runalt=\beforestart}^{\run-1} \iter[\step] \iter[\precom] $\tcp*[r]{set an anchor primal point} \label{line:AEW-anchor}
		%------
		\tcp*[l]{the test phase:}
		set $\curr[\ptest]  \gets \logit \parens*{\learn^{\run} \curr[\drecom]}$ \label{line:AEW-flow}\;
		% exploratory flow obtained from path scores
		%
		set $\test^{\run} \gets \parens*{\curr[\step] \curr[\ptest] + \curr[\anchor] } \big/  \sum_{\runalt =\beforestart}^{\run} \iter[\step]  $
		and get $\curr[\testsignal] \gets \late(\curr[\test],\curr[\sample])$
		\tcp*[r]{reweigh with anchor + query a test point}
		\label{line:AEW-flowave} 
		set	$\curr[\dtest] \gets \curr[\drecom] - \curr[\step] \curr[\testsignal]$
		\tcp*[r]{exploratory score update} \label{line:AEW-dpointave}
		%-----------
		\tcp*[l]{the recommendation phase:}
		set $\curr[\precom] \gets \logit \parens*{ \curr[\learn] \curr[\dtest]}$  \label{line:AEW-flow12}\;
		set $\curr[\state]  \gets \parens*{\curr[\step] \curr[\precom] + \curr[\anchor]   } \big/  \sum_{\runalt =\beforestart}^{\run} \iter[\step]$
		and get $\curr[\signal] \gets \late(\curr[\state],\curr[\sample])$
		\tcp*[r]{reweigh with anchor + route and measure costs}
		\label{line:AEW-recom-flow}
		set $\next[\drecom]  \gets \curr[\drecom]  - \curr[\step] \curr[\signal]$
		\tcp*[r]{update path scores}
		\label{line:AEW-dpoint}
		%
		%
		%$\sumflow{\run+1}  \gets \sumflow{\run}  + \step^{\run} \pointhalf{\run}$\;\label{line:AEW-sumflow}
		%
		set $\next[\learn] \gets 1 \big/ \sqrt{1 + \sum_{\runalt = \beforestart}^{\run} \supnorm*{ \iter[\step] (\iter[\signal] - \iter[\testsignal])}^{2}}$
	 \tcp*[r]{update learning rate}\label{line:AEW-stepalt}
}
\caption{\Acf{AEW}}
\label{alg:AEW}
\end{algorithm}
 
%%% AdaWeight algorithm ends here
%----------------------------------------------------------------------

\revise{
The main novelty in the definition of the \ac{AEW} algorithm is the introduction of two ``extrapolation'' sequences, $\curr[\precom]$ and $\curr[\dtest]$, that venture outside the convex hull of the generated primal (flow) and dual (score) variables respectively. These leading states are subsequently averaged, and the method proceeds with an adaptive step-size rule.
%in the spirit of \citet{RS13-NIPS} and \citet{KLBC19}.
In more details, \ac{AEW} relies on three key components: 
\begin{enumerate}
[\upshape \itshape a)]
\item
A \acl{DE} mechanism for generating the leading sequences in Lines \ref{line:AEW-flow} and \ref{line:AEW-flow12};
these sequences are central for anticipating the loss landscape of the problem.
\item
An acceleration mechanism obtained from the ($\step^{\run}$)-\emph{weighted average} steps in Lines \ref{line:AEW-flowave} and \ref{line:AEW-recom-flow};
in the analysis, $\step^{\run}$ will grow as $\run$, so almost all the weight will be attributed to the state closest to the current one.
\item
An \emph{adaptive sequence of learning rates} (\cf Line~\ref{line:AEW-stepalt}) in the spirit of \citet{RS13-NIPS}, \citet{KLBC19} and \citet{ABM21}.
This choice is based on the ansatz that, if the algorithm encounters coherent gradient updates (which can only occur in static environments), it will eventually stabilize to a strictly positive value;
otherwise, it will decrease to zero at a $\Theta(1/\sqrt{\run})$ rate.
This property is crucial to interpolate between the stochastic and static regimes.
\end{enumerate}

The combination of the weighted average iterates and adaptive learning rate in \ac{AEW} is shared by the \unixgrad algorithm proposed by \citet{KLBC19}, which also provides rate interpolation in constrained problems.
However, \unixgrad requires the problem's domain to have a finite Bregman diameter \textendash\ and, albeit compact, the set of feasible flows $\flows$ has an \emph{infinite} diameter under the entropic regularizer that generates the \ac{EW} template. Therefore, \unixgrad is not applicable to our routing games.
This is the reason for switching gears to the ``primal-dual'' approach offered by the \acl{DE} template;
this primal-dual interplay provides the missing link that allows \ac{AEW} to simultaneously enjoy order-optimal convergence guarantees in both settings while maintaining the desired polynomial dependency on the problem's dimension.
}

In light of the above, our main convergence result for \acs{AEW} is as follows:

\begin{theorem}
\label{thm:AEW}
Let $\flow^{\start}, \flow^{\afterstart}, \ldots,$ be the sequence of flows recommended by \adaweight (Algorithm \ref{alg:AEW}) running with $\step^{\run}=\run$ for any $\run=\running$, then it enjoys the following equilibrium convergence rate:%
%\begin{subequations}
\begin{align}
\label{eq:AEW-stoch}
\exof*{\gap \parens*{\nRuns}}
\leq \frac{ 2\sqrt{2}\const  \sdev }{\sqrt{\nRuns}}
+ \frac{16 \smooth \sqrt{\nPairs \massmax} \const^{{3}/{2}} + \constalt}{\nRuns^{2}}
	= \bigoh\parens*{ \parens*{\log\nRoutes}^{{3}/{2}} \parens*{\frac{\sdev }{\sqrt{\nRuns}} + \frac{1}{\nRuns^2}  }}.
%
%
%\shortintertext{.} \nonumber
%\label{eq:AEW-det}
%\gap \parens*{\last[\state]} 
%	&\leq \frac{2 \smooth \const^{3/2} + \constalt }{\nRuns^{2}}
%%		\frac{
%%				2\masssum \log\parens*{\frac{\massmax\nRoutes}{\masssum}}
%%				+ 4\smooth \parens*{\nPairs \massmax}^2 \const^{\frac{3}{2}}}
%%			{\nRuns^2}
%	= \bigoh\parens*{ \frac{ \parens*{\log\nRoutes}^{{3}/{2}}}{\nRuns^2}}.
\end{align}
%\end{subequations}
Specifically, in the static case, \adaweight enjoys the sharper rate: $\gap \parens*{\nRuns} \le  \bigoh\parens*{ { \parens*{\log\nRoutes}^{{3}/{2}}} \big/  {\nRuns^2}}$. In the above expressions, $\sdev$ is as defined in \cref{eq:noisevar} while $\const$ and $\constalt$ are positive constants given by
$\const \defeq \nPairs \massmax \bracks*{2 \log \parens*{{\nRoutes\massmax}/{\masssum}} + 13} = \bigoh\parens{\log{\nRoutes}}$
and
$\constalt \defeq \masssum \log(\nRoutes \massmax/\masssum) = \bigoh\parens*{{\log\nRoutes}}$.
%$\flowbase \!\in\! \flows$ is any equilibrium flow of the mean game $\mgame$ and we denote the constant $\const:= 2 \log \parens*{{\nRoutes\massmax}/{\masssum}} +13 = \bigoh\parens*{\log(\nRoutes)}$.
\end{theorem}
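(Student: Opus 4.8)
The plan is to analyze \adaweight as an accelerated primal--dual scheme of dual-extrapolation type (in the sense of \citet{Nes07}) and to run an energy-telescoping argument in the non-Euclidean geometry induced by the entropic regularizer $\hreg$ that generates the logit map \eqref{eq:logit}. Since $\step^{\run}=\run$, set $S_{\run}\defeq\sum_{\runalt=\beforestart}^{\run}\iter[\step]=\run(\run+1)/2=\Theta(\run^{2})$; then $\curr[\anchor]=\sum_{\runalt<\run}\iter[\step]\iter[\precom]$ and the routed profile obeys the accelerated-averaging recursions $S_{\run}\curr[\state]=S_{\run-1}\prev[\state]+\curr[\step]\curr[\precom]$ and $S_{\run}(\curr[\test]-\curr[\state])=\curr[\step](\curr[\ptest]-\curr[\precom])$. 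Four facts do the heavy lifting: (i) $\hreg$ is strongly convex relative to $\onenorm{\cdot}$ on each scaled simplex $\mass{\pair}\simplex(\routes^{\pair})$, so the logit map is Lipschitz from $(\R^{\routes},\supnorm{\cdot})$ into $(\flows,\onenorm{\cdot})$ and the Fenchel coupling $\fench(\base,\mirrorvec)\defeq\hreg(\base)+\hreg^{\ast}(\mirrorvec)-\inner{\mirrorvec}{\base}$ serves as a Lyapunov term; (ii) $\meanpot$ is convex and $\smooth$-smooth relative to $\onenorm{\cdot}$ with $\smooth=\length\bderiv$ (\cref{prop:smooth}), and since \cref{asm:cost} is an $\sample$-wise Lipschitz hypothesis, the mismatch bound $\supnorm{\curr[\signal]-\curr[\testsignal]}\le\smooth\onenorm{\curr[\state]-\curr[\test]}$ holds \emph{pathwise}; (iii) the test point $\curr[\test]$ is measurable with respect to the history up to epoch $\run$, so by \cref{prop:pot-mean} the test cost $\curr[\testsignal]$ is conditionally unbiased for $\nabla\meanpot(\curr[\test])$ with fluctuation bounded in $\supnorm{\cdot}$ by $\length\sdev$; and (iv) $\gap(\nRuns)=\meanpot(\last[\state])-\min\meanpot$ by \eqref{eq:gap} and $\meanpot$ is a potential for the mean game (\cref{prop:pot-mean}).

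The first technical step is a per-iteration energy inequality. Because the dual updates in Lines~\ref{line:AEW-dpointave} and \ref{line:AEW-dpoint} form an optimistic (extragradient) step, applying the three-point prox inequality to each logit step and Young's inequality to the resulting cross term yields, for every $\base\in\flows$ and an absolute constant $c>0$,
\begin{equation}
\label{eq:AEW-sketch-prox}
\curr[\step]\inner{\curr[\signal]}{\curr[\precom]-\base}
	\;\le\; \tfrac{1}{\curr[\learn]}\bigl(\fench(\base,\curr[\learn]\curr[\drecom])-\fench(\base,\curr[\learn]\next[\drecom])\bigr)
	+ \tfrac{\curr[\learn]}{2}\,\supnorm{\curr[\step](\curr[\signal]-\curr[\testsignal])}^{2}
	- \tfrac{c}{\curr[\learn]}\,\onenorm{\curr[\precom]-\curr[\ptest]}^{2}.
\end{equation}
Chaining \eqref{eq:AEW-sketch-prox} with convexity of $\meanpot$, the accelerated-averaging recursions, and the smoothness estimate $\meanpot(\curr[\state])\le\meanpot(\curr[\test])+\inner{\nabla\meanpot(\curr[\test])}{\curr[\state]-\curr[\test]}+\tfrac{\smooth}{2}(\curr[\step]/S_{\run})^{2}\onenorm{\curr[\precom]-\curr[\ptest]}^{2}$ turns the left-hand side into a telescoping increment for $S_{\run}(\meanpot(\curr[\state])-\min\meanpot)$, up to (a) zero-mean test-point noise terms $\curr[\step]\inner{\curr[\testsignal]-\nabla\meanpot(\curr[\test])}{\curr[\ptest]-\base}$; (b) a residual $\bigof{\smooth\masssum}\,\onenorm{\curr[\precom]-\curr[\ptest]}$ coming from replacing $\nabla\meanpot(\curr[\test])$ by $\curr[\signal]$ (controlled by the \emph{pathwise} smoothness bound together with $(\curr[\step])^{2}/S_{\run}=\bigof{1}$); and (c) a noise--geometry cross term $\length\sdev\,\curr[\step]\,\onenorm{\curr[\precom]-\curr[\ptest]}$ --- terms (b) and (c) are only present when $\sdev>0$ since then $\curr[\signal],\curr[\testsignal]$ differ from their means. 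Summing over $\run=\start,\dots,\nRuns$ with $\base\in\argmin\meanpot$ and absorbing the variation of $\curr[\learn]$ via the standard correction $\sum_{\run}(1/\next[\learn]-1/\curr[\learn])\hreg(\base)$ produces a weighted-gap estimate of the schematic form
\begin{equation}
\label{eq:AEW-sketch-wgap}
S_{\nRuns}\bigl(\meanpot(\last[\state])-\min\meanpot\bigr)
	\;\lesssim\; \tfrac{\max_{\base\in\flows}\hreg(\base)}{\learn^{\nRuns+1}}
	+ \insum_{\runalt=\start}^{\nRuns}\learn^{\runalt}\,b_{\runalt}
	- c\insum_{\runalt=\start}^{\nRuns}\tfrac{1}{\learn^{\runalt}}\,\onenorm{\iter[\precom]-\iter[\ptest]}^{2}
	+ \bigl(\text{residual}+\sdev\text{-cross}+\text{zero-mean terms}\bigr),
\end{equation}
where $b_{\runalt}\defeq\supnorm{\iter[\step](\iter[\signal]-\iter[\testsignal])}^{2}$ and $\max_{\base\in\flows}\hreg(\base)=\bigof{\nPairs\massmax\log(\nRoutes\massmax/\masssum)}$ is the entropic Bregman diameter, from which the constants $\const,\constalt$ are read off.

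The crux is to close \eqref{eq:AEW-sketch-wgap} against the adaptive rule of Line~\ref{line:AEW-stepalt}, which gives $1/\learn^{\nRuns+1}=\sqrt{1+\sum_{\runalt}b_{\runalt}}$. On one side, the elementary ``sum of ratios'' bound $\sum_{t}b_{t}/\sqrt{1+\sum_{s\le t}b_{s}}\le 2\sqrt{1+\sum_{t}b_{t}}$ collapses the first two terms of \eqref{eq:AEW-sketch-wgap} to $\bigof{\max_{\base}\hreg(\base)}\sqrt{1+\sum_{\runalt}b_{\runalt}}$; on the other side, the pathwise bound $b_{\runalt}=(\iter[\step])^{2}\supnorm{\iter[\signal]-\iter[\testsignal]}^{2}\le\smooth^{2}\bigl((\iter[\step])^{2}/S_{\runalt}\bigr)^{2}\onenorm{\iter[\precom]-\iter[\ptest]}^{2}=\bigof{\smooth^{2}}\onenorm{\iter[\precom]-\iter[\ptest]}^{2}$ (this is exactly where the weights $\iter[\step]=\runalt$ are used) combined with the \emph{reverse} sum-of-ratios estimate $\sum_{\runalt}\tfrac{1}{\learn^{\runalt}}\onenorm{\iter[\precom]-\iter[\ptest]}^{2}\gtrsim\smooth^{-2}\bigl(1+\sum_{\runalt}b_{\runalt}\bigr)^{3/2}$ lets the negative curvature sum close the loop: in the \emph{static} regime ($\sdev=0$, residual/cross/zero-mean terms all absent) it forces $1+\sum_{\runalt}b_{\runalt}=\bigof{\smooth^{2}\max_{\base}\hreg(\base)}$, i.e. $\learn^{\nRuns+1}=\Theta(1)$, hence $S_{\nRuns}(\meanpot(\last[\state])-\min\meanpot)=\bigof{\smooth(\log\nRoutes)^{3/2}}$ and, after dividing by $S_{\nRuns}=\Theta(\nRuns^{2})$, the sharper rate $\bigof{(\log\nRoutes)^{3/2}/\nRuns^{2}}$; in the \emph{stochastic} regime the $\sdev$-cross term $\length\sdev\sum_{\runalt}\iter[\step]\onenorm{\iter[\precom]-\iter[\ptest]}\le\length\sdev\,\bigof{\nRuns^{3/2}}\sqrt{\sum_{\runalt}\onenorm{\iter[\precom]-\iter[\ptest]}^{2}}$ (Cauchy--Schwarz, since $\sum_{\runalt}(\iter[\step])^{2}=\bigof{\nRuns^{3}}$) enters, and dividing by $S_{\nRuns}$ produces the $\bigof{\sdev/\sqrt{\nRuns}}$ summand of \eqref{eq:AEW-stoch}; the residual (b) is absorbed into the curvature sum, and taking expectations with the conditional unbiasedness of \cref{prop:pot-mean} annihilates the zero-mean terms. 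Bookkeeping the constants through $\max_{\base}\hreg(\base)$, the logit-Lipschitz modulus, and the $\sqrt{\cdot}$ of the adaptive rate then recovers $\const=\nPairs\massmax\bracks{2\log(\nRoutes\massmax/\masssum)+13}$ and $\constalt=\masssum\log(\nRoutes\massmax/\masssum)$.

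I expect the main obstacle to be exactly the coupled, self-referential step above: showing that the negative curvature sum and the adaptive rule jointly pin down $\sum_{\runalt}b_{\runalt}$ (and therefore $\learn^{\nRuns+1}$) with the right $\nRuns$- and $\sdev$-scaling --- this is what separates a genuine $1/\nRuns^{2}$ rate from a spurious $1/\nRuns^{3/2}$, and it is also where the \emph{primal-dual} (rather than purely primal) structure of dual extrapolation is indispensable: a purely primal accelerated template under an entropic regularizer of infinite Bregman diameter (the obstruction that rules out \unixgrad for this problem) cannot close this loop. Making it rigorous additionally requires tracking the time-varying $\curr[\learn]$ through the telescoping (the off-by-one between the step at epoch $\run$, which uses $\curr[\learn]$, and the Fenchel term, which carries $\next[\learn]$), verifying the interaction of the logit-Lipschitz constant in \eqref{eq:AEW-sketch-prox} with the initialization $\init[\learn]=1$, and keeping the ``$c$'' of \eqref{eq:AEW-sketch-prox} a genuine absolute constant independent of $\smooth$ (so that the coupled inequality can be solved for an unknown smoothness). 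Once this is in place, the rest is the by-now-standard accelerated mirror-prox / dual-extrapolation bookkeeping.
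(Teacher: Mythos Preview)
Your skeleton is close to the paper's in spirit (dual-extrapolation energy inequality, adaptive closure, entropic geometry), and your static-case argument---pathwise smoothness $\supnorm{\curr[\signal]-\curr[\testsignal]}\le\smooth\onenorm{\curr[\state]-\curr[\test]}$ together with the reverse sum-of-ratios bound $\sum_{\run} b_{\run}/\next[\learn]\gtrsim(1+\sum_{\run}b_{\run})^{3/2}$---is correct and is essentially equivalent to the paper's ``$\nRuns_{0}$'' argument. The difference is the reduction step and, correspondingly, the stochastic analysis.

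The paper does \emph{not} telescope $S_{\run}\gap(\run)$ via smoothness. It uses convexity \emph{only} to obtain $S_{\nRuns}\gap(\nRuns)\le\reg_{\nRuns}(\flowbase)=\sum_{\run}\curr[\step]\inner{\nabla\meanpot(\curr[\state])}{\curr[\precom]-\flowbase}$, passes to the observed-cost surrogate $\regstoch_{\nRuns}(\flowbase)=\sum_{\run}\curr[\step]\inner{\curr[\signal]}{\curr[\precom]-\flowbase}$, and bounds the latter directly by the energy inequality (your \eqref{eq:AEW-sketch-prox}). Smoothness enters only when closing the adaptive loop, and there the paper splits
\[
\curr[\diff]\defeq(\curr[\signal]-\curr[\testsignal])-(\nabla\meanpot(\curr[\state])-\nabla\meanpot(\curr[\test])),\qquad \curr[\mindiff]\defeq\min\bigl\{\supnorm{\nabla\meanpot(\curr[\state])-\nabla\meanpot(\curr[\test])}^{2},\supnorm{\curr[\signal]-\curr[\testsignal]}^{2}\bigr\},
\]
and runs the $\nRuns_{0}$-type closure against an \emph{auxiliary} rate $\tilde{\learn}^{\run}=(1+2\sum_{\runalt<\run}(\iter[\step])^{2}\iter[\mindiff])^{-1/2}$. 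The stochastic contribution then appears in a single place, $\sqrt{\sum_{\run}(\curr[\step])^{2}\supnorm{\curr[\diff]}^{2}}$, and is controlled by the crude moment bound $\exof{\supnorm{\curr[\diff]}^{2}}\le 4\sdev^{2}$, giving the $\sdev\,\nRuns^{3/2}$ factor.

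Your stochastic argument, by contrast, produces a cross term of the form $\sdev\sum_{\run}\curr[\step]\onenorm{\curr[\precom]-\curr[\ptest]}$ and then Cauchy--Schwarz it to $\sdev\,\bigof{\nRuns^{3/2}}\sqrt{\sum_{\run}\onenorm{\curr[\precom]-\curr[\ptest]}^{2}}$. For this to yield $\bigof{\sdev/\sqrt{\nRuns}}$ after dividing by $S_{\nRuns}$ you need $\sum_{\run}\onenorm{\curr[\precom]-\curr[\ptest]}^{2}=\bigof{1}$, but your fixed-point argument establishes this only when $\sdev=0$; once $\sdev>0$ the very cross term you are trying to bound feeds back into the closure and you cannot conclude boundedness of the sum. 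Young's inequality against the negative curvature term does not help either: it leaves a residual $\sdev^{2}\sum_{\run}\curr[\learn](\curr[\step])^{2}$, and since $\curr[\learn]=\Omega(\run^{-1/2})$ generically, this sum is $\Omega(\nRuns^{5/2})$, not $\bigof{\nRuns^{3/2}}$. This is precisely why the paper avoids the smoothness-based telescoping of $S_{\run}\gap(\run)$ and isolates the noise through $\curr[\diff]$ instead. (Your ``residual (b)'' with coefficient $\bigof{\smooth\masssum}$---no $\sdev$---but declared ``only present when $\sdev>0$'' is a symptom of the same mix-up.)
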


%\Qmargincomment{We should have the $(\log P)^{\frac{3}{2}}$ in Theorem 3?}
\cref{thm:AEW} characterizes the convergence speed of \adaweight according to the number of learning iterations $\nRuns$, the number of paths $\nRoutes$ and the level of randomness $\sdev$ of the network's states (as defined in \eqref{eq:noisevar}). In the stochastic environment, $\sdev \ge 0$ and hence, \adaweight enjoys a convergence rate of order $\bigoh \left((\log \nRoutes)^{3/2} \sdev \big/ \sqrt{\nRuns} \right)$. In the static case, $\sdev = 0$ and the convergence rate is accelerated to $\bigoh \left((\log \nRoutes)^{3/2} \big/ \nRuns^2 \right)$. In instances where the cost observations become more accurate over time, \cref{thm:AEW} also allows us to achieve a smooth trade-off in the convergence rate. For example, if $\sdev = \bigoh \parens*{1/ \nRuns^{\real}}$ for some suitable $\real$, \adaweight’s rate of convergence carries a dependence of the order of $\bigoh \parens*{\nRuns^{\max\{ -\real - 1/2, -2\}}} $.

In view of these result, \cref{thm:AEW} confirms that \ac{AEW} satisfies the following desiderata:
 \begin{enumerate*}
	[(\itshape i\upshape)]
	\item
	it achieves \emph{simultaneously optimal guarantees} in both stochastic and static environments in the number of learning iterations (\ie $\bigoh(1/\sqrt{\nRuns})$ and $\bigoh(1/\nRuns^{2})$ respectively);
	\item
	the derived rates maintain a \emph{polynomial dependency} in terms of the network' s combinatorial primitives; and
	\item
	it requires \emph{no prior tuning} by the learner.
\end{enumerate*}
Moreover, unlike \expweight, the convergence of \ac{AEW} corresponds to an actual traffic flow profile that is implemented in epoch $\run$ and not the average~flow.

\para{The per-iteration complexities of \adaweight} The \adaweight method, as presented in Algorithm~\ref{alg:AEW}, while having a simple presentation, takes $\mathcal{O}\parens{\nRoutes}$ time (and space) to complete each iteration. Therefore, implementing \adaweight is inefficient. Although \adaweight shares certain elements of the \ac{EW} template, the naive application of weight-pushing \emph{fails} to obtain an implementation of \adaweight having a polynomial per-iteration complexity in the network's size. This is due to the complicated averaging steps in \adaweight (Lines~\ref{line:AEW-flow12} and \ref{line:AEW-recom-flow} of Algorithm~\ref{alg:AEW}). In \cref{sec:adapush}, we re-discuss this in more details and we propose another algorithm, called \adapush, that maintains all desired features of \adaweight and achieves an efficient per-iteration complexity (\ie sub-quadratic in the size of the underlying graph).

%
%%
%
%-----------------------------------
%------------------------------------
\subsection{Proof of \cref{thm:AEW}}
\label{sec:proof_adaweight}

\adaweight is \emph{not} merely a “convex combination” of the two non-adaptive optimal algorithms (\acceleweight and \expweight). For this reason, the proof of \cref{thm:AEW} is technically involved. Particularly, the primal-dual averaging method that we use in \adaweight create sequences of filtration-dependent step-sizes; while this is the key element allowing \adaweight to achieve the best-of-both-worlds, previously known approaches in deriving the convergence analyses (as in non-adaptive algorithmic schemes \textendash\ \expweight and \acceleweight \textendash\ and/or in \unixgrad) are not applicable in this case. To handle this new challenge, we propose a completely new way to treat the learning rate in order to make the derived bounds summable. To ensure the comprehensibility, in this section, we first present a sketch of proof explaining the high-level idea before presenting the complete proof with all technical details.

%%#############################################
\subsubsection{Sketch of proof (\cref{thm:AEW})}
Let \mbox{$	\reg_{\nRuns}(\flowbase) \defeq \sum_{\run=1}^{\nRuns}{\curr[\step]}\inner{\nabla \meanpot (\state^{\run})}{\curr[\precom ] - \flowbase}$}, the starting point of our proof is the following result: %
\begin{equation}
	\ex \bracks*{\gap \parens{{\nRuns}}} = \ex \bracks*{\meanpot( \state^{\nRuns})-\meanpot(\flowbase)} \leq {2 \ex \bracks*{\reg_{\nRuns} \parens*{\flowbase}}} \big/ {\nRuns^{2}} \textrm{ for any } \nRuns. \label{eq:lem:proof_adaEW}
\end{equation}
\eqref{eq:lem:proof_adaEW} is a standard result in working with the $\curr[\step]$-weighted averaging technique that also appears in several previous works \citep{cutkosky2019anytime,KLBC19}. For the sake of completeness, we provide the proof of \eqref{eq:lem:proof_adaEW} in \ref{sec:app_proof_lem_proof_adaEW}.

Recall that as \adaweight is run, $\grad \meanpot(\curr[\state])$ is \emph{not} observable and only the cost $\curr[\signal]$ is observed and used. Therefore, instead of $\reg_{\nRuns}(\flowbase)$, we now \emph{focus on the term} $\regstoch_{\nRuns}(\flowbase) \defeq  \sum_{\run=1}^{\nRuns} {\curr[\step]}\inner*{\signal^{\run}}{\curr[\precom]-\flowbase}$. In the stochastic case, we can prove that $\regstoch_{\nRuns}(\flowbase)$ is an unbiased estimation of \mbox{$\reg_{\nRuns}(\flowbase)$} (and in the static case, they coincide). Therefore, any upper-bound of $\regstoch_{\nRuns}(\flowbase)$ can be translated via \eqref{eq:lem:proof_adaEW} into an upper-bound of the left-hand-sides of \eqref{eq:AEW-stoch}. The key question becomes \emph{``Which upper-bound of $\regstoch_{\nRuns}(\flowbase)$ can be derived to guarantee the convergence rates in \eqref{eq:AEW-stoch}?''}

To answer the question above, we note that \adaweight is built on the dual extrapolation template with two phases: the test phase and the recommendation phase. This allows us to upper-bound $\regstoch_{\nRuns}(\flowbase)$ in terms of the ``distance'' between the pivot primal points in these phases (\ie $\curr[\ptest]$ and $\curr[\precom]$) and the difference between the costs measured in these phases (\ie $\curr[\testsignal]$ and $\curr[\signal]$). Particularly, we have
\begin{equation}
	\regstoch_{\nRuns}(\flowbase) \le  \sum_{\run=1}^{\nRuns} \func_{\textrm{primal}} (\next[\learn]) {\norm{\curr[\precom]-\curr[\ptest]}_1^{2}} + \sum_{\run=1}^{\nRuns} \func_{\textrm{dual}}(\next[\learn]) {(\curr[\step])^2 \norm*{\curr[\signal]-\curr[\testsignal]}_{\infty}^{2}}, \label{eq:proof_AEW_sketch}
\end{equation}
where $\func_{\textrm{primal}}(\next[\learn])$ and  $\func_{\textrm{dual}}(\next[\learn])$ are certain functions that also depend on other parameters of the game. Importantly, the terms in the right-hand-size of \eqref{eq:proof_AEW_sketch} are actually summable: they are bounded by the summation of two terms $\sum_{\run=1}^{\nRuns} \func(\curr[\learn]) (\curr[\step])^2  \norm{\curr[\signal]-\curr[\testsignal]}_{\infty}^{2}$ and  $\sqrt{ \sum_{\run=1}^{\nRuns} (\curr[\step])^2 \norm{[\curr[\signal]-\curr[\testsignal]]  -   [\grad \meanpot(\curr[\state]) - \grad \meanpot (\curr[\test]) ] }_{\infty}^{2}}$. Here, the first term arises when we bound $\norm{\curr[\precom]-\curr[\ptest]}$ using the smoothness of the BMW potential $\pot$. Moreover, by our \emph{special choice of adaptive sequence of learning rate} $\curr[\learn]$, there exists $\nRuns_0 \ll \nRuns$ such that only the first $\nRuns_0$ components in this summation are positive and hence, this summation is actually of order $\bigoh(1)$ (\ie it does not depend on $\nRuns$). The second term quantifies the error of the observed costs with respect to the actual gradients of the potential. This term is of order $\bigoh \parens{\sdev \nRuns^{3/2}}$ when we choose $\curr[\step] =\run$. Combine these results with \eqref{eq:lem:proof_adaEW}, we obtain the convergence rates indicated in \eqref{eq:AEW-stoch}.

%-------------------------
%----------------------------
%-------------------------------
\subsubsection{Proof with technical details (\cref{thm:AEW})}
In this section, we work with the \emph{entropy regularizer} $\hreg(\flow) \defeq \sum _{\pair \in \pairs}\sum _{ \route\in \routes^\pair}\flow^{\pair}_{\route}\log(\flow^{\pair}_{\route}), \forall \flow \in \flows$ which can be used to define the exponential weights template. Trivially, $\hreg$ is $1/\strong$-\emph{strongly convex} on $\flows$ \wrt the $\norm{\cdot}_1$ norm, where we define $\strong  \defeq \massmax \nPairs$. We will also work with the \emph{Fenchel conjugate} of $\hreg$, defined as 	$\hreg^{\ast}(\dpoint) \defeq  \max_{\flow \in \flows} \inner*{\flow}{\dpoint} - \hreg(\flow)$, and its \emph{Fenchel coupling}, defined as $\fench \parens{\flow, \dpoint} = \hreg \parens*{\flow} + {\hreg^*} \parens*{\dpoint} - \inner*{\flow}{\dpoint}$ for any $\flow \in \flows$ and any $\dpoint \in \R^{\nRoutes}$. For the sake of brevity, we will also define $\min \hreg \defeq \min_{\flowalt \in \flows} \hreg(\flow)$ and denote the $\norm{\cdot}_1$-diameter of $\flows$ by $\diam \defeq \max_{\flow, \flowalt \in \flows} \norm{\flow - \flowalt}_1$. We also denote the Kullback–Leibler divergence between two flows $\flow$ and $\flowalt$ by $\dkl{ \flow}{\flowalt }$. Note also that throughout this section, when we mention Algorithm~\ref{alg:AEW}, we understand that it is run with $\curr[\step] \defeq \run$ as chosen in \cref{thm:AEW}. 

First, we prove the following proposition (corresponding to Equation \eqref{eq:proof_AEW_sketch} in the proof~sketch):
\begin{proposition}\label{propo:energy_inequ}
	Run Algorithm~\ref{alg:AEW}, for any $\flow \in \flows$, we have:
	\begin{equation}
		\regstoch_{\nRuns}(\flowbase) \le \hreg(\flowbase) -  \min \hreg +   \constdual \sum_{\run=1}^{\nRuns}  (\curr[\step])^2 \learn^{\run+1} \norm*{\curr[\signal]-\curr[\testsignal]}_{\infty}^{2} - \sum_{\run=1}^{\nRuns}\frac{1}{2\strong\learn^{\run+1}}  \norm{\curr[\precom]-\curr[\ptest]}_1^{2}  , \label{eq:Propo_energy_inequ}
	\end{equation}
where $\constdual \defeq {\hreg(\flowbase) -  \min \hreg +\frac{\strong^2 + 2\diam^2}{2\strong} }$.
\end{proposition}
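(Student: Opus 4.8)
The plan is to read the dual iterates of \adaweight as an \emph{optimistic} (extrapolated) dual-averaging recursion on the mean \ac{BMW} potential $\meanpot$, run with the entropic regularizer $\hreg$ rescaled at stage $\run$ by the adaptive rate $\learn^\run$, and then to push through the corresponding Fenchel-coupling energy argument. Concretely, writing $\fench(\flow,\dpoint) = \hreg(\flow) + \hreg^{\ast}(\dpoint) - \inner{\flow}{\dpoint}$, Lines~\ref{line:AEW-flow}--\ref{line:AEW-dpoint} amount to $\curr[\ptest] = \logit(\learn^\run\curr[\drecom])$, $\curr[\dtest] = \curr[\drecom] - \curr[\step]\curr[\testsignal]$, $\curr[\precom] = \logit(\learn^\run\curr[\dtest])$ and $\next[\drecom] = \curr[\drecom] - \curr[\step]\curr[\signal]$ --- i.e.\ a test step using the gradient surrogate $\curr[\testsignal]$, followed by a correction step \emph{from the same anchor} $\curr[\drecom]$ using $\curr[\signal]$. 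The only analytic input about $\hreg$ that I will use is that it is $1/\strong$-strongly convex relative to $\onenorm{\cdot}$ with $\strong = \massmax\nPairs$, so that $\hreg^{\ast}$ is $\strong$-smooth relative to $\supnorm{\cdot}$; in particular the whole argument is pathwise, and $\curr[\signal], \curr[\testsignal]$ may be treated as arbitrary elements of $\R^{\routes}$.

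The first step is a per-stage estimate. Applying the descent inequality for the $\strong$-smooth $\hreg^{\ast}$ twice --- once along $\learn^\run\curr[\drecom] \mapsto \learn^\run\curr[\dtest]$ (the test step) and once along $\learn^\run\curr[\drecom] \mapsto \learn^\run\next[\drecom]$ (the recommendation step) --- and keeping both quadratic remainders with their signs, the first-order parts combine into $-\curr[\step]\inner{\curr[\testsignal]}{\curr[\ptest] - \flowbase}$ and $-\curr[\step]\inner{\curr[\signal]}{\curr[\precom] - \flowbase}$, while the leftover cross-term $\curr[\step]\inner{\curr[\signal] - \curr[\testsignal]}{\curr[\precom] - \curr[\ptest]}$ is split by Young's inequality with weight $1/(\strong\learn^{\run+1})$. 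This is the standard extragradient cancellation, and it yields, for each $\run$,
\[
\curr[\step]\inner{\curr[\signal]}{\curr[\precom] - \flowbase}
\le \bigl(\mathcal{E}_\run - \mathcal{E}_{\run+1}\bigr)
+ \bigl(\tfrac{1}{\learn^{\run+1}} - \tfrac{1}{\learn^{\run}}\bigr)\delta_\run
+ c_{1}(\curr[\step])^{2}\learn^{\run+1}\supnorm{\curr[\signal] - \curr[\testsignal]}^{2}
- \tfrac{1}{2\strong\learn^{\run+1}}\onenorm{\curr[\precom] - \curr[\ptest]}^{2},
\]
where $\mathcal{E}_\run = \tfrac{1}{\learn^\run}\fench(\flowbase,\learn^\run\curr[\drecom])$ is the rescaled Fenchel energy, $\delta_\run$ is a residual Bregman term bounded through the $\onenorm{\cdot}$-diameter $\diam$ of $\flows$, $c_1$ is an absolute multiple of $\strong$, and the second bracket is the ``drift'' created by the change of regularizer $\hreg/\learn^\run \to \hreg/\learn^{\run+1}$ between consecutive stages. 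Summing over $\run = 1,\dots,\nRuns$ and using the initialization $\init[\learn] = 1$, $\init[\drecom] = \beforeinit[\precom] = \zer$ (so that $\fench(\flowbase,\learn^{1}\init[\drecom]) = \fench(\flowbase,\zer) = \hreg(\flowbase) - \min\hreg$ and $\logit(\zer) = \argmin_{\flows}\hreg$), the $\mathcal{E}_\run$'s telescope and I am left with a boundary term $\lesssim (\hreg(\flowbase) - \min\hreg)/\learn^{\nRuns+1}$, the summed quadratic remainders, and the summed drift.

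Here is the one genuinely new point, and what I expect to be the main obstacle: the factor $1/\learn^{\nRuns+1}$. Since $\learn^{\run+1}$ is \emph{filtration-dependent} --- it is built from $\curr[\signal] - \curr[\testsignal]$, which is not yet available when $\curr[\ptest]$ is formed --- the usual ``non-increasing step-size'' telescoping cannot bound $1/\learn^{\nRuns+1}$ by $\bigoh(1)$, and the earlier analyses behind \ac{EW}/\ac{XLEW} and \unixgrad do not apply. I would dispose of it by using the explicit shape $1/\learn^{\run+1} = \sqrt{1 + \sum_{\runalt\le\run}a_\runalt}$ with $a_\run = (\curr[\step])^{2}\supnorm{\curr[\signal] - \curr[\testsignal]}^{2}$, together with the elementary estimate $\sum_{\run} a_\run\big/\sqrt{1 + \sum_{\runalt\le\run}a_\runalt} \ge 1/\learn^{\nRuns+1} - 1$; this rewrites $1/\learn^{\nRuns+1} \le 1 + \sum_\run a_\run\learn^{\run+1}$, i.e.\ it re-expresses the offending factor as a constant plus a multiple of the \emph{already present} positive sum $\sum_\run (\curr[\step])^{2}\learn^{\run+1}\supnorm{\curr[\signal] - \curr[\testsignal]}^{2}$. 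Feeding this back, the ``$1$'' leaves the leading term $\hreg(\flowbase) - \min\hreg$ and the ``$\sum$'' inflates the coefficient of the positive quadratic sum; collecting all the absolute constants produced along the way --- from the $\strong$-smoothness remainder, from Young's inequality, from the residual terms $\delta_\run$ (bounded via $\diam$), and from this last trade --- gives exactly $\constdual = \hreg(\flowbase) - \min\hreg + (\strong^{2} + 2\diam^{2})/(2\strong)$, and \eqref{eq:Propo_energy_inequ} follows.

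Two bookkeeping remarks. The alignment of the step-size indices in the per-stage estimate (which $\learn$ each remainder carries) uses only the monotonicity $\learn^{\run+1} \le \learn^\run$. And the negative term $-\sum_\run \onenorm{\curr[\precom] - \curr[\ptest]}^{2}\big/(2\strong\learn^{\run+1})$ is deliberately retained rather than discarded: in the proof of \cref{thm:AEW} it will be matched against $\constdual\sum_\run (\curr[\step])^{2}\learn^{\run+1}\supnorm{\curr[\signal] - \curr[\testsignal]}^{2}$ via the smoothness of the \ac{BMW} potential (\cref{prop:smooth}) and the averaging identity $\curr[\state] - \curr[\test] = \curr[\step](\curr[\precom] - \curr[\ptest])\big/\sum_{\runalt=\beforestart}^{\run}\iter[\step]$, once $\learn^\run$ has dropped below an explicit threshold --- which is ultimately what makes the bound summable.
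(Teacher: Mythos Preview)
Your proposal is correct and matches the paper's own proof essentially step for step: the paper too introduces the rescaled Fenchel coupling $\tfrac{1}{\learn^\run}\fench(\flowbase,\learn^\run\curr[\drecom])$ as the energy (via an auxiliary point $\mirrorvec^\run=\logit(\learn^\run\next[\drecom])$ and the three-point Bregman identity, which is just another packaging of your ``two descent inequalities for the $\strong$-smooth $\hreg^{\ast}$ plus Young''), obtains the same $A_1+A_2-A_3$ decomposition, bounds each piece exactly as you describe, and then disposes of the offending $1/\learn^{\nRuns+1}$ by the same elementary inequality $1/\learn^{\nRuns+1}\le 1+\sum_\run(\curr[\step])^2\learn^{\run+1}\supnorm{\curr[\signal]-\curr[\testsignal]}^2$, after which collecting constants gives $\constdual$ on the nose. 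Your closing remark about why the negative term is retained is also exactly how the paper uses it downstream in \cref{lem:AEW_diff}.
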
 

\paragraph{Proof of \cref{propo:energy_inequ}.} Consider an ``intermediate'' point \mbox{$\mirrorvec^{\run} \defeq \logit\parens*{\curr[\learn] \dstate^{\run+1}}$}, we focus on the terms $\curr[\step] \inner{\curr[\signal]}{\curr[\mirrorvec]  - \flowbase } $ and $\curr[\step] \inner{\curr[\signal]}{ \curr[\precom] - \curr[\mirrorvec]  } $. These terms sum up to $\curr[\step] \inner{\curr[\signal]}{\curr[\precom]  - \flowbase }$ which defines $\regstoch_{\nRuns}(\flowbase)$. 
%
%
% we look for an upper bound of $\curr[\step] \inner{\curr[\signal]}{\curr[\mirrorvec]  - \flowbase } $; moreover, we desire that this upper-bound is summable with $\run \inner{\curr[\signal]}{ \curr[\precom] - \curr[\mirrorvec]  } $.
First, from the update rule of $\curr[\dstate]$ in Algorithm \ref{alg:AEW}, we have 
	\begin{align}
		 {\curr[\step]}\inner*{\curr[\signal]}{\mirrorvec^{\run} \!-\! \flowbase}  
		 \!=\! \frac{1}{\curr[\learn]}\braket{ \curr[\learn]\dstate^{\run} \! - \! \curr[\learn]\dstate^{\run+1}}{\mirrorvec^{\run}-\flowbase}  &= \frac{1}{\curr[\learn]}\bracks*{- \hreg(\mirrorvec^{\run}) \!-\! \hreg^{\ast} \parens*{ \curr[\learn] \dstate^{\run+1} } + \inner*{\flowbase}{\curr[\learn] \dstate^{\run+1} - \curr[\learn] \dstate^{\run}} + \inner*{\mirrorvec^{\run}}{\curr[\learn] \dstate^{\run}}}  \nonumber \\
		  %
%		   & \le \frac{1}{\curr[\learn]}\bracks*{\fench(\flowbase,\curr[\learn]\dstate^{\run})  -  \fench(\flowbase,\curr[\learn]\dstate^{\run+1}) -  \fench(\mirrorvec^{\run},\curr[\learn]\dstate^{\run})}\\
		   %
		   & \le \frac{1}{\curr[\learn]}\bracks*{\fench(\flowbase,\curr[\learn]\dstate^{\run})  \!-\!  \fench(\flowbase,\curr[\learn]\dstate^{\run+1}) \!-\!   \dkl{\mirrorvec^{\run}}{\curr[\ptest]}}. \nonumber
%		   \label{eq:AEW_Fenchel_KL}
	\end{align}
	Here, the last inequality comes from the definition of Fenchel coupling and the fact that \mbox{$\curr[\ptest] = \logit(\curr[\learn] \curr[\drecom] )$}. Now, apply the three-point inequality with Bregman divergence (Lemma 3.1 of \cite{CT93}) to the KL-divergence,
	
%	 Now, by the definition of $\curr[\precom]$, we can justify that \mbox{$\curr[\precom] = \argmin_{\flow \in \flows} \braces*{ \inner*{-\run \curr[\learn] \curr[\testsignal] }{\curr[\ptest]-\flow}+\dkl{\flow}{\curr[\ptest]}} $} which implies that $\dkl{\mirrorvec^{\run}}{\curr[\ptest]} - \dkl{\curr[\precom]} {\curr[\ptest]} \ge  \inner*{-\run \curr[\learn] \curr[\testsignal] }{\curr[\ptest]-\curr[\precom]}  -  \inner*{-\run \curr[\learn] \curr[\testsignal]}{\curr[\ptest]-\mirrorvec^{\run}} =  \inner*{\run \curr[\learn] \curr[\testsignal]}{\curr[\precom] - \curr[\mirrorvec]} $ and hence, 
	%
	\begin{equation*}
				\frac{1}{\learn^{\run}}\dkl{\mirrorvec^{\run}}{\curr[\ptest]} -  \frac{1}{\learn^{\run}}\dkl{\curr[\precom]}{\curr[\ptest]}  - \frac{1}{\learn^{\run}}\dkl{\mirrorvec^{\run}}{\curr[\precom]} = \inner{\grad \hreg(\ptest)  - \grad \hreg(\precom)}{\curr[\precom] - \curr[\mirrorvec]}   
%				\ge \dkl{\mirrorvec^{\run}}{\curr[\ptest]} - \dkl{\curr[\precom]} {\curr[\ptest]} 
				\ge \curr[\step] \inner*{\curr[\testsignal] }{\curr[\precom] - \mirrorvec^{\run} }.	
	\end{equation*}
	%
	%
	%\begin{align}
	%	\frac{1}{\learn^{\run}}\dkl{\mirrorvec^{\run}}{\curr[\ptest]} - \frac{1}{\learn^{\run}}\dkl{\mirrorvec^{\run}}{\curr[\precom]} -  \frac{1}{\learn^{\run}}\dkl{\curr[\precom]}{\curr[\ptest]}  & \ge \frac{1}{\learn^{\run}} \bracks*{\dkl{\mirrorvec^{\run}}{\curr[\ptest]} -  \dkl{\curr[\precom]} {\curr[\ptest]}} \ge - \run \inner*{\curr[\testsignal] }{\curr[\precom] - \mirrorvec^{\run} } \nonumber.
	%\end{align}
	%
%	Here, the last inequality comes from the fact that \mbox{$\curr[\precom] = \argmin_{\flow \in \flows} \braces*{ \inner*{-\run \curr[\learn] \curr[\testsignal] }{\curr[\ptest]-\flow}+\dkl{\flow}{\curr[\ptest]}} $}.
 Combining the two inequalities derived above, we have:
	\begin{align}
	& \regstoch_{\nRuns}(\flowbase) \nonumber
	=	 \sum_{\run=1}^{\nRuns} \curr[\step] \inner{\curr[\signal]}{\curr[\mirrorvec]  - \flowbase } + \run \inner{\curr[\signal]}{ \curr[\precom] - \curr[\mirrorvec]  } \nonumber \\ 
	 \le	&	\underbrace{\sum_{\run=1}^{\nRuns}\frac{1}{\curr[\learn]} \bracks*{\fench(\flowbase,\curr[\learn]\dstate^{\run}) \!-\!  \fench(\flowbase, \curr[\learn]\dstate^{\run+1})} }_{\textstyle \defeq \const_1}    + 
	 \underbrace{\bracks*{ \sum_{\run=1}^{\nRuns} {\curr[\step]}\inner*{\curr[\signal]- \curr[\testsignal]}{ \curr[\precom] \!-\! \mirrorvec^{\run}} 
	 		 \!-\! 
	 		 \sum_{\run=1}^{\nRuns}  \frac{1}{\curr[\learn]}\dkl{\mirrorvec^{\run}}{\curr[\precom]}}}_{\textstyle \defeq \const_2} 
		-  \underbrace{\sum_{\run=1}^{\nRuns}\frac{1}{\curr[\learn]}\dkl{\curr[\precom]}{\curr[\ptest]}}_{\textstyle \defeq \const_3}. \label{eq:lem:energy1}  
	\end{align}

Now, we look for upper-bounds of the three terms in the right-hand-side of \eqref{eq:lem:energy1}. First, we trivially have
	% $\frac{1}{\learn^{\run+1}}\fench(\flow,\learn^{\run+1}\dstate^{\run+1})-\frac{1}{\learn^{\run}}\fench(\flow,\learn^{\run}\dstate^{\run+1}) = \parens*{\frac{1}{\learn^{\run+1}}-\frac{1}{\learn^{\run}}}\hreg(\flowbase) \!+\! \bracks*{\frac{1}{\learn^{\run+1}}\hreg^{\ast}(\learn^{\run+1}\dstate^{\run+1}) \!-\! \frac{1}{\learn^{\run}}\hreg^{\ast}(\learn^{\run}\dstate^{\run+1})}  \le \parens*{\frac{1}{\learn^{\run+1}}-\frac{1}{\learn^{\run}}} \bracks*{\hreg(\flowbase) -  \min \hreg}$.
	%\footnote{Here, the inequality comes from the fact that the function \mbox{$\psi(\learn) \defeq \frac{1}{\learn} \left[ \hreg^{\ast}(\learn \dstate^{\run+1}) + \min \hreg \right]$} is a non-decreasing function (one can check that $ \psi^{\prime}(\learn) \ge 0$) and since $\learn^{\run+1} \le \learn^{\run}$, we have $\psi \parens*{\learn^{\run+1}} \le \psi \parens*{\learn^{\run}}$.} 
	%As a consequence, we obtain:
	%
	\begin{align}
		\const_1
%		\nonumber \\
		%
%		= & \frac{1}{\learn^{\start}}\fench(\flow,\learn^{\start} \dstate^{\start}) + \sum_{\run=\start}^{\nRuns} \bracks*{  \frac{1}{\learn^{\run+1}}\fench(\flow,\learn^{\run+1}\dstate^{\run+1}) - \frac{1}{\learn^{\run}}\fench(\flow,\learn^{\run}\dstate^{\run+1}) } -    \frac{1}{\learn^{\nRuns+1}}\fench(\flow, \learn^{\nRuns+1}\dstate^{\nRuns+1})\nonumber \\
		%
		%
		\le \frac{1}{\learn^{\start}}\fench(\flowbase,\learn^{\start} \dstate^{\start}) + \parens*{\frac{1}{\learn^{\nRuns+1}}-\frac{1}{\learn^{\start}}} \bracks*{\hreg(\flowbase) -  \min \hreg}
%		  \tag*{(note that $\fench(\flowbase, \learn^{\nRuns+1}\dstate^{\nRuns+1}) \ge 0$)}
		%
	%	= & \frac{1}{\learn^{\start}} \bracks*{\hreg\parens{\flow} +\hreg^{\ast} \parens*{\learn^{\start} \dstate^{\start} }  - \inner*{\learn^{\start} \dstate^{\start}}{\flow}}  + \parens*{\frac{1}{\learn^{\nRuns+1}}-\frac{1}{\learn^{\start}}} \bracks*{\hreg(\flowbase) -  \min \hreg} \nonumber\\
		%
		= \frac{1}{\learn^{\nRuns+1}} \bracks*{\hreg(\flowbase) -  \min \hreg}. \label{eq:AEW-first_term}
	\end{align}
	%
	%

	%--------------------------
	Second, for any $\nRuns$, from the Cauchy-Schwarz inequality and the fact that $	\norm*{\state-\statealt}_1\norm*{\dstate-\alt\dstate}_{\infty}=\min_ {\real>0}\braces*{\frac{\real}{2}\norm*{\state-\statealt}_1^{2}+\frac{1}{2\real}\norm*{\dstate-\alt\dstate}_{\infty}^{2}}$ for any $\state, \statealt, \dstate,\alt\dstate \in \mathbb{R}^{\dims}$,\footnote{Consider the function \mbox{$\psi(\real)=\frac{\real}{2}\norm*{\state-\statealt}_1^{2}+\frac{1}{2\real}\norm*{\dstate - \alt\dstate}_{\infty}^{2}$}; then $\real^* =\norm*{\state-\statealt}_1\norm*{\dstate-\alt\dstate}_{\infty}$ is the minimizer of $\psi$.} we have

	\begin{align}
		&  {\curr[\step]}\braket{\curr[\signal] \! -\! \curr[\testsignal]}{\mirrorvec^{\run} \!-\! \curr[\precom]}  \le     \norm*{ \curr[\step]( \curr[\signal] \!-\!   \curr[\testsignal])}_{\infty}   \norm*{ \mirrorvec^{\run} \!- \! \curr[\precom] }_{1} \le      { \frac{(\curr[\step])^{2} \strong\learn^{\run+1}}{2}  \norm*{\curr[\signal]-\curr[\testsignal]}_{\infty}^{2} + \frac{1}{2\strong\learn^{\run+1}}  \norm*{\mirrorvec^{\run}-\curr[\precom]}_{1}^{2}}. \label{eq:AdaEW_det_term2_part1} 
	\end{align} 
	%
%	Next, recall that the $\hreg$ is a $\frac{1}{\strong}$-strongly convex function over $\flows$ \wrt the $\norm*{\cdot}_1$ norm; therefore, we have $\frac{1}{\curr[\learn]}\dkl{\mirrorvec^{\run}}{\curr[\precom]} \geq \frac{1}{2\strong \curr[\learn]}\norm*{\mirrorvec^{\run}-\curr[\precom]}_{1}^{2}$. Combine this with \eqref{eq:AdaEW_det_term2_part1} then recall the notation $\diam$ as the $\norm{\cdot}_1$-diameter of $\flows$, we have: 
	%
Combine this with the strong convexity of $\hreg$, we obtain:
	\begin{align}
		 \const_2  \le  \frac{\strong}{2}\sum_{\run=1}^{\nRuns}{(\curr[\step])}^{2} \learn^{\run+1} \norm*{\curr[\signal] \!-\! \curr[\testsignal]}_{\infty}^{2} + \frac{1}{2\strong}  \sum_{\run=1}^{\nRuns} \parens*{\frac{1}{\learn^{\run+1}}  -  \frac{1}{\learn^{\run}} } \norm{\curr[\mirrorvec]  - \curr[\ptest]}_1^2 
		\le \frac{\strong}{2}\sum_{\run=1}^{\nRuns}{(\curr[\step])}^{2} \learn^{\run+1} \norm*{\curr[\signal]-\curr[\testsignal]}_{\infty}^{2} + \frac{\diam^2}{2 \strong}  \parens*{\frac{1}{\learn^{\nRuns+1}}  -  1 } . 
		\label{eq:AdaEW_det_term2_conclude}
	\end{align}

	Third, recall the notion $\diam$ denoting the $\norm{}_1$-diameter of $\flows$, we have
	\begin{equation}
		\const_3 \ge  \sum_{\run=1}^{\nRuns}  \frac{1}{2 \strong \learn^{\run}}  \norm{ \curr[\precom] - \curr[\ptest]}_{1}^2  =  \sum_{\run=1}^{\nRuns}\frac{1}{2\strong\learn^{\run+1}}  \norm{\curr[\precom]-\curr[\ptest]}_1^{2}  - \frac{\diam^2}{2\strong}  \parens*{\frac{1}{\learn^{\nRuns+1}}-1} . \label{eq:AEW_term_3}
	\end{equation}
	
	Moreover, from the update rule of $\next[\learn]$ and Lemma 2 of \cite{KLBC19} (also presented in \cite{LYC18,MS10}), we have:
%	 and applied with $\real^{0}=\stepada$ and $\real^{\run} = {\run}^2 \norm{   \curr[\signal] -   \curr[\testsignal]}_{\infty}^2$, $\forall \run \ge 1$),
	
	\begin{align*}
		\frac{1 }{ \learn^{\nRuns+1}} =    \sqrt{ \stepada +  \sum_{\run=1}^{\nRuns} {(\curr[\step])}^2 \norm{    \curr[\signal] -    \curr[\testsignal]}_{\infty}^2}	\le   \sum_{\run=1}^{\nRuns}  \frac{ {(\curr[\step])}^2\norm{\curr[\signal] -   \curr[\testsignal] }_{\infty}^2}{\sqrt{\stepada + \sum_{\runalt=1}^{\run} {(\step^{\runalt})}^2 \norm{   \iter[\signal] -   \iter[\testsignal]   }_{\infty}^2} } + \stepada =   \sum_{\run=1}^{\nRuns} {(\curr[\step])}^2 \learn^{\run+1} \norm{   \curr[\signal] -  \curr[\testsignal]}_{\infty}^2      + \stepada. 
%		\label{eq:AEW_learnT+1}
	\end{align*}
	
	Apply this inequality to \eqref{eq:AEW_term_3} then apply it with \eqref{eq:AEW-first_term}, \eqref{eq:AdaEW_det_term2_conclude} into \eqref{eq:lem:energy1}, we obtain precisely \eqref{eq:Propo_energy_inequ}. \qed\\

 Now, let us define the noises of the observed costs $\curr[\testsignal]$ and $\curr[\signal]$, induced by the flow profiles $\curr[\test]$ and $\curr[\state]$, in comparison with the corresponding gradients of the BMW potential $\meanpot$ as follows:
 
 \begin{equation*}
 	\curr[\noisetest] \defeq \curr[\testsignal] - \nabla \meanpot \parens*{\curr[\test]} \textrm{ and } \curr[\noise] \defeq \curr[\signal] - \nabla \meanpot \parens*{\curr[\state]}. 
 \end{equation*} 
 By the definition of $\sdev$ (cf. \cref{eq:noisevar}), we have $\sup \braces*{\exof*{\dnorm{\curr[\noisetest]}^2 }, \exof*{\dnorm{\curr[\noise]}^2 } } \le \sdev^{2}$. From these definitions, we also deduce that $\ex \bracks*{\noisetest^{\run} \middle| \prev[\mathcal{H}] }  = \ex \bracks*{\noise^{\run} \middle|\prev[\mathcal{H}]} = \zer$ where $\prev[\mathcal{H}] = \braces*{ \state^{\run-1}, \test^{\run-1}, \sample^{\run-1}, \ldots, \state^{\start},\test^{\start}, \sample^{\start} } $ is the filtration up to time epoch $\run-1$. 
 
\cref{eq:Propo_energy_inequ} involves the difference of the actual gradients of the BMW potential $\nabla \meanpot( \curr[\state])-\nabla \meanpot(\curr[\test])$ and the difference of costs at $\state$ and $\test$, \ie $\obs^{\run}-\obsave^{\run}$ (in the static case, these differences are equal). For the sake of brevity, we define the following terms in order to analyze the gap between these differences:
\begin{equation*}
	 \curr[\mindiff]=\min\braces*{\norm{\nabla \meanpot( \curr[\state])-\nabla \meanpot(\curr[\test])}_{\infty}^{2},\norm{\obs^{\run}-\obsave^{\run}}_{\infty}^{2}} \textrm{ and }\curr[\diff]=\left[\obs^{\run}-\obsave^{\run}\right]-\left[ \nabla \meanpot( \curr[\state])-\nabla \meanpot(\curr[\test])\right].
\end{equation*}
We aim to construct the upper-bounds the last two terms in the right-hand-side of \eqref{eq:Propo_energy_inequ} in terms of $\curr[\mindiff]$ and $\curr[\diff]$. Particularly, from \eqref{eq:Propo_energy_inequ}, we can prove the following proposition:
\begin{proposition}\label{lem:AEW_diff}
	Run Algorithm~\ref{alg:AEW} and define $\tilde{\learn}^{\run} \defeq {1} \big/{\sqrt{\stepada+ 2\sum_{\runalt=1}^{\run-1} (\step^\runalt)^{2} \iter[\mindiff]}}$, we have
	\begin{equation}
		\regstoch_{\nRuns}(\flowbase) \le \hreg(\flowbase) -  \min \hreg +  \conststoch 2\sqrt{2} \parens*{\sum_{\run=1}^{\nRuns}(\curr[\step])^{2} \dnorm{\xi_{\run}}^{2}}^{1/2} + \sum_{\run=1}^{\nRuns} \func(\tilde{\learn}^{\run+1}) (\curr[\step])^{2} \mindiff^{\run},\label{eq:AEW_stoch_conclude}
	\end{equation}
	where $\conststoch \defeq \hreg(\flowbase) -  \min \hreg +\frac{\strong^2 + 3\diam^2}{2\strong} $ and $\func(\tilde{\learn}^{\run+1}) \defeq 4 \conststoch \tilde{\learn}^{\run+1}  -  \frac{1}{8 \smooth^2 \strong \tilde{\learn}^{\run+1}}$. 
\end{proposition}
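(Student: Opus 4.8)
\emph{Proof plan.} The starting point is the energy inequality \eqref{eq:Propo_energy_inequ} of \cref{propo:energy_inequ}, which bounds $\regstoch_{\nRuns}(\flowbase)$ by $\hreg(\flowbase)-\min\hreg$ plus the positive ``score-gap'' sum $\constdual\sum_{\run}(\curr[\step])^{2}\learn^{\run+1}\dnorm{\curr[\signal]-\curr[\testsignal]}^{2}$ minus the ``primal-gap'' sum $\sum_{\run}\tfrac{1}{2\strong\learn^{\run+1}}\norm{\curr[\precom]-\curr[\ptest]}_{1}^{2}$. The plan is to rewrite both sums so that they are expressed through the per-round quantities $\curr[\mindiff]$ and $\dnorm{\curr[\diff]}^{2}$ declared just before the statement; the decisive step is to replace the algorithm's genuine step-size $\learn^{\run+1}$ --- which accumulates the full observed gaps $\dnorm{\curr[\signal]-\curr[\testsignal]}^{2}$ --- by the auxiliary step-size $\tilde{\learn}^{\run+1}$, which accumulates only the ``deterministic part'' $\curr[\mindiff]$, with all the residual discrepancy funnelled into the fluctuation terms $\dnorm{\curr[\diff]}^{2}$.

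For the score-gap sum I would first apply the square-root summation lemma already used in the proof of \cref{propo:energy_inequ} (Lemma~2 of \cite{KLBC19}, see also \cite{LYC18,MS10}), giving $\sum_{\run}(\curr[\step])^{2}\learn^{\run+1}\dnorm{\curr[\signal]-\curr[\testsignal]}^{2}\le 2\bigl(\sqrt{\stepada+\sum_{\run}(\curr[\step])^{2}\dnorm{\curr[\signal]-\curr[\testsignal]}^{2}}-\sqrt{\stepada}\bigr)$. Next, writing $\curr[\signal]-\curr[\testsignal]=[\nabla\meanpot(\curr[\state])-\nabla\meanpot(\curr[\test])]+\curr[\diff]$ and distinguishing the two cases according to which of $\dnorm{\nabla\meanpot(\curr[\state])-\nabla\meanpot(\curr[\test])}^{2}$ and $\dnorm{\curr[\signal]-\curr[\testsignal]}^{2}$ is smaller yields the pointwise bound $\dnorm{\curr[\signal]-\curr[\testsignal]}^{2}\le 2\curr[\mindiff]+2\dnorm{\curr[\diff]}^{2}$; summing, the radicand above is at most $(1/\tilde{\learn}^{\nRuns+1})^{2}+2\sum_{\run}(\curr[\step])^{2}\dnorm{\curr[\diff]}^{2}$, and subadditivity of $\sqrt{\cdot}$ separates the bound into a $1/\tilde{\learn}^{\nRuns+1}$ piece and exactly the $2\sqrt{2}\,\conststoch$-fluctuation term of \eqref{eq:AEW_stoch_conclude} (after the $\constdual\to\conststoch$ upgrade below). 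A second pass through the square-root lemma rewrites $1/\tilde{\learn}^{\nRuns+1}-\sqrt{\stepada}$ as $\sum_{\run}2(\curr[\step])^{2}\curr[\mindiff]\,\tilde{\learn}^{\run+1}$, which is the positive part $4\conststoch\tilde{\learn}^{\run+1}$ of $\func(\tilde{\learn}^{\run+1})$.

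For the primal-gap sum I would use the acceleration geometry of \adaweight: with $\curr[\step]=\run$, the averaging rules in Lines~\ref{line:AEW-flowave} and \ref{line:AEW-recom-flow} give $\curr[\state]-\curr[\test]=\tfrac{2}{\run+1}(\curr[\precom]-\curr[\ptest])$, so the $\smooth$-smoothness of the \ac{BMW} potential (\cref{prop:smooth}) yields $\dnorm{\nabla\meanpot(\curr[\state])-\nabla\meanpot(\curr[\test])}\le\tfrac{2\smooth}{\run+1}\norm{\curr[\precom]-\curr[\ptest]}_{1}$ and hence $\norm{\curr[\precom]-\curr[\ptest]}_{1}^{2}\ge\tfrac{(\curr[\step])^{2}}{4\smooth^{2}}\curr[\mindiff]$, after discarding the slack $\curr[\mindiff]\le\dnorm{\nabla\meanpot(\curr[\state])-\nabla\meanpot(\curr[\test])}^{2}$. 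Feeding this into the primal-gap sum and using the elementary comparison between $\learn^{\run+1}$ and $\tilde{\learn}^{\run+1}$ (both built from the same quadratic increments, up to a bounded factor coming from $\curr[\mindiff]\le\dnorm{\curr[\signal]-\curr[\testsignal]}^{2}$ and the factor-$2$ inflation), it contributes $-\sum_{\run}\tfrac{(\curr[\step])^{2}\curr[\mindiff]}{8\smooth^{2}\strong\tilde{\learn}^{\run+1}}$ up to universal constants --- the negative part of $\func(\tilde{\learn}^{\run+1})$. Plugging the two estimates back into \eqref{eq:Propo_energy_inequ}, absorbing $\constdual$ into the slightly larger constant $\conststoch$ (they differ only by $\diam^{2}/2\strong$, which is precisely the diameter-dependent slack, disposed of just as in \cref{propo:energy_inequ}), and grouping every term proportional to $(\curr[\step])^{2}\curr[\mindiff]$ into $\func(\tilde{\learn}^{\run+1})$ produces \eqref{eq:AEW_stoch_conclude}.

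I expect the main obstacle to be exactly this ``shadow step-size'' device. The rate $\learn^{\run+1}$ that \adaweight actually runs is adapted to the observed cost gaps $\dnorm{\curr[\signal]-\curr[\testsignal]}$, which do not shrink in a stochastic environment, whereas the bound must be driven by $\curr[\mindiff]$, which retains only the gradient-difference content and does shrink once the acceleration mechanism engages. Reconciling the two --- arranging that the residual fluctuations enter only through the \emph{square root} $\bigl(\sum_{\run}(\curr[\step])^{2}\dnorm{\curr[\diff]}^{2}\bigr)^{1/2}$, so that with $\curr[\step]=\run$ they contribute the $\bigof{\sdev/\sqrt{\nRuns}}$ term of \cref{thm:AEW} rather than a divergent linear term, while the $\curr[\mindiff]$-terms regroup into the function $\func$ that is non-positive for large $\run$ --- is the genuinely new ingredient; everything else is the by-now-standard accelerated-mirror-descent bookkeeping together with the square-root summation lemma and \cref{prop:smooth}.
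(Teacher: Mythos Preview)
Your proposal is correct and follows essentially the same approach as the paper's proof: starting from \eqref{eq:Propo_energy_inequ}, you use the pointwise bound $\dnorm{\curr[\signal]-\curr[\testsignal]}^{2}\le 2\curr[\mindiff]+2\dnorm{\curr[\diff]}^{2}$, the square-root summation lemma (twice) with subadditivity to handle the score-gap sum, and the averaging identity $\curr[\precom]-\curr[\ptest]=\tfrac{\run+1}{2}(\curr[\state]-\curr[\test])$ together with $\smooth$-smoothness to lower-bound the primal-gap sum, then shift $\learn^{\run+1}\to\tilde{\learn}^{\run+1}$ at the cost of a $\diam^{2}/(2\strong)$ correction that upgrades $\constdual$ to $\conststoch$. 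This is exactly the paper's route; your identification of the ``shadow step-size'' device as the crux is accurate.
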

%
%	\begin{align}
%		& \sum_{\run=1}^{\nRuns}\run^{2}\learn^{\run+1}\dnorm{\obs^{\run}-\obsave^{\run}}^{2} \le 4\sum_{\run=1}^{\nRuns}\run^{2} \tilde{\learn}^{\run+1} \mindiff^{\run} + 2\sqrt{2\sum_{\run=1}^{\nRuns}\run^{2} \dnorm{\xi_{\run}}^{2}} \nonumber\\
%	%
%	\text{ and }  &\sum_{\run=1}^{\nRuns}\frac{1}{2 \strong \learn^{\run+1}} \norm{\curr[\precom]-\curr[\ptest]}_1^{2} \ge  \frac{1}{8 \smooth^2 \strong} \sum_{\run=1}^{\nRuns}\frac{\run^{2}\mindiff^{\run}}{\tilde{\learn}^{\run+1}}  - \frac{\diam^2}{2\strong} \sum_{\run=1}^{\nRuns}\run^{2}\learn^{\run+1}\dnorm{\obs^{\run}-\obsave^{\run}}^{2} .\nonumber
%	\end{align}

%
%
\para{Proof of \cref{lem:AEW_diff}} First, by definitions of $\curr[\mindiff]$ and $\curr[\diff]$, we have $	\dnorm{\obs^{\run}-\obsave^{\run}}^{2} \le  2\mindiff^{\run} + 2\dnorm{\curr[\diff]}^{2}$. Apply this and Lemma 2 of \cite{KLBC19}, we have 
%	
%
%
%	which implies that
%\begin{align}
%	\dnorm{\obs^{\run}-\obsave^{\run}}^{2}& = \mindiff^{\run}+\left[ \dnorm{\obs^{\run}-\obsave^{\run}}^{2}-\min\braces*{ \dnorm{\nabla \meanpot( \curr[\state])-\nabla \meanpot(\curr[\test])}^{2}, \dnorm{\obs^{\run}-\obsave^{\run}}^{2}} \right]  \nonumber  \\
%	%
%	&\leq \mindiff^{\run}+\max \{0, \dnorm{\obs^{\run}-\obsave^{\run}}^{2}-\dnorm{\nabla \meanpot( \curr[\state])-\nabla \meanpot(\curr[\test])}^{2}\}  \nonumber  \\
%	%
%	&\leq 2\mindiff^{\run} + 2\dnorm{\curr[\diff]}^{2}  \label{eq:app_adaEW_explain_Bt} 
%\end{align}
%
%
%
%
%	Here, the last inequality in~\eqref{eq:app_adaEW_explain_Bt} comes from the fact that:
%	\begin{itemize}[leftmargin=*]
%		\item if $\dnorm{\obs^{\run}-\obsave^{\run}}^{2}  \le  \dnorm{\nabla \meanpot( \curr[\state])-\nabla \meanpot(\curr[\test])}^{2} $ then the result is trivially deduced from the fact that $\max \{0, \dnorm{\obs^{\run}-\obsave^{\run}}^{2}-\dnorm{\nabla \meanpot( \curr[\state])-\nabla \meanpot(\curr[\test])}^{2}\} = 0$,
%		%
%		\item if $\dnorm{\obs^{\run}-\obsave^{\run}}^{2}  > \dnorm{\nabla \meanpot( \curr[\state])-\nabla \meanpot(\curr[\test])}^{2} $ then $\max \{0, \dnorm{\obs^{\run}-\obsave^{\run}}^{2}-\dnorm{\nabla \meanpot( \curr[\state])-\nabla \meanpot(\curr[\test])}^{2}\} =\dnorm{\obs^{\run}-\obsave^{\run}}^{2}-\dnorm{\nabla \meanpot( \curr[\state])-\nabla \meanpot(\curr[\test])}^{2} \le \norm{\nabla \meanpot( \curr[\state])-\nabla \meanpot(\curr[\test])}_{\infty}^{2}  + 2\norm{\curr[\diff]}_{\infty}^{2}$.
%	\end{itemize}
%	
%
\begin{align}
	\sum_{\run=1}^{\nRuns}(\curr[\step])^{2}\learn^{\run+1}\dnorm{\obs^{\run}-\obsave^{\run}}^{2}   
	&=   \sum_{\run=1}^{\nRuns} \frac{(\curr[\step])^{2} \dnorm{\obs^{\run}-\obsave^{\run}}^{2}}{ \sqrt{\stepada + \sum_{\runalt=1}^{\run} \step^\runalt \dnorm{\obs^{\runalt}-\obsave^{\runalt}}^{2} }}
	\notag\\
	& \le  2\sqrt{\stepada +\sum_{\run=1}^{\nRuns}(\curr[\step])^{2}\dnorm{\obs^{\run} - \obsave^{\run}} ^{2}} \nonumber\\
	%
	%	& \le  2\sqrt{\stepada + 2\sum_{\run=1}^{\nRuns}(\curr[\step])^{2}\curr[\mindiff]+2\sum_{\run=1}^{\nRuns}(\curr[\step])^{2} \dnorm{\curr[\diff]}^{2}} \nonumber \\
	%
	&\le	\sqrt{\stepada + 2\sum_{\run=1}^{\nRuns}(\curr[\step])^{2}\curr[\mindiff]} + \sqrt{2\sum_{\run=1}^{\nRuns}(\curr[\step])^{2} \dnorm{\curr[\diff]}^{2}}\nonumber\\
	&\le 4\sum_{\run=1}^{\nRuns}\frac{(\curr[\step])^{2}\curr[\mindiff]}{\sqrt{\stepada + 2\sum_{s=1}^{\run}(\step^{\runalt})^{2}\curr[\mindiff]}}+ \sqrt{2\sum_{\run=1}^{\nRuns}(\curr[\step])^{2} \dnorm{\curr[\diff]}^{2}}.\label{eq:AEW_stoch_infty2}
\end{align}
%
%% 
%
%	Moreover, by applying \cref{lem:seq}, we also have
%

On the other hand, from the definition of $\tilde{\learn}^{\run}$, we have $\frac{1}{\tilde\learn^{\run}}\leq \frac{1}{\learn^{\run}}, \forall \run$ and hence,
% 	 $	-\frac{1}{\learn^{\run+1}} \norm{\nabla \meanpot( \curr[\state])-\nabla \meanpot(\curr[\test])}_{\infty}^{2}\leq 	-\frac{1}{\tilde{\learn}^{\run+1}} \norm{\nabla \meanpot( \curr[\state])-\nabla \meanpot(\curr[\test])}_{\infty}^{2} \le -\frac{1}{\tilde\learn^{\run+1}}\curr[\mindiff]$. Therefore, 

\begin{align}
	\sum_{\run=1}^{\nRuns}\frac{1}{\tilde{\learn}^{\run+1}}\norm*{\curr[\ptest]-\curr[\precom]}_1^{2}  \leq   \sum_{\run=1}^{\nRuns}  \parens*{\frac{1}{\learn^{\run+1}} - \frac{1}{\learn^{\run}} } \norm*{\curr[\precom]-\curr[\ptest]}_1^{2}  + \frac{1}{\learn^{\run}}\norm*{\curr[\precom] - \curr[\ptest]}_1^{2}  \le  \parens*{\frac{1}{\learn^{\nRuns+1}} - 1} \diam^2 + \sum_{\run=1}^{\nRuns}\frac{1}{\learn^{\run}}\norm*{\curr[\precom] - \curr[\ptest]}_1^{2}. \nonumber
\end{align}
Combine this with the fact that $\curr[\precom]-\curr[\ptest] = \frac{\step^{\run+1}}{2} \parens*{\curr[\state]-\curr[\test]}$ (from Lines \ref{line:AEW-flowave} and \ref{line:AEW-recom-flow} of Algorithm~\ref{alg:AEW}) and choose an increasing sequence of $\curr[\step]$ (such as $\curr[\step] = \run$), we have
\begin{align}
	- \sum_{\run=1}^{\nRuns}\frac{1}{2\strong \learn^{\run+1}} \norm{\curr[\precom]-\curr[\ptest]}_1^{2} 
%	\le  &   - \frac{1}{2\strong} \sum_{\run=1}^{\nRuns}\frac{1}{\learn^{\run}} \norm{\curr[\precom]-\curr[\ptest]}_1^{2}  \nonumber\\
	%
	\le &  - \frac{1}{2 \strong} \sum_{\run=1}^{\nRuns}\frac{1}{\tilde{\learn}^{\run+1}}\norm*{\curr[\precom] - \curr[\ptest]}_1^{2}  + \frac{\diam^2}{2\strong} \parens*{\frac{1}{\learn^{\nRuns+1}} - 1} \nonumber\\
	\le &  - \frac{1}{2 \strong} \sum_{\run=1}^{\nRuns}\frac{\step^{\run+1}}{2\tilde{\learn}^{\run+1}}\norm*{\curr[\state] - \curr[\test]}_1^{2}  + \frac{\diam^2}{2\strong} \parens*{\frac{1}{\learn^{\nRuns+1}} - 1} \nonumber\\
	\le  & - \frac{1}{2 \strong} \sum_{\run=1}^{\nRuns}\frac{1}{\tilde{\learn}^{\run+1}} \frac{(\step^{\run+1})^2 \dnorm*{\nabla \meanpot \parens{ \curr[\state]} - \nabla \meanpot \parens{\curr[\test]}}^{2}}{4 \smooth^2} + \frac{\diam^2}{2\strong} \parens*{\frac{1}{\learn^{\nRuns+1}} - 1}  \nonumber\\
	\le   & - \frac{1}{8 \smooth^2 \strong} \sum_{\run=1}^{\nRuns}\frac{(\curr[\step])^{2}\mindiff^{\run}}{\tilde{\learn}^{\run+1}}   + \frac{\diam^2}{2\strong} \parens*{\frac{1}{\learn^{\nRuns+1}} - 1}. \label{eq:ADaEW_Stoch_3term0}
\end{align}

Apply \eqref{eq:AEW_stoch_infty2} and \eqref{eq:ADaEW_Stoch_3term0} into \eqref{eq:Propo_energy_inequ}, we obtain \eqref{eq:AEW_stoch_conclude} and finish the proof of \cref{lem:AEW_diff}. \qed \\

We now focus in the last term of \eqref{eq:AEW_stoch_conclude}. We denote \mbox{$\nRuns_0 \defeq \max\braces*{1\leq \run\leq \nRuns: \tilde{\learn}^{\run+1}\geq \bracks*{32 \smooth^2 \strong \conststoch}^{-1/2} }$}. Then for any $\run \ge \nRuns_0$, we have \mbox{$\func(\tilde{\learn}^{\run+1})  < 0$}. In other words, in $\sum_{\run=1}^{\nRuns} \func(\tilde{\learn}^{\run+1}) (\curr[\step])^2 \mindiff^{\run}$, only the first $\nRuns_0$ components are positive. Therefore, 
\begin{align}
	\sum_{\run=1}^{\nRuns} \func(\tilde{\learn}^{\run+1}) (\curr[\step])^2 \mindiff^{\run} 
		\le 	\sum_{\run=1}^{\nRuns_0} \func(\tilde{\learn}^{\run+1}) (\curr[\step])^2 \mindiff^{\run}  
		\le &  \sum_{\run=1}^{\nRuns_0}  4 \conststoch \tilde{\learn}^{\run+1} (\curr[\step])^2 \mindiff^{\run} \nonumber\\
		= &  4 \conststoch \cdot \sum_{\run=1}^{\nRuns_0}  \frac{(\curr[\step])^2 \mindiff^{\run}}{\sqrt{1+ 2 \sum_{\runalt=1}^{\run} (\step^{\runalt})^2 \mindiff^{\runalt} }  } \nonumber\\
		\le & 4 \conststoch \cdot 2 \sqrt{1+ 2 \sum_{\run=1}^{\nRuns_0} (\step^{\run})^2 \mindiff^{\run} }  \tag*{(from Lemma 2 of \citep{KLBC19}) }\nonumber\\
		= & \frac{8}{\tilde{\learn}^{\nRuns_0 }} \conststoch 	\nonumber \\
		\le & 32 \smooth \sqrt{2 \strong} (\conststoch)^{3/2}. \label{eq:AEW_stoch_second_term} 
\end{align}

Combine \eqref{eq:AEW_stoch_second_term}, \eqref{eq:AEW_stoch_conclude} and the fact that $\ex \bracks*{  \dnorm{\xi_{\run}}^{2}} \le    2\ex \bracks*{  \dnorm{ \noise^{\run}  }^{2}} +  2\ex \bracks*{  \dnorm{  \noiseave^{\run}   }^{2}} \le  4 \sdev^2$, we~have
\begin{align}
	\exof*{\regstoch_{\nRuns}(\flowbase) } \le \hreg(\flowbase) -  \min \hreg  + \sdev  \conststoch 4\sqrt{2 \sum \nolimits_{\run=1}^{\nRuns} (\curr[\step])^2 }   +   32 \smooth \sqrt{2 \strong} (\conststoch)^{3/2}.\label{eq:AEW_conclusion}
\end{align}

Finally, in order to apply \eqref{eq:lem:proof_adaEW}, we need to make the connection between $\exof*{\reg_{\nRuns}(\flowbase)}$ and $\exof*{\regstoch_{\nRuns}(\flowbase) }$. Particularly, we have:
\begin{align}
	\exof*{\reg_{\nRuns}(\flowbase)} = \exof*{\regstoch_{\nRuns}(\flowbase) } -  \exof*{\sum_{\run=1}^{\nRuns} {\curr[\step]}\inner*{ \signal^{\run} - \grad \pot(\curr[\state]) }{\curr[\precom]-\flowbase}}  & =  \exof*{\regstoch_{\nRuns}(\flowbase) } - \exof*{\sum_{\run=1}^{\nRuns} {\curr[\step]}\inner*{\curr[\noise]}{\curr[\precom]-\flowbase}} \nonumber\\
	& = \exof*{\regstoch_{\nRuns}(\flowbase) }. \label{eq:AEW_end1}
\end{align}
Here, the last equality comes from the fact that $\exof*{{\curr[\step]}\inner*{\curr[\noise]}{\curr[\precom]-\flowbase}} = \exof*{ {\curr[\step]}\inner*{\exof*{\curr[\noise]|\prev[\mathcal{H}] }}{\curr[\precom]-\flowbase}}= 0$ (by law of total expectation). Combine \eqref{eq:AEW_end1} and \eqref{eq:AEW_conclusion}, then apply \eqref{eq:lem:proof_adaEW} with the choice of step-size $\curr[\step] = \run, \forall \run$, we have:
\begin{align}
	\ex \bracks*{\meanpot( \state^{\nRuns})-\meanpot(\flowbase)} \le \frac{1}{\nRuns^2} \parens*{\hreg(\flowbase) -  \min \hreg  + \sdev \conststoch 4\sqrt{2} \nRuns^{3/2}  +   32 \smooth \sqrt{2 \strong} (\conststoch)^{3/2}} . \label{eq:AEW_proof}
\end{align}
Finally, recall that $\hreg(\flowbase) \le \masssum \log(\massmax)$, $ -\min \hreg \le  \masssum \log(\nRoutes/\masssum)$, $\strong \defeq \nPairs \massmax$, $\diam \le 2 \masssum$, $\sdev \le 2 \ub$ and the definition of $\conststoch$ (cf. \cref{lem:AEW_diff}), we have $\conststoch \le \frac{1}{2}\nPairs \massmax \bracks*{2\log \parens*{\frac{\nRoutes \massmax}{\masssum}} + 13 } \defeq \frac{1}{2}\const$ for the constant $\const$ defined in \cref{thm:AEW}. Plug these results into \eqref{eq:AEW_proof}, we obtain \eqref{eq:AEW-stoch} and conclude the proof.

%----------------------------------------------------------------------
%%% ADAPUSH
%----------------------------------------------------------------------
\section{\adapush: Adaptive learning with efficient per-iteration complexity}
\label{sec:adapush}
%----------------------------------------------------------------------
%%% DISTRIBUTED
%----------------------------------------------------------------------
% !TEX root = ./Main.tex
In this section, our main focus is to design an algorithm that not only has the adaptive optimality of \adaweight but also has a per-iteration complexity that is scalable in the network's size. To do this, in \cref{sec:distr_setup}, we first introduce an alternative routing paradigm, called \aclp{dis}. In \cref{sec:adapush-algo} and \cref{sec:adapush_theo}, we then propose the \adapush algorithm and analyze its convergence properties and its per-iteration complexities.

In the sequel, we will additionally use the following set of notation. For any (directed) edge $\edge \in \edges$, we denote by $\tail{\edge}$ and $\head{\edge}$ the tail vertex and head vertex of $\edge$ respectively, \ie $\edge$ goes from $\tail{\edge}$ to $\head{\edge}$. For any vertex $\vertex$ and any sub-graph $\graph^\pair$, let $\In{\vertex}{\pair}$ and $\Out{\vertex}{\pair}$ respectively denote the set of incoming edges to $\vertex$ and the set of outgoing edges from $\vertex$, and let $\child{\vertex}{\pair}$ and $\parent{\vertex}{\pair}$ denote the set of its direct successors and the set of its direct predecessors.
%\footnote{By convention, if $\vertex \notin \vertices^\pair$ (\ie the graph $\graph^\pair$ does not contain $\vertex$), we set $\In{\vertex}{\pair} = \Out{\vertex}{\pair} = \child{\vertex}{\pair} = \parent{\vertex}{\pair} = \emptyset$. Naturally, we also have $\parent{\source{\pair}}{\pair}  =  \child{\sink{\pair}}{\pair} = \emptyset$ and $\In{\source{\pair}}{\pair} = \Out{\sink{\pair}}{\pair} = \emptyset$ for any $\pair$.}
 For the sake of brevity, we also write $\dim^{\pair}_{\vertex} \defeq   \abs*{ \Out{\vertex}{\pair}}$ which is the out degree of vertex $\vertex$ in $\graph^\pair$ and $\dim_{\vertex} \defeq \sum_{\pair \in \pairs} \dim^{\pair}_{\vertex}$. 

\subsection{\Aclp{dis} routing paradigm}
\label{sec:distr_setup}

On the defined network structure (cf. \cref{sec:setup}), let us consider a novel route-recommendation paradigm that unfolds as follows: 

\begin{enumerate}
	\item The control interface determines at each vertex $\vertex$ a profile $\disflow^{\pair}_{\vertex} \in \Delta( \Out{\vertex}{\pair})$ for each \ac{OD} pair $\pair$. 
	\item It then makes a route recommendation for each $\pair$-type traffic by choosing the vertices (in $\graph^{\pair}$) one by one such that if the vertices $\source{\pair}, \vertex_{1}, \ldots, \vertex_{k}$ is chosen, then a vertex $\vertex_{k+1} \in \Out{\vertex_{k}}{\pair}$ will be chosen with probability~$\disflow^{\pair}_{\vertex_{k},\edge}$ where $\edge$ is the edge going from $\vertex_{k}$ to $\vertex_{k+1}$.
\end{enumerate}
%
%
%%\begin{enumerate}[topsep=0pt,parsep=0pt,itemsep=2pt]
%%	\item
%%	
%%%	\item 
%%%	Concurrently, the state $\sample^{\run}$ of the network is drawn from $\samples$ (\acs{iid} relative to $\prob$). 
%%	\item
%%	The interface observes the congestion costs (possibly depending on random network's states) incurred on each edge.
%%\end{enumerate}
%
%
%
%
%
%First, we introduce notation for several graphical concepts that are not mentioned in previous sections. 
%We write $\edgec{\vertexalt}{\vertexaltalt}$ to explicitly refer to the edge going from vertex $\vertexalt$ to vertex $\vertexaltalt$.
%
In this paradigm, the recommendations are made based on ``local'' decisions at each vertex. In average, among the mass of $\pair$-type traffic arriving at a vertex $\vertex$, the proportion that is routed on edge $\edge \in \Out{\vertex}{\pair} $ is precisely $\disflow^{\pair}_{\vertex, \edge}$. We formalize this idea via the notion of \acl{dis} defined as~follows.

\para{\Acl{dis}}  
For each pair $\pair$ and vertex $\vertex$, we define the set $\disflows_{\vertex}^{\pair}
\defeq \setdef[\big]{\disflow_{\vertex}^{\pair} \in[0,1]^{\dim^{\pair}_\vertex}}{\sum \nolimits_{\edge \in \Out{\vertex}{\pair}}  \disflow^{\pair}_{\vertex,\edge} = 1}$. We call the elements of $\disflows_{\vertex}^{\pair}$ the \textbf{\aclp{dis}} corresponding to $\pair$ and $\vertex$. We also use the notations $\disflows_{\vertex} \defeq \prod_{\pair \in \pairs} \disflows^{\pair}_{\vertex}$ for the set of \acl{dis} profiles chosen at $\vertex$ across all \ac{OD} pairs, $\disflows^{\pair} \defeq \prod_{\vertex \in \vertices} \disflows^{\pair}_{\vertex}$ for the set of \acl{dis} profiles of $\pair$-type traffic across the vertices and $\disflows \defeq \prod_{\vertex \in \vertices} \disflows_{\vertex} = \prod_{\pair \in \pairs} \disflows^{\pair} $ for the set of all \acl{dis}~profiles.

\para{Traffic loads induced by \aclp{dis}}
% For each \ac{OD} pair $\pair \in \pairs$, the corresponding outgoing flows at all vertices constitute a \emph{distributed flow} profile $\disflow^{\pair} = \parens{\disflow^{\pair}_\vertex}_{\vertex \in \vertices} \in \prod_{\vertex \in \vertices} \disflows^{\pair}_{\vertex}$ representing how the $\pair$-type traffic is routed.
 When the network's traffic is routed according to a \acl{dis} profile $\disflow \in \disflows$, we denote the \emph{mass} of $\pair$-type traffic arriving at $\vertex$ by $\ar^{\pair}_{\vertex}(\disflow)$ and denote the \emph{load} of $\pair$-type traffic routed on $\edge$ by $\disload^{\pair}_{\edge} (\disflow)$. Formally, they can be computed (recursively) as follows:
\begin{equation}
\disload^{\pair}_{\edge} (\disflow) = \disflow^{\pair}_{\vertex, \edge} \cdot \ar^{\pair}_{\vertex} (\disflow)
\textrm{,~and~}
\ar^{\pair}_{\vertex}(\disflow) \defeq \left\{
\begin{array}{ll}
	\mass{\pair} & \textrm{, if } \vertex = \source{\pair}  \\
	 \sum_{ \edgealt \in \In{\vertex}{\pair}} \disload^{\pair}_{\edgealt} (\disflow) & \textrm{, if } \vertex \neq \source{\pair} \\
\end{array} 
\right.	\label{eq:def_load_mass}
\end{equation}
%
%where is the \emph{mass of $\pair$-type traffic arriving} at $\tail{\edge}$ (\ie the head vertex of $\edge$) which is defined as $\ar^{\pair}_{ \tail{\edge}} (\disflow ^{\pair}) \defeq  	\sum_{ \edgealt \in \In{\tail{\edge}}{\pair}} \disload^{\pair}_{\tail{\edgealt},  \edgealt} (\disflow^{\pair})$ if  $\tail{\edge} \neq \source{\pair}$ and $\ar^{\pair}_{ \tail{\edge}} (\disflow^{\pair}) \defeq \mass{\pair}$ if $\tail{\edge} = \source{\pair}$. 
%
We define the \textbf{total load} induced on $\edge$ as \mbox{$\disload_{\edge} \parens*{ \disflow } = \sum_{\pair \in \pairs} \disload^{\pair}_{\edge} (\disflow)$}.
% We also use the notation $\loadiv{\cdot}$ to refer to the load profile of $\pair$-type traffic on outgoing edges of $\vertex$, \ie $\loadiv{\disflow} \defeq \left( \load^{\pair}_{\edge}(\disflow) \right)_{\edge \in \In{\vertex}{\pair}}$. 

The \textbf{congestion cost} on each edge $\edge$ is determined by the cost function $\late_\edge$ as defined in \cref{sec:setup}. Particularly, $\late_{\edge} \parens*{ \disload_{\edge}(\disflow) , \sample }$ is the cost on $\edge$ at state $\sample \in \samples$ when the \acl{dis} profile $\disflow \in \disflows$ is implemented.

Finally, corresponding to the local routing paradigm described above, we re-adjust the \textbf{learning model} at each time epoch $\run = \running$ as~follows: 
\begin{enumerate}
	\item
	The navigation chooses a \acl{dis} profile $\curr[\disflow] \in \disflows$ and routes the traffic accordingly.
	\item
	Concurrently, the state $\sample^{\run}$ of the network is drawn from $\samples$ (\acs{iid} relative to $\prob$).
	% accis generated randomly from an unknown distribution,\footnote{In other words, $\sample^{\run}, \run =1,2,\ldots, $ are i.i.d. random variables}
	\item
	The congestion costs $\late_{\edge} \parens*{ \disload_{\edge}(\curr[\disflow]) , \curr[\sample] }$ on each edge $\edge$ is observed. 
\end{enumerate}
\subsection{\adapush algorithm}
\label{sec:adapush-algo}
%In this section, we present an algorithm, called \adapush \textendash\ an efficient implementation of \adaweight \textendash\ that runs in the distributed learning model. 

%Before formally presenting \adapush and analyzing its properties (see \cref{sec:adapush_theo}), we first discuss the high-level idea of this algorithm.

In this section, we propose a new equilibrium learning method called \adapush. This method will be implemented in the \acl{dis} learning model described in the previous section.

To build \adapush, our starting point is the \adaweight method. In Algorithm~\ref{alg:AEW}, there are two major types of flows updating: the flows $\curr[\ptest]$ and $\curr[\precom]$ outputting from the logit mapping $\logit$, and the flows $\curr[\test]$ and $\curr[\state]$ obtained from averaging between two pre-computed flows. In Algorithm~\ref{alg:AEW}, it takes $\bigoh(\nRoutes)$ computations to update each of these flow profiles in each iteration. To improve the per-iteration complexity, we can leverage the weight-pushing technique \citep{TW03,GLLO07}: by assigning weights on edges and use dynamic programming principles, we can find the \acl{dis} profiles inducing the same loads (and costs) as the logit-mapped flows $\curr[\ptest]$ and $\curr[\precom]$. However, \textbf{the weight-pushing technique fails to derive \acl{dis} profiles matching the averaged flows} $\curr[\test]$ and $\curr[\state]$ of Algorithm~\ref{alg:AEW}. Particularly, weight-pushing two flows and then taking the average will not incur the same costs as implementing the average of the two flows. Recall that these averaging steps are the key elements allowing the adaptability of \adaweight, the key challenge now is to implement efficiently these~steps.

Facing up this challenge, we make the following observation: \emph{while the averaged flows of \adaweight are not ``weight-pushable'', their induced loads are}. In this perspective, there arise two new challenges. First, we need to efficiently compute the load profiles induced by the averaged flows (without explicitly computing these flows). To do this, we introduce a sub-routine called \textbf{pulling-forward} \textendash\ reflecting the fact that we start from the origin and work forward (unlike the classical weight-pushing that starts from the destination and works backward). Second, we need to derive \acl{dis} profiles that matches these averaged loads. We refer to this as \textbf{matching-load}. Note that \emph{pulling-forward and matching-load are completely novel contributions}.

Taking an overall view, \adapush is a combination of the weight-pushing technique (to handle the logit-mapped flows) and the pulling-forward and matching-load procedures (to handle the averaged flows) into the \adaweight template. For the sake of conciseness, we first present a sub-algorithm in \cref{sec:aux_subroutine} and present a pseudo-code form of \adapush in \cref{sec:pseudo_adapush}.\footnote{As a side note on \acceleweight (cf. \cref{sec:XLEW}), a direct application of weight-pushing also fails to achieve an efficient implementation for the same reasons of that of \adaweight. Note that we can use techniques that we introduce in \adapush to improve \acceleweight and achieve an efficient version. As \acceleweight is not the focus of our work, we omit the details.}

%
%
%-------------------------------------------------------------------------
%-------------------------------------------------------------------------
%-------------------------------------------------------------------------
\subsubsection{\ppm sub-algorithm}
\label{sec:aux_subroutine}
In this section, we present a sub-algorithm, called \ppm, that combines the three auxiliary routines: pushing-backward, pulling-forward and matching-load. In Algorithm~\ref{alg:scorepush}, we give a pseudo-code form of $\ppm$. It takes four inputs: an ``anchor local-load'' profile $\Anc = (\Anc^{\pair}_{\edge})_{\pair \in \pairs, \edge \in \edges^{\pair}}$, a ``local-weight'' profile $\Weight = (\Weight^{\pair}_{\edge})_{\pair \in \pairs, \edge \in \edges^{\pair}}$; and two scalar numbers $\curr[\step]$ and $\sum_{\runalt=0}^{\run} \iter[\step]$ (we use these notations to facilitate the presentation of \adapush in \cref{sec:adapush-algo}). When it finishes, Algorithm~\ref{alg:scorepush} outputs a \acl{dis} profile and updates the anchor local-load profile.
% For the sake of brevity, when refer to \ppm, we only include the anchor load-profile and the weight profile in its name and drop the step-size sequence. 

\begin{algorithm}[htbp]
	\DontPrintSemicolon
	\small
	
	\SetKw{Return}{Return} %Change keyword return to Return
	\SetKwBlock{OneofPhases}{}{end}
	\SetKwBlock{Phase}{}{end}
	\SetKwProg{Parallel}{In parallel for all}{ do}{end}
	\SetKwProg{Wait}{wait until}{ then do}{end}

	%	\KwIn{$\child{\vertex}{\pair}, \parent{\vertex}{\pair}, \Out{\vertex}{\pair}, \In{\vertex}{\pair}$, $\{\step^{\runalt}\}_{\runalt = 0, \ldots, \run}$ and $\Anchor, \Weight^{\pair}_{\vertex}  \in \R^{\dim^{\pair}_\vertex}$} 
	\KwIn{$\Anc, \Weight \in \R^{\sum_{\pair}\sum_{\vertex} \dim^{\pair}_{\vertex}}$ and $\curr[\step],  \sum_{\runalt = 0}^{\run} \step^{\runalt} > 0$} 
	\KwOut{\rm $\disflow \in \disflows$ and update $\Anc$}
	%-----------------------------------------------------
	\Parallel{$\pair \in \pairs$}{
		Fix an topological order $\vertex_0, \vertex_{1}, \ldots, \vertex_{\abs{\vertices^{\pair}}}$ of the graph $\graph^{\pair}$ (such that $\vertex_0 \defeq \source{\pair}$ and $\vertex_{\abs{\vertices^{\pair}}} \defeq \sink{\pair}$)\;
		%-------
		\tcp*[l]{\small {Pushing-backward phase}} 
		\For{$\vertex = \vertex_{\abs{\vertices^{\pair}}} , \ldots, \vertex_{0}$}{
			\lIf(\tcp*[f]{initialize $\scorebw{}{}{}$ at destination}){$\vertex = \sink{\pair}$}{set $\scorebw{\vertex}{}{} \gets 1$ \label{line:scorepush_cond}}
			\lElse(\tcp*[f]{compute $\scorebw{}{}{}$ based on $\scorebw{}{}{}$ of children}){set $\scorebw{\vertex}{}{} \gets \sum_{\edge \in \Out{\vertex}{\pair}}{\scorebw{\head{\edge}}{ }{\Weight} \cdot  \exp \parens*{\Weight^{\pair}_{\edge} } }$ \label{line:SP-score}} 
			%
			%				inform $\scorebw{\vertex}{ }{\Weight}$ to all $\vparent \in \parent{\vertex}{\pair}$ \tcp*[r]{push the score backward}
			%
			set {$\precom^{\pair}_{\vertex, \edge} \gets \exp \parens*{\Weight^{\pair}_{\edge}} {\scorebw{\head{\edge}}{ }{\Weight}}  \big/ {\scorebw{\vertex}{ }{\Weight}}$ for $\edge \in \Out{\vertex}{\pair}$} \tcp*[r]{compute pivot outgoing flow} \label{line:SP-dflow}
			%
			%				goes to \textsc{Pulling-forward Phase}
			%	
		} % end for v
		%------------------------------------------------
		\tcp*[l]{\small {Pulling-forward phase}} 
		\For{$\vertex = \vertex_{0}, \ldots, \vertex_{\abs{\vertices^{\pair}}}$}{
			\lIf(\tcp*[f]{initialize $\scorefw{}{}{}$ at origin}){$\vertex = \source{\pair}$}{set $\scorefw{\vertex}{ }{\precom^{\pair}} \gets \mass{\pair}$ \label{line:flowpull_cond}}
			\lElse(\tcp*[f]{compute $\scorefw{}{}{}$ from loads of incoming edges}){
				set $\scorefw{\vertex}{ }{\precom^{\pair}} \gets \sum_{\alt\edge \in \In{\vertex}{\pair}}{ \disload^{\pair}_{\alt\edge}(\precom) }$ \label{line:FP-score}} 
			\For{$\edge \in \Out{\vertex}{\pair}$}{
				set $\disload^{\pair}_{\edge} (\precom) \gets \scorefw{\vertex}{ }{\precom^{\pair}} \cdot \precom^{\pair}_{\vertex, \edge}$
				\tcp*[r]{pull the loads forward} \label{Line:Adapush_disload_precom}
				set $ \disload^{\pair}_{\edge}( \disflow)  \gets  \bracks*{\step^{\run} \disload^{\pair}_{\edge}\parens*{\precom} +   \Anc^{\pair}_{\edge}   } \big/ \sum_{\runalt=0}^{\run} \step^{\runalt} $ \tcp*[r]{average $\&$ pull forward}\label{Line:Adapush_disload_disflow}
				set $\Anc^{\pair}_{\edge} \gets \Anc^{\pair}_{\edge} + \curr[\alpha] \cdot \disload^{\pair}_{\edge}(\precom)$ \tcp*[r]{update the anchor}
			}
			%				goes to \textsc{Matching-load Phase}
			%	
		} % end for v	

		%------------------------------------------------
		\tcp*[l]{\small {Matching-load phase}} 
		\For{$\vertex = \vertex_{0}, \ldots, \vertex_{\abs{\vertices^{\pair}}}$}{
			\lIf(\tcp*[f]{initialize mass arriving at origin}){$\vertex = \source{\pair}$}{set $\ar^{\pair}_{\vertex} ({\disflow}) \gets \mass{\pair}$}
			\lElse(\tcp*[f]{compute mass that would arrive at $\vertex$ if $\disflow$ is implemented}){
				set $\ar^{\pair}_{\vertex} ({\disflow}) \gets \sum_{\edgealt \in \In{\vertex}{\pair}}{ \disload^{\pair}_{\edgealt}(\disflow) }$} 
			%
			%				\lIf{$\scorefw{\vertex}{ }{\disflow^{\pair}} \neq 0$}
			set $\disflow^{\pair}_{\vertex, \edge} \gets \disload^{\pair}_{\edge}(\disflow) \big/ \ar^{\pair}_{\vertex} ({\disflow})$ for $\edge \in \Out{\vertex}{\pair}$\tcp*[r]{compute outgoing flow matching $\disload^{\pair}_{\edge}(\disflow)$} \label{Line:ppm_disflow}
			%
			%				\textbf{terminate}
			%				\lElse{set $\disflow^{\pair}_{\vertex, \edge} \gets 1 \big/ \abs*{\Out{\vertex}{\pair}}$ for any $\edge \in \Out{\vertex}{\pair}$}
			%
			%				\Return{\rm outgoing flow $\disflow^{\pair}_{\vertex}$  and } \;
			%
			%	
		} % end for v	

	}  % \end \FORpair

	\caption{$\ppm(\Anc, \Weight, \curr[\step], \sum_{\runalt=0}^{\run} \step^{\runalt} )$
		%		 score-pushing the $\pair$-type traffic at vertex $\vertex$ with anchor load-profile $\loadiv{\anchor}$, weight $\Weight^{\pair}_{\vertex}$ and parameters $\{\step^{\runalt}\}_{\runalt=0,\ldots, \run}$
	}\label{alg:scorepush}
\end{algorithm}

Particularly, when we run \ppm, it executes the following three phases for each \ac{OD} pair $\pair$:

\begin{enumerate}
	\item \textit{The pushing-backward phase}: in this phase, we consider vertices in $\graph^{\pair}$ one by one in a \emph{reversed topological order},  \ie we work backwardly from the destination to the origin.\footnote{Note that since $\graph^{\pair}$ are DAGs, such an topological order exists and can be found in $\bigoh{\abs{\edges^{\pair}}}$ time.} For each vertex $\vertex$, we assign a score \textendash\ called the \emph{backward score} (denoted by $\scorebw{}{}{\Weight}$ in Algorithm~\ref{alg:scorepush}) \textendash\ depending on the weights of $\vertex$'s outgoing edges (\ie $\Weight^{\pair}_{\edge}, \forall \edge \in \Out{\vertex}{\pair}$) and its children's scores. Then, we ``pushes'' the backward score of $\vertex$ to its parents so that, in their turns, we can compute their scores. Finally, based on the backward score of $\vertex$, we compute a \acl{dis} $\precom^{\pair}_{\vertex} \in \disflows^{\pair}_{\vertex}$. 
	\item \textit{The pulling-forward phase}: in this phase,  we consider vertices in $\graph^{\pair}$ one by one in a \emph{topological order}, \ie we work forwardly from the origin to the destination. On each vertex $\vertex$, we assign a score \textendash\ called the \emph{forward score} (denoted by $\scorefw{}{}{\Weight}$ in Algorithm~\ref{alg:scorepush}) \textendash\ computed from the loads induced by $\precom$ on the incoming edges of $\vertex$ that are ``pulled" from $\vertex$'s parents. Then, for any $\edge \in \Out{\vertex}{\pair}$, we derive $\load^{\pair}_{\edge}(\precom)$ and compute another term \textendash\ called $\load^{\pair}_{\edge}\parens{\disstate}$ \textendash\ by taking the average of $\disload^{\pair}_{\edge}(\precom)$ and the anchor load $\Anc^{\pair}_{\edge}$. Finally, we update the anchor load-profile $\Anc$ (that will be used later in \adapush).  
	\item \textit{The matching-load phase}: in this phase, we once again consider vertices in $\graph^{\pair}$ in a topological order. For each $\vertex$, we compute a \acl{dis} $\disstate^{\pair}_{\vertex} \in \disflows^{\pair}_{\vertex}$ that matches precisely the load $\load^{\pair}_{\edge}\parens{\disstate}, \forall \edge \in \Out{\vertex}{\pair}$ obtained in the pulling-forward phase. To do this, we need to compute $\ar^{\pair}_{\vertex}(\disstate)$ \textendash\ the mass of $\pair$-type traffic arriving at $\vertex$ induced by $\disstate$ \textendash\; this can be done by pulling the loads corresponding to $\disstate^{\pair}$ from the parents of $\vertex$. The \acl{dis} profiles $\disstate^{\pair}, \pair \in \pairs$ constitute the output of \ppm.
\end{enumerate}

\subsubsection{Pseudo-code of \adapush.}
\label{sec:pseudo_adapush}

Having all the necessary preparations, we now present the \adapush method. The key difference that makes \adapush stand out from \adaweight is that it uses the sub-algorithm \ppm to efficiently compute the corresponding \aclp{dis} instead of working with the $\nRoutes$-dimensional flow profiles as in \adaweight. We present a pseudo-code form of \adapush in Algorithm~\ref{alg:adapush}. Similar to \adaweight, \adapush follows two phases: the test phase and the recommendation~phase:

\begin{enumerate}
	\item In the \emph{test phase}, we run \ppm with the anchor load profile $\Anc^{\run}$ and the weight profile $\curr[\learn] \curr[\drecomdis]$ as inputs (here, $\curr[\learn]$ is a learning rate). At the end of the test phase, the cost $\testsignal^{\run}$ at a  ``test'' \acl{dis} is derived and used to update the test-weight profile $\curr[\dtestdis]$.
	\item In the \emph{recommendation phase}, we re-run \ppm but this time, with the test weight $\curr[\learn] \curr[\dtestdis]$ obtained previously in the test phase as input. Moreover, we will also use the output of \ppm to update the anchor load-profile $\Anc^{\run}$. At the end of the recommendation phase, we obtain the \acl{dis} profile $\curr[\disstate]$, route the traffic accordingly, then update the weight profile $\curr[\drecomdis]$ by using the incurred costs. 	
\end{enumerate}

Finally, we observe that in each iteration of Algorithm~\ref{alg:adapush}, it is required to compute the term $\max_{\route \in \routes} \parens*{\sum_{\edge \in \route} \abs{\curr[\signal]_{\edge} - \curr[\testsignal]_{\edge}}}$ for updating the learning rate $\curr[\learn]$. This can also be done in efficiently: for any $\pair \in \pairs$, we solve a weighted-shortest-path problem on $\graph^\pair$ where each edge $\edge$ is assigned with a  weight $\real_{\edge} \defeq - \abs*{\curr[\signal]_{ \edge} - \curr[\testsignal]_{ \edge} }$. For instance, this step can be done by the classical Bellman-Ford algorithm \citep{bellman1958routing,ford1956network} that takes only $\bigoh(\abs{\vertices})$ rounds of computations.\footnote{There exist more complicated shortest-path algorithms that have better complexities (\eg \cite{awerbuch1989randomized,huang2017distributed,elkin2020distributed}); however, the complication of these algorithms are beyond the purpose of our work and they do not improve the complexity of \adapush in general. In this work, we only analyze \adapush with the simple Bellman-Ford algorithm.}

%
%
% Note that in this pseudo-code, $\curr[\drecomdis], \curr[\dtestdis] \in \R^{\abs{\edges}}$ representing the weights on the edges (and $\curr[\drecomdis]_{\vertex}, \curr[\dtestdis]_{\vertex}$ are the profiles comprising the weights on outgoing edges of $\vertex$). On the other hand, $\Anc^{\run,\pair} \in \R^{\sum_{\vertex \in \vertices} \dim_{\vertex}}$ is a load-profile representing the "anchor" loads of the $\pair$-type traffic on the edges. Finally, $\curr[\distest]$ and $\curr[\disflow]$ are \acl{dis} profiles in $\disflows$.
%%
%

\begin{algorithm}[H]
	\DontPrintSemicolon
	\small
	\rmfamily 
	
	\SetKw{Do}{do} %Change keyword return to Return
	\SetKw{Return}{Return} %Change keyword return to Return
	\SetKwInput{KwIndis}{Input at each vertex $\vertex$} %Change keyword Input
	\SetKwProg{RunUntil}{run}{ until it terminates, then }{end}
	\SetKwProg{VertexFor}{For}{, each vertex $\vertex$ does}{end}
	\SetKwProg{VertexWhile}{while}{, $\vertex$ does}{end}
	
	%%%%%%%%%%%%%%%%%%%%%%%%%%%%%%%%%%%%
	%	\KwIndis{$\child{\vertex}{\pair}, \parent{\vertex}{\pair}, \Out{\vertex}{\pair}, \In{\vertex}{\pair}$}
	%%%%%%%%%%%%%%
	%
	Initialize $\init[\drecomdis] \gets \zer$, $\Anc^{\start}=\zer$, $\beforeinit[\step] \gets 0$ and $\init[\learn] \gets 1$\; 
	\For{$\run = \running$}{
		\tcp{Test phase}
		%
		%		Set $\Weight^{\run}_{\vertex} \gets \curr[\learn] \drecomdis^{\run}_{\vertex}$ \tcp*[r]{multiply weights with learning-rate parameter}
		%---------------------------	
%		run $\ppm(\Anc^{\run}, \curr[\learn] \drecomdis^{\run}), \curr[\step], \sum_{\runalt=0}^{\run} \iter[\step])$ \tcp*[r]{compute a test \acl{dis}}
		%
		set $\distest^{\run} \gets$ output of $\ppm(\Anc^{\run}, \curr[\learn] \drecomdis^{\run}, \curr[\step], \sum_{\runalt=0}^{\run} \iter[\step])$ \tcp*[r]{compute a test \acl{dis}}
		get $\testsignal^{\run}_{\edge} \!\gets\! \late_{\edge} \parens*{ \disload_{\edge} \parens*{\disqtest^{\run}}, \sample^{\run} } $ for $\edge \in \edges$ \tcp*[r]{querry the test \acl{dis}}
		set $ \dtestdis^{\run, \pair}_{\edge}   \gets  \drecomdis^{\run, \pair}_{\edge}  -  \step^{\run} \testsignal^{\run}_{\edge}$ for any $\pair \in \pairs$ and $\edge \in \edges^{\pair}$  \tcp*[r]{update test weight}
		%
		%
		
		%------------------------------------------------
		%------------------------------------------------
		%------------------------------------------------
		%        		Recommendation STEP
		\tcp{Recommendation phase}
		%------
		%-----
		%		Set $\tilde{\Weight}^{\run}_{\vertex} \gets \curr[\learn] \dtestdis^{\run}_{\vertex}$\;
%		\RunUntil{\rm $\ppm(\Anc^{\run, \pair}_{\vertex},  \curr[\learn] \dtestdis^{\run}_{\vertex})$ for each $\vertex$ and $\pair$}{ }
		set $\disstate^{\run}, \Anc^{\run+1} \gets$ outputs of $\ppm(\Anc^{\run}, \curr[\learn] \dtestdis^{\run}, \curr[\step], \sum_{\runalt=0}^{\run} \iter[\step])$ \label{line:adapush-recom} \tcp*[r]{compute \acl{dis} $\&$ update anchor}
		Route according to $\curr[\disstate]$ and get $\signal^{\run}_{\edge} \!\gets\! \late_{\edge} \parens*{ \disload_{\edge} \parens*{\disstate^{\run}}, \sample^{\run} } $ for $\edge \in \edges$  \tcp*[r]{route, measure costs} 
		set $\drecomdis^{\run+1, \pair}_{\edge} \gets  \drecomdis^{\run,\pair}_{\edge}   -  \step^{\run} \signal^{\run}_{\edge}$ for any $\pair \in \pairs$ and $\edge \in \edges^{\pair}$  \tcp*[r]{update weight} 
		%
		%------------------------------------------------
		%----------------------------------------------
		\tcp{Update the learning rate}
		%		Each vertex $\vertex$ sets $\weightDSP^{\run}_{\edge} \gets \abs{\testsignal^{\run}_{\edge} - \signal^{\run}_{\edge}}$ for any $\pair$ and $\edge \in \child{\vertex}{\pair}$  \label{line:adapush_DSP}\;
		%		%
		%		\For{$\pair \in \pairs$ \label{line:adapush_DSP_for}}{
		%			set $\signal_{\min}^{\run, \pair} \gets$ output of $\DSP\parens*{\graph^\pair, \weightDSP^{\run}}$\label{line:adapush_DSP_endfor}
		%		}
		set $\next[\learn] \gets 1 \big/ \sqrt{1 + \sum_{\runalt = \beforestart}^{\run} \iter[\step] \max_{\route \in \routes} \parens*{ \sum_{\edge \in \route} \abs{\iter[\signal]_{\edge} - \iter[\testsignal]_{\edge}}}^{2}}$ \label{line:adapush_endDSP}
	} 	% end for t=1,2,...
	\caption{\Acl{ALW} (\adapush)} \label{alg:adapush}
	%	\algorithmfootnote{Some footnotes}
\end{algorithm}
\subsection{\textbf{\upshape \adapush}: convergence results and per-iteration complexities}  
\label{sec:adapush_theo}

The convergence properties of \adapush is formally presented in the following theorem:
\begin{theorem}\label{thm:adapush}
	Let $\disstate^{\start}, \disstate^{\afterstart}, \ldots$ be the sequence of \acl{dis} profiles recommended by \adapush (\ie Algorithm \ref{alg:adapush}) running with $\step^{\run}=\run, \forall \run = \running$; then the following results hold:

	\begin{enumerate}
		\item  The sequence of flow profiles $\flow^{\run}, \forall \run = \running$ where $\flow^{\run, \pair}_{\route} \defeq \mass{\pair} \prod_{\edge \in \route} \disflow^{\run, \pair}_{\tail{\edge},\edge}$ for any $\pair \in \pairs$ and $\route \in \routes$ enjoys the following equilibrium convergence rate:
		\begin{equation*}
			\exof*{\gap \parens*{\nRuns}}
			\leq  \bigoh\parens*{ \parens*{\log\nRoutes}^{{3}/{2}} \parens*{\frac{\sdev }{\sqrt{\nRuns}} + \frac{1}{\nRuns^2}  }},
		\end{equation*}
		and specifically, in the static case, it enjoys the convergence rate ${\gap \parens*{\nRuns}}	\leq  \bigoh \parens*{ \log(\nRoutes)^{3/2} / \nRuns^2}$.
		\item Moreover, each iteration of \adapush requires only an $\bigoh\parens*{\abs{\pairs} \abs{\vertices} \abs{\edges}}$ number of computations.
	\end{enumerate}
 
\end{theorem}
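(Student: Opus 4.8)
The plan is to show that \adapush \emph{emulates} \adaweight at the level of edge-loads and observed costs: I will exhibit, epoch by epoch, a coupling under which the \acl{dis} profiles produced by \adapush induce exactly the edge-load vector --- hence the exact per-edge costs --- of the \adaweight iterates, the learning-rate schedule of \adapush being the edge-level surrogate of \adaweight's. Granting this, Part~1 follows by re-running the analysis of \cref{sec:proof_adaweight} with every flow replaced by its edge-load vector, and Part~2 reduces to bounding the arithmetic cost of one call to the subroutine \ppm together with the longest-path computation used for the learning rate. Such an emulation is possible because \emph{every quantity that drives \adaweight} --- the \ac{BMW} potential $\meanpot$, its gradient, the equilibrium gap \eqref{eq:gap}, and the costs $\curr[\signal],\curr[\testsignal]$ --- depends on a flow \emph{only through its edge-loads} $\parens*{\load_{\edge}(\cdot)}_{\edge\in\edges}$, and the load map $\flow\mapsto\parens*{\load_{\edge}(\flow)}_{\edge}$ is \emph{linear}, so that averaging flows and averaging their loads commute.

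The technical core is a correctness lemma for \ppm (\cref{alg:scorepush}): for a call $\ppm\parens*{\Anc,\Weight,\curr[\step],\sum_{\runalt=0}^{\run}\iter[\step]}$ whose anchor satisfies $\Anc^{\pair}_{\edge}=\sum_{\runalt=0}^{\run-1}\iter[\step]\,\load^{\pair}_{\edge}(\iter[\precom])$ --- each $\iter[\precom]$ a flow feasible for the demand $\mass{\pair}$ --- the returned \acl{dis} $\disstate$ induces exactly the loads $\load^{\pair}_{\edge}(\disstate)=\bracks*{\curr[\step]\,\load^{\pair}_{\edge}(\precom)+\Anc^{\pair}_{\edge}}\big/\sum_{\runalt=0}^{\run}\iter[\step]$, where $\precom$ is the flow weight-pushed from the edge weights $\Weight$ (\ie the logit-choice flow in which path $\route\in\routes^{\pair}$ carries score $\sum_{\edge\in\route}\Weight^{\pair}_{\edge}$), and on exit the anchor has been advanced to $\Anc^{\pair}_{\edge}+\curr[\step]\,\load^{\pair}_{\edge}(\precom)$. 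I would prove this in three passes mirroring \cref{alg:scorepush}. \textbf{(a)}~A backward induction along a topological order of $\graph^{\pair}$ establishes the classical weight-pushing identity of \citep{TW03,GLLO07}: the backward score at $\vertex$ is the partition function of the $\vertex$--$\sink{\pair}$ sub-paths weighted by $\Weight$, and the \acl{dis} $\precom^{\pair}$ it builds has the same per-edge loads as $\precom$. \textbf{(b)}~A forward induction via \eqref{eq:def_load_mass} shows the pulling-forward phase computes $\load^{\pair}_{\edge}(\precom)$ correctly and then forms the target loads of \cref{Line:Adapush_disload_disflow}, which by linearity are precisely the edge-loads of the averaged flow $\curr[\state]=\bracks*{\curr[\step]\curr[\precom]+\curr[\anchor]}\big/\sum_{\runalt=0}^{\run}\iter[\step]$ that \adaweight forms, while the in-loop assignment advances the anchor invariant. \textbf{(c)}~In the matching-load phase the target load vector is a nonnegative combination --- rescaled to demand $\mass{\pair}$ --- of edge-loads of feasible flows, so it still obeys flow conservation at every vertex of $\graph^{\pair}$ and is strictly positive on every edge; hence $\ar^{\pair}_{\vertex}(\disstate)=\sum_{\edge\in\Out{\vertex}{\pair}}\load^{\pair}_{\edge}(\disstate)>0$, the normalization $\disflow^{\pair}_{\vertex,\edge}=\load^{\pair}_{\edge}(\disstate)/\ar^{\pair}_{\vertex}(\disstate)$ of \cref{Line:ppm_disflow} is a valid probability vector, and a second forward induction shows that the flow generated from $\disstate$ reproduces the target loads.

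With the lemma in hand I would couple \adapush with a run of the \adaweight template in which the path-level learning-rate update of \cref{line:AEW-stepalt} is replaced by its edge-level counterpart \cref{line:adapush_endDSP}, both fed the \emph{same} states $\sample^{\start},\sample^{\afterstart},\dotsc$, and prove by induction on $\run$ that \textbf{(i)}~the anchor-load profile $\Anc^{\run}$ of \adapush is the edge-load image of the \adaweight anchor $\curr[\anchor]=\sum_{\runalt=0}^{\run-1}\iter[\step]\iter[\precom]$; \textbf{(ii)}~summing the \adapush weight profiles along any $\route\in\routes^{\pair}$ recovers the \adaweight scores, $\sum_{\edge\in\route}\drecomdis^{\run,\pair}_{\edge}=\curr[\drecom]_{\route}$ and $\sum_{\edge\in\route}\dtestdis^{\run,\pair}_{\edge}=\curr[\dtest]_{\route}$; and \textbf{(iii)}~$\distest^{\run},\disstate^{\run}$ induce exactly the edge-loads of $\curr[\test],\curr[\state]$, hence the same per-edge costs $\curr[\testsignal]_{\edge},\curr[\signal]_{\edge}$. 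The \ppm lemma supplies \textbf{(iii)} from \textbf{(i)}--\textbf{(ii)}; the dual updates then carry \textbf{(i)}--\textbf{(ii)} to $\run+1$, and the learning-rate updates agree by construction. Consequently the flow $\flow^{\run}$ of \cref{thm:adapush}, $\flow^{\run,\pair}_{\route}=\mass{\pair}\prod_{\edge\in\route}\disflow^{\run,\pair}_{\tail{\edge},\edge}$, has total edge-loads $\load_{\edge}(\flow^{\run})=\disload_{\edge}(\disstate^{\run})=\load_{\edge}(\curr[\state])$; since $\meanpot$ and the gap \eqref{eq:gap} see a flow only through its edge-loads, $\gap(\nRuns)$ for \adapush coincides with $\gap(\nRuns)$ for this run of the \adaweight template. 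Finally, the edge-level learning rate of \cref{line:adapush_endDSP} retains the two properties the proof of \cref{thm:AEW} relies on --- it stabilizes at a strictly positive value under coherent feedback (so the truncation-at-$\nRuns_0$ argument applies) and decays like $\Theta(1/\sqrt{\run})$ otherwise --- while $\abs{\sum_{\edge\in\route}(\curr[\signal]_{\edge}-\curr[\testsignal]_{\edge})}\le\sum_{\edge\in\route}\abs{\curr[\signal]_{\edge}-\curr[\testsignal]_{\edge}}$ lets the path-level quantity $\supnorm{\curr[\signal]-\curr[\testsignal]}$ be replaced by its edge-level majorant throughout; thus the computation of \cref{sec:proof_adaweight} goes through with only the constants affected, giving $\exof*{\gap(\nRuns)}=\bigoh\parens*{(\log\nRoutes)^{3/2}\parens*{\sdev/\sqrt{\nRuns}+1/\nRuns^{2}}}$, which sharpens to $\bigoh\parens*{(\log\nRoutes)^{3/2}/\nRuns^{2}}$ when $\sdev=0$. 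This settles Part~1.

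For Part~2, one call to \ppm performs, for each \ac{OD} pair $\pair$, a topological sort in $\bigoh\parens*{\abs{\edges^{\pair}}}$ and three linear passes in which each vertex $\vertex$ does $\bigoh\parens*{\dim^{\pair}_{\vertex}}$ arithmetic operations, so it costs $\bigoh\parens*{\abs{\edges^{\pair}}}$; together with the $\bigoh\parens*{\abs{\edges^{\pair}}}$ weight- and cost-updates of \cref{alg:adapush} this is $\bigoh\parens*{\abs{\edges^{\pair}}}$ per pair, summing to $\bigoh\parens*{\graphsize}$. The only heavier step is evaluating $\max_{\route\in\routes}\sum_{\edge\in\route}\abs{\curr[\signal]_{\edge}-\curr[\testsignal]_{\edge}}$ in \cref{line:adapush_endDSP}, \ie a longest-path computation on each DAG $\graph^{\pair}$ with edge weights $\abs{\curr[\signal]_{\edge}-\curr[\testsignal]_{\edge}}$; carried out with the Bellman--Ford algorithm \citep{bellman1958routing,ford1956network} this costs $\bigoh\parens*{\abs{\vertices^{\pair}}\abs{\edges^{\pair}}}$ per pair, and bounding the total crudely by $\abs{\pairs}\abs{\vertices}\abs{\edges}$ yields the stated per-iteration bound (the space requirement stays $\bigoh\parens*{\graphsize}$). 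The main obstacle is step~\textbf{(c)}: reconstructing a bona-fide \acl{dis} from the \emph{averaged loads} rather than from an averaged flow is legitimate only because the load map is linear and because the averaged load vector still satisfies flow conservation and strict positivity on $\graph^{\pair}$ --- precisely what makes the normalization in \cref{Line:ppm_disflow} well-defined and forces the reconstructed \acl{dis} to induce the intended loads; everything else is classical weight-pushing plus routine inductions.
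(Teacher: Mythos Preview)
Your proposal is correct and follows the same emulation strategy as the paper: establish that \ppm reproduces, at the level of edge-loads, the averaged flows of \adaweight (weight-pushing identity for the backward pass, linearity of the load map for the forward/averaging pass, flow-conservation for the matching pass), couple the two algorithms by induction on $\run$, and inherit \cref{thm:AEW}. Your treatment is in fact more careful than the paper's on two points it glosses over. First, you explicitly verify that the averaged load vector retains flow conservation and strict positivity, so the normalization in \cref{Line:ppm_disflow} is well-defined. Second, and more substantively, you notice that \cref{line:adapush_endDSP} uses $\max_{\route}\sum_{\edge\in\route}\abs{\curr[\signal]_{\edge}-\curr[\testsignal]_{\edge}}$ rather than $\supnorm{\curr[\signal]-\curr[\testsignal]}=\max_{\route}\bigl\lvert\sum_{\edge\in\route}(\curr[\signal]_{\edge}-\curr[\testsignal]_{\edge})\bigr\rvert$, so \adapush does \emph{not} literally reproduce \adaweight's learning-rate sequence; the paper simply asserts inheritance without addressing this discrepancy. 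Your triangle-inequality majorant argument is the right patch: replacing $\supnorm{\curr[\signal]-\curr[\testsignal]}$ throughout \cref{sec:proof_adaweight} by the larger edge-level surrogate only tightens the negative term in \eqref{eq:ADaEW_Stoch_3term0} while the telescoping in \eqref{eq:AEW_stoch_infty2} and the truncation-at-$\nRuns_0$ argument remain intact, so the rates carry over at the same order.
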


Result~\textit{(i)} of \cref{thm:adapush} shows that \adapush also converges toward an equilibrium with the same rate as \adaweight: an $\bigoh((\log\nRoutes)^{3/2}/\sqrt{\nRuns})$ rate in the stochastic regime and an $\bigoh((\log\nRoutes)^{3/2}/{\nRuns}^2)$ rate in the static regime. More precisely, we shall see in the proof of \cref{thm:adapush} that \adapush recommends a sequence of \aclp{dis} that induce the same BMW-potential values as the sequence of flows recommended by \adaweight which, in turn, approaches the potential value at an equilibrium (see \cref{propo:induce_load} below). Note that although the flow $\curr[\flow]$ is mentioned in \cref{thm:adapush}, it is never computed by \adapush and it is not needed for routing the traffic in practice (only the \acl{dis} $\curr[\disflow]$ is~needed). 

Moreover, Result~\textit{(ii)} of \cref{thm:adapush} shows the main difference between using \adapush and using \adaweight: the per-iteration (space and time) complexities of \adapush are polynomial in terms of the network's primitive parameters (numbers of \ac{OD} pairs, numbers of vertices and numbers of edges). Therefore, unlike \adaweight, \adapush can run efficiently even in large networks.

\subsection{Proof of \cref{thm:adapush}}

	\paragraph{First, we prove that $\last[\disflow]$ and $\last[\flow]$ \textendash\ as defined in \cref{thm:adapush} \textendash\ induce the same costs.} In fact, we can prove a stronger result as follows:
	\begin{proposition}\label{propo:induce_load}
		Any \acl{dis} profile $\disflow \in \disflows$ and the flow profile $\flow \in \flows$ such that $\flow^{\pair}_{\route} \defeq \mass{\pair} \prod_{\edge \in \route} \disflow^{\pair}_{\edge}$ induce the same load profiles, \ie $\load_{\edge}(\disflow) = \load_{\edge} (\flow)$ for any $\edge$.
	\end{proposition}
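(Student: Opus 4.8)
The plan is to reduce the statement to a per-pair identity and then to identify the ``arriving-mass'' quantity $\ar^{\pair}_{\vertex}(\disflow)$ of \eqref{eq:def_load_mass} with a sum of path-products of local flows. Fix a pair $\pair\in\pairs$; since $\graph^{\pair}$ is the acyclic subgraph spanned by the $\source{\pair}$--$\sink{\pair}$ routing paths, every vertex $\vertex\in\vertices^{\pair}$ lies on such a path, so in particular $\sink{\pair}$ is reachable from $\vertex$ and $\Out{\vertex}{\pair}\neq\emptyset$ whenever $\vertex\neq\sink{\pair}$. For $\vertex\in\vertices^{\pair}$, I would introduce two bookkeeping scalars: $g_{\vertex}$, the sum over all directed paths from $\source{\pair}$ to $\vertex$ in $\graph^{\pair}$ of $\prod_{\edge\in\route}\disflow^{\pair}_{\edge}$ (the empty path from $\source{\pair}$ to itself contributing $1$), and $h_{\vertex}$, defined analogously as the corresponding sum over paths from $\vertex$ to $\sink{\pair}$; here $\disflow^{\pair}_{\edge}$ abbreviates $\disflow^{\pair}_{\tail{\edge},\edge}$, exactly as in the statement. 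Both sums are finite because $\graph^{\pair}$ is a DAG.

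The first step is the normalization $h_{\vertex}=1$ for every $\vertex\in\vertices^{\pair}$, proved by backward induction along a topological order of $\graph^{\pair}$: the base case $h_{\sink{\pair}}=1$ records the empty path, while for $\vertex\neq\sink{\pair}$ each path from $\vertex$ to $\sink{\pair}$ begins with a unique edge $\edge\in\Out{\vertex}{\pair}$ and has its remainder a path from $\head{\edge}$ to $\sink{\pair}$, so $h_{\vertex}=\sum_{\edge\in\Out{\vertex}{\pair}}\disflow^{\pair}_{\vertex,\edge}\,h_{\head{\edge}}=\sum_{\edge\in\Out{\vertex}{\pair}}\disflow^{\pair}_{\vertex,\edge}=1$ by the inductive hypothesis and the simplex constraint $\disflow^{\pair}_{\vertex}\in\disflows^{\pair}_{\vertex}$. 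As a byproduct, $g_{\sink{\pair}}=h_{\source{\pair}}=1$ (both equal $\sum_{\route\in\routes^{\pair}}\prod_{\edge\in\route}\disflow^{\pair}_{\edge}$), which re-confirms that the $\flow$ displayed in the statement is feasible, i.e. $\sum_{\route\in\routes^{\pair}}\flow^{\pair}_{\route}=\mass{\pair}$.

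The second step identifies the arriving mass: $\ar^{\pair}_{\vertex}(\disflow)=\mass{\pair}\,g_{\vertex}$ for all $\vertex\in\vertices^{\pair}$, now by forward induction along the topological order. The base case $\vertex=\source{\pair}$ is immediate from \eqref{eq:def_load_mass} and $g_{\source{\pair}}=1$; for $\vertex\neq\source{\pair}$, \eqref{eq:def_load_mass} gives $\ar^{\pair}_{\vertex}(\disflow)=\sum_{\edgealt\in\In{\vertex}{\pair}}\disload^{\pair}_{\edgealt}(\disflow)=\sum_{\edgealt\in\In{\vertex}{\pair}}\disflow^{\pair}_{\edgealt}\,\ar^{\pair}_{\tail{\edgealt}}(\disflow)$, and substituting the inductive hypothesis together with the decomposition $g_{\vertex}=\sum_{\edgealt\in\In{\vertex}{\pair}}g_{\tail{\edgealt}}\,\disflow^{\pair}_{\edgealt}$ (each path into $\vertex$ has a unique last edge) closes the induction. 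Combining the two steps, for every $\edge\in\edges^{\pair}$ we get $\disload^{\pair}_{\edge}(\disflow)=\disflow^{\pair}_{\edge}\,\ar^{\pair}_{\tail{\edge}}(\disflow)=\mass{\pair}\,g_{\tail{\edge}}\,\disflow^{\pair}_{\edge}$; on the other hand, every $\source{\pair}$--$\sink{\pair}$ path through $\edge$ factors uniquely as a path from $\source{\pair}$ to $\tail{\edge}$, the edge $\edge$, and a path from $\head{\edge}$ to $\sink{\pair}$ (and conversely such concatenations are automatically vertex-disjoint because $\graph^{\pair}$ is acyclic), so $\sum_{\route\in\routes^{\pair}}\oneof{\edge\in\route}\prod_{\edgealt\in\route}\disflow^{\pair}_{\edgealt}=g_{\tail{\edge}}\,\disflow^{\pair}_{\edge}\,h_{\head{\edge}}=g_{\tail{\edge}}\,\disflow^{\pair}_{\edge}$ by the first step. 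Multiplying by $\mass{\pair}$ and recalling $\flow^{\pair}_{\route}=\mass{\pair}\prod_{\edgealt\in\route}\disflow^{\pair}_{\edgealt}$ yields $\sum_{\route\in\routes^{\pair}}\oneof{\edge\in\route}\flow^{\pair}_{\route}=\disload^{\pair}_{\edge}(\disflow)$.

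Finally I would sum over $\pair\in\pairs$. For an edge $\edge$ belonging to no $\edges^{\pair}$ both sides are zero, so for every $\edge\in\edges$ we obtain $\load_{\edge}(\flow)=\sum_{\pair\in\pairs}\sum_{\route\in\routes^{\pair}}\oneof{\edge\in\route}\flow^{\pair}_{\route}=\sum_{\pair\in\pairs}\disload^{\pair}_{\edge}(\disflow)=\load_{\edge}(\disflow)$, which is the assertion. The one genuinely delicate point --- and the place I expect to spend the most care --- is the bijective bookkeeping underlying the two inductions and the factorization identity: one must check that the prefix/edge/suffix factorization of paths is a true bijection (which uses acyclicity of $\graph^{\pair}$ in both directions) and that the inductions run on an honest topological order, so that the recursions in \eqref{eq:def_load_mass} and in the definitions of $g$ and $h$ are well-founded. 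Once that is set up, everything reduces to telescoping the local simplex constraints $\sum_{\edge\in\Out{\vertex}{\pair}}\disflow^{\pair}_{\vertex,\edge}=1$.
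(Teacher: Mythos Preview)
Your proposal is correct and follows essentially the same route as the paper: your $h_{\vertex}$ is exactly the paper's quantity $\multi^{\pair}_{\vertex}$ (proved equal to $1$ by backward induction along a topological order using the simplex constraint), and your identity $\ar^{\pair}_{\vertex}(\disflow)=\mass{\pair}g_{\vertex}$ is precisely what the paper invokes when it writes $\ar^{\pair}_{\vertex}(\disflow)=\mass{\pair}\sum_{\routealt\in\routes^{\source{\pair}}_{\vertex}}\prod_{\edgealt\in\routealt}\disflow_{\edgealt}$ without a separate proof. The factorization of paths through $\edge$ into prefix/edge/suffix and the concluding chain of equalities are identical in both arguments; if anything, your write-up is more explicit about the forward induction for $g_{\vertex}$, which the paper leaves implicit.
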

	\begin{proof}{\emph{Proof of \cref{propo:induce_load}.}}
		Fix an \ac{OD} pair $\pair \in \pairs$. Let $\routes^{\source{\pair}}_{\vertex}$ and $\routes^{\vertex}_{\sink{\pair}}$ denotes the set of paths in $\graph^\pair$ going from $\source{\pair}$ to $\vertex$ and the set of paths going from $\vertex$ to $\sink{\pair}$ respectively. Let us denote $\multi^{\pair}_{\vertex} \defeq  \sum_{\route \in \routes^{\vertex}_{\sink{\pair}}}  \prod_{\edge \in \route } \disflow^{\pair}_{\tail{\edge}, \edge}$ for any $\vertex \in \vertices^{\pair} \backslash \{\sink{\pair}\}$ and conventionally set $\multi^{\pair}_{\sink{\pair}} = 1$. By induction (following any topological order of vertices), we can prove that $\multi^{\pair}_{\vertex}= 1$ for any $\vertex \in \vertices^\pair$.
		%
%		\begin{equation}
%. \label{eq:H=1}
%		\end{equation}
		%
		%
		%		
		%
		%
		  %
		  %
%		 Assume that \eqref{eq:H=1} holds for $m=1,\ldots, k-1$. Due to our labeling process, this implies that $\multi^{\pair}_{\head{\edge}} = 1$ for any $\edge \in \Out{\vertex_k}{\pair}$ (since $\head{\edge} \in \child{{\vertex}_k}{\pair}$). Therefore, $\multi^{\pair}_{\vertex_k} = \sum_{\edge \in \Out{\vertex_k}{\pair}} \disflow^{\pair}_{\edge} \multi^{\pair}_{\head{\edge}} =1$ due to the fact that $\sum_{\edge \in \Out{\vertex_k}{\pair}} \disflow^{\pair}_{\edge} = 1$. By induction, we obtain \eqref{eq:H=1}.
		%
		 From this result, the definition of the load profile in \eqref{eq:def_load_mass} and the definition of $\multi^{\pair}_{\vertex}$, the following equality holds for any $\vertex$ and $\edge \in \Out{\vertex}{\pair}$: 
		 \begin{align*}
		 	\load^{\pair}_{\edge}(\disflow) = \ar^{\pair}_{\vertex}(\disflow) \cdot \disflow^{\pair}_{\edge} \cdot \multi^{\pair}_{\head{\edge}}   	& = \mass{\pair} \sum_{\routealt \in \routes^{\source{\pair}}_{\vertex} }  \prod_{\edgealt  \in \routealt}\disflow_{\edgealt}  \cdot \disflow_{\edge}  \cdot \sum_{\route \in \routes^{\head{\edge}}_{\sink{\pair}} }  \prod_{\edgealtalt  \in \route}\disflow_{\edgealtalt} =  \mass{\pair} \sum_{\route \in \routes_\pair \atop \route \ni \edge} \prod_{\edgealt \in \route}\disflow_{\edgealt } = \sum_{\route \in \routes_\pair \atop \route \ni \edge} \flow^{\pair}_{\route} =\load^{\pair}_{\edge}(\flow). 
		 \end{align*} 
	 \qed
	\end{proof}
	Apply \cref{propo:induce_load}, since $\last[\disflow]$ and its corresponding flow $\last[\flow]$ induce the same load profiles, by definition of the cost functions, they induce the same costs, \ie $\late(\last[\flow], \sample) =  \late(\last[\disflow], \sample)$ for any state $\sample \in \samples$.

	\paragraph{Second, we prove that the recommendations of \adapush and \adaweight coincide.} Let us first focus on the recommendation phase of Algorithm~\ref{alg:adapush}. Particularly, assume that at time epoch $\run$, the anchor local-load profile $\curr[\Anc]$ in Algorithm~\ref{alg:adapush} matches the load induced by the anchor flow $\curr[\anchor]$ defined at Line~\ref{line:AEW-anchor} of Algorithm~\ref{alg:AEW} and also assume that the local-weight $\curr[\dtestdis]$ used in Algorithm~\ref{alg:adapush} and the weight $\dtest^{\run}$ defined at Line~\ref{line:AEW-dpointave} of Algorithm~\ref{alg:AEW} satisfy that $\sum_{\edge \in \route} \dtestdis^{\run, \pair}_{\edge} =\dtest^{\run}_{\route}$ for any $\route \in \routes$ and $\pair \in \pairs$. We will prove that the \acl{dis} profile $\curr[\disstate]$ output from $\ppm(\Anc^{\run}, \curr[\learn] \dtestdis^{\run}, \curr[\step], \sum_{\runalt=0}^{\run} \iter[\step])$ induces the same load profiles (and the same costs) as the flow profile $\curr[\state]$ in Algorithm~\ref{alg:AEW}.
	
	To do this, we observe that the following equality holds true for any $\pair \in \pairs$ and $\route \in \routes^{\pair}$ and with the \acl{dis} profile $\precom$ computed in the pushing-backward phase of $\ppm(\Anc^{\run}, \curr[\learn] \dtestdis^{\run}, \curr[\step], \sum_{\runalt=0}^{\run} \iter[\step])$:
	\begin{equation}\label{eq:precomflow}
		\mass{\pair} \prod_{\edge \in \route} \precom^{\pair}_{\tail{\edge}, \edge } = \mass{\pair} \prod_{\edge \in \route} \exp(\dtestdis^{\run, \pair}_{\edge}) \frac{\scorebw{\head{\edge}}{}{} }{ \scorebw{\tail{\edge}}{}{}} = \frac{\exp \parens*{ \sum_{\edge \in \route } \dtestdis^{\run, \pair}_{\edge} }} {\scorebw{\source{\pair}}{}{}} = \frac{\exp(\dtest^{\run}_{\route})}{\sum_{\routealt \in \routes^{\pair}} \exp(\dtest^{\run}_{\routealt})} = \precom^{\run}_{\route}. 
	\end{equation} 
Here, the last equality comes directly from the update rule of $\curr[\precom]$ in Line~\ref{line:AEW-flow12} of Algorithm~\ref{alg:AEW}. 

Now, combine \eqref{eq:precomflow} with \cref{propo:induce_load}, the loads induced by $\precom$ of $\ppm(\Anc^{\run}, \curr[\learn] \dtestdis^{\run}, \curr[\step], \sum_{\runalt=0}^{\run} \iter[\step])$ are precisely the loads induced by $\curr[\precom]$ in Algorithm~\ref{alg:AEW}. Moreover, by definition of local-load (cf. \cref{sec:distr_setup}), we have that the term $\disload^{\pair}_{\edge} (\precom)$ computed at Line~\ref{Line:Adapush_disload_precom} of $\ppm$ are precisely these loads.

Due to the arguments above and the assumption on the relation of $\curr[\Anc]$ and $\curr[\anchor]$, the term $\disload^{\pair}_{\edge} (\disflow)$ computed at Line~\ref{Line:Adapush_disload_disflow} of $\ppm$ is precisely the load induced by the averaged flow $\curr[\state]$ computed at Line~\ref{line:AEW-recom-flow} of Algorithm~\ref{alg:AEW}. Finally, from the updating rule of the \acl{dis} profile $\disflow$ at Line~\ref{Line:ppm_disflow} of \ppm, we deduce that these loads match precisely with the loads induced by $\disflow$ (which is the output of \ppm).

As a conclusion, since we set $\curr[\disflow]$ as the output of $\ppm(\Anc^{\run}, \curr[\learn] \dtestdis^{\run}, \curr[\step], \sum_{\runalt=0}^{\run} \iter[\step])$, the loads (and hence, the costs) induced by $\curr[\disflow]$ are precisely that of the flows $\curr[\state]$ recommended by Algorithm~\ref{alg:AEW}. Using a similar line of arguments, we can also deduce that the \acl{dis} profile $\distest^{\run}$ computed in Algorithm~\ref{alg:adapush} also matches the flow profile $\curr[\test]$ computed in \adaweight. For this reason, the assumptions we made on $\curr[\dtestdis]$ and $\curr[\Anc]$ in the above proof actually hold true.

To sum up, we have proved that the flow profiles defined in \cref{thm:adapush} have the same loads as the \acl{dis} profiles recommended by \adapush that, in turn, coincide with the loads incurred by the recommended flows of \adaweight. This leads to the fact that \adapush (Algorithm~\ref{alg:adapush}) inherits the convergence rates of \adaweight (Algorithm~\ref{alg:AEW}). This concludes the proof of Result~\textit{(i)} of \cref{thm:AEW}.
	
%	The only non-trivial step in the validation discussed above is to show that the \aclp{dis} obtained at the end of the pushing-backward phase of \ppm coincides with the pivot flows computed via the logit mapping in Algorithm~\ref{alg:AEW} (Lines~\ref{line:AEW-flow} and \ref{line:AEW-flow12}). First, we focus on the test phase (Lines 3-5) of Algorithm~\ref{alg:adapush} in which we repeatedly call the sub-routine $\ppm(\Anc^{\run, \pair}_{\vertex}, \curr[\learn] \drecomdis^{\run}_{\vertex})$. When the pushing-backward phase of $\ppm(\Anc^{\run, \pair}_{\vertex}, \curr[\learn] \drecomdis^{\run}_{\vertex})$ terminates for any vertex $\vertex$ and any pair $\pair \in \pairs$, let us denote by $\ptest^{\run}$ the \acl{dis} computed in this phase. By 

	\paragraph{Finally, we justify the per-iteration complexity of \adapush}. It is trivial to see that when \ppm is run, at any vertex $\vertex$ and \ac{OD} pair $\pair$, each of its phases only takes $\bigoh (\max\braces{ \texttt{In}^{\pair}_{\vertex},\texttt{Out}^{\pair}_{\vertex}})$ rounds of computations where $\texttt{In}^{\pair}_{\vertex}$ and $\texttt{Out}^{\pair}_{\vertex}$ are the in-degree and out-degree of $\vertex$ in $\graph^{\pair}$. As mentioned above, the learning rate update step can be done in $\bigoh(\abs{\vertices})$ time. We conclude that each iteration of \adapush requires only an $\bigoh\parens*{\abs{\pairs} \abs{\vertices} \abs{\edges}}$ number of computations at each vertex. 
	\qed

\section{Numerical Experiments}
\label{sec:numerics}
%----------------------------------------------------------------------
%%% EXPERIMENTS
%----------------------------------------------------------------------
% !TEX root = ./Main.tex
%
%--------------------------------------------------------

In this section, we report the results of several numerical experiments that we conducted to justify the theoretical convergence results of \ac{AEW} and \ac{ALW}. Particularly, in \cref{sec:toy-example}, we present the experiments on a toy example to highlight the advantages of \ac{AEW} / \ac{ALW} over several benchmark algorithms. Then, in \cref{sec:real-world}, we show the superiority in performance of \ac{ALW} in real-word networks. The experiments with real-world data pose several additional computational challenges; we also discuss these challenges and provide quick-fix solutions. The codes of our experiments are available at \url{dongquan-vu/Adaptive_Distributed_Routing}.

\subsection{Experiments on small-size networks}
\label{sec:toy-example}
We first consider a toy-example of on a network with 4 vertices and 4 edges. Specifically, the edges in this network are arranged to form 2 parallel paths going from an origin vertex, namely $\source{}$, to a destination vertex, namely $\sink{}$. At each time epoch, a traffic demand (\ie inflow) of size $\mass{}=10$ is sent from $\source{}$ to $\sink{}$. 

In the experiments presented below, we run \ac{AEW}, \ac{EW} and \acceleweight in \mbox{$\nRuns_{\max}=10\textrm{e}5$} epochs. We record the values of the BMW potential $\cost$ at outputs of each algorithm at each time $\nRuns=1,2, \ldots, \nRuns_{\max}$. We then identify the value $\cost(\flowbase)$ which is the minimum among all $\cost$-value of flow profiles computed by these algorithms in all iterations; thus, $\flowbase$ represents the equilibrium~flow. In order to track down the convergence properties of these algorithms, we will compute and plot out the evolution (when $\nRuns = 1, 2, \ldots, \nRuns_{\max}$) of the \emph{gaps} (cf. \cref{eq:gap}) of the flows derived from these algorithms that are analyzed in \cref{thm:EW}, \cref{thm:XLEW} and \cref{thm:AEW} (we denote these gaps by $\gap_{\textrm{\ac{AEW}}}(\nRuns)$, $\gap_{\ac{EW}}(\nRuns)$, and $\gap_{\textrm{\ac{XLEW}}}(\nRuns)$ for short).

%\begin{enumerate}[{\itshape (i)}, leftmargin=0pt]
%	\item \mbox{$\gap_{\textrm{\ac{AEW}}}(\nRuns) \defeq \ex \bracks*{\cost \parens*{\state^{\nRuns}} \!-\! \cost \parens{\flowbase}}$} where $\state^{\nRuns}$ is output (at Line~\ref{line:AEW-recom-flow}) by Algorithm~\ref{alg:AEW}.
%	%
%	\item  \mbox{$\gap_{\ac{EW}}(\nRuns) \defeq \ex \bracks*{\cost \parens*{\aveplay^{\nRuns}} \!-\!\cost \parens{\flowbase}}$} where $\aveplay^{\nRuns}$ is the time-averaged flow of Algorithm~\ref{alg:EW}.
%	%
%	\item \mbox{$\gap_{\textrm{\ac{XLEW}}}(\nRuns) \defeq \ex \bracks*{\cost \parens*{\state^{\nRuns}} \!-\! \cost \parens{\flowbase}}$} where $\state^{\nRuns}$ is computed (at Line~\ref{line:XLEW_flow}) by Algorithm~\ref{alg:XLEW}.
%	%
%\end{enumerate}

\begin{figure}[b!]
	\centering
	\begin{subfigure}[b]{.4\linewidth}
		\begin{tikzpicture}
			\node (img){\includegraphics[height = 0.18\textheight]{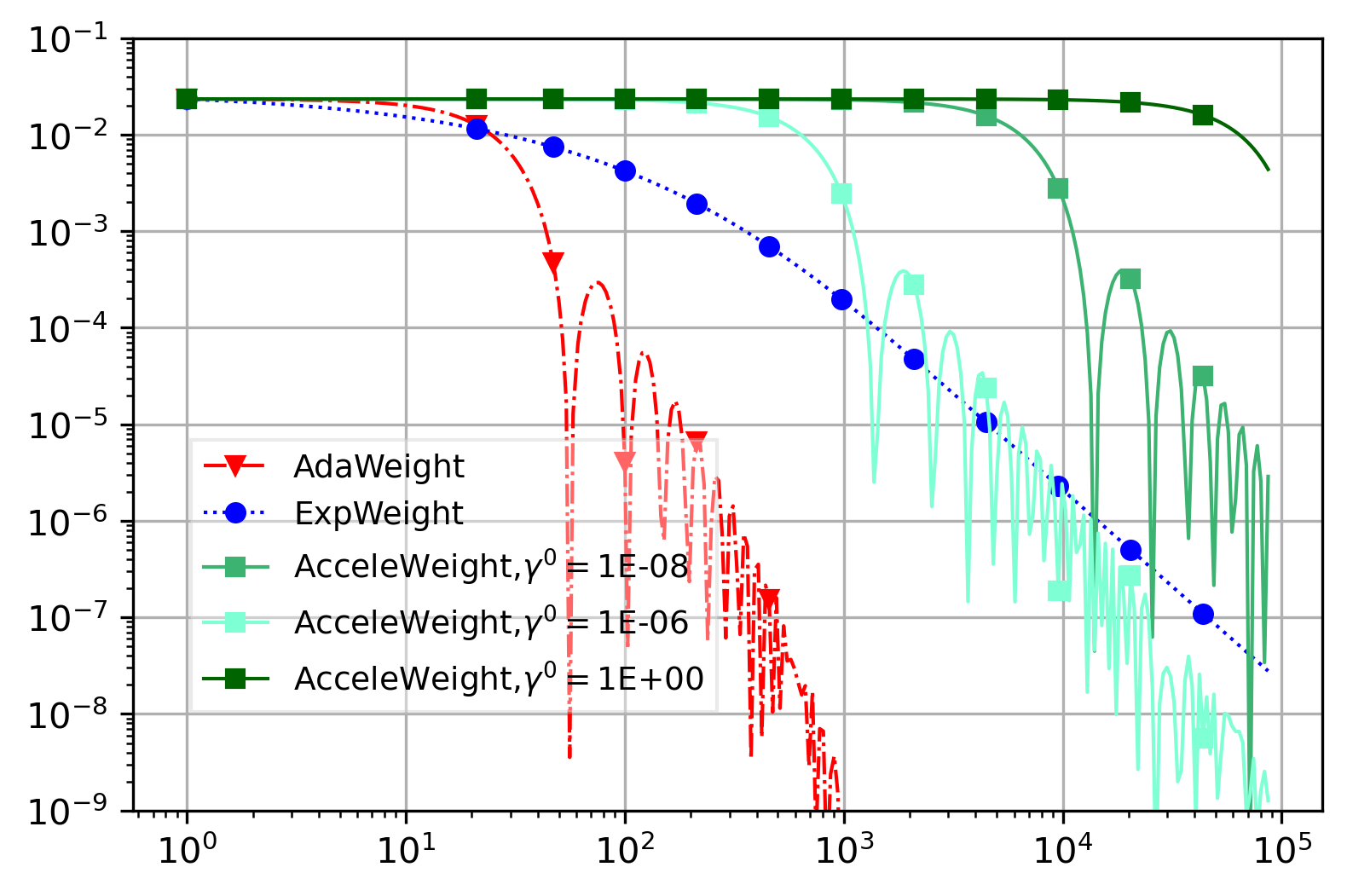}};
			\node[below=of img, node distance=0cm, yshift=1.2cm, xshift =0.2cm] {\scriptsize${\nRuns}$};
			%	 		\node[below=of img, node distance=0cm, yshift=0.9cm, xshift =0.2cm] {\textit{(b)} In the \textbf{static} regime};
			\node[left=of img, node distance=0cm, rotate=90, anchor=center,xshift=0cm, yshift=-1.1cm] {\scriptsize${\gap(\nRuns)}$};
		\end{tikzpicture}
		\caption{\footnotesize Log-log plot of $\gap(\nRuns)$ in static setting}
		\label{fig:parallel_static}
	\end{subfigure}
	\begin{subfigure}[b]{.4\linewidth}
		\begin{tikzpicture}
			\node (img){\includegraphics[height = 0.18\textheight]{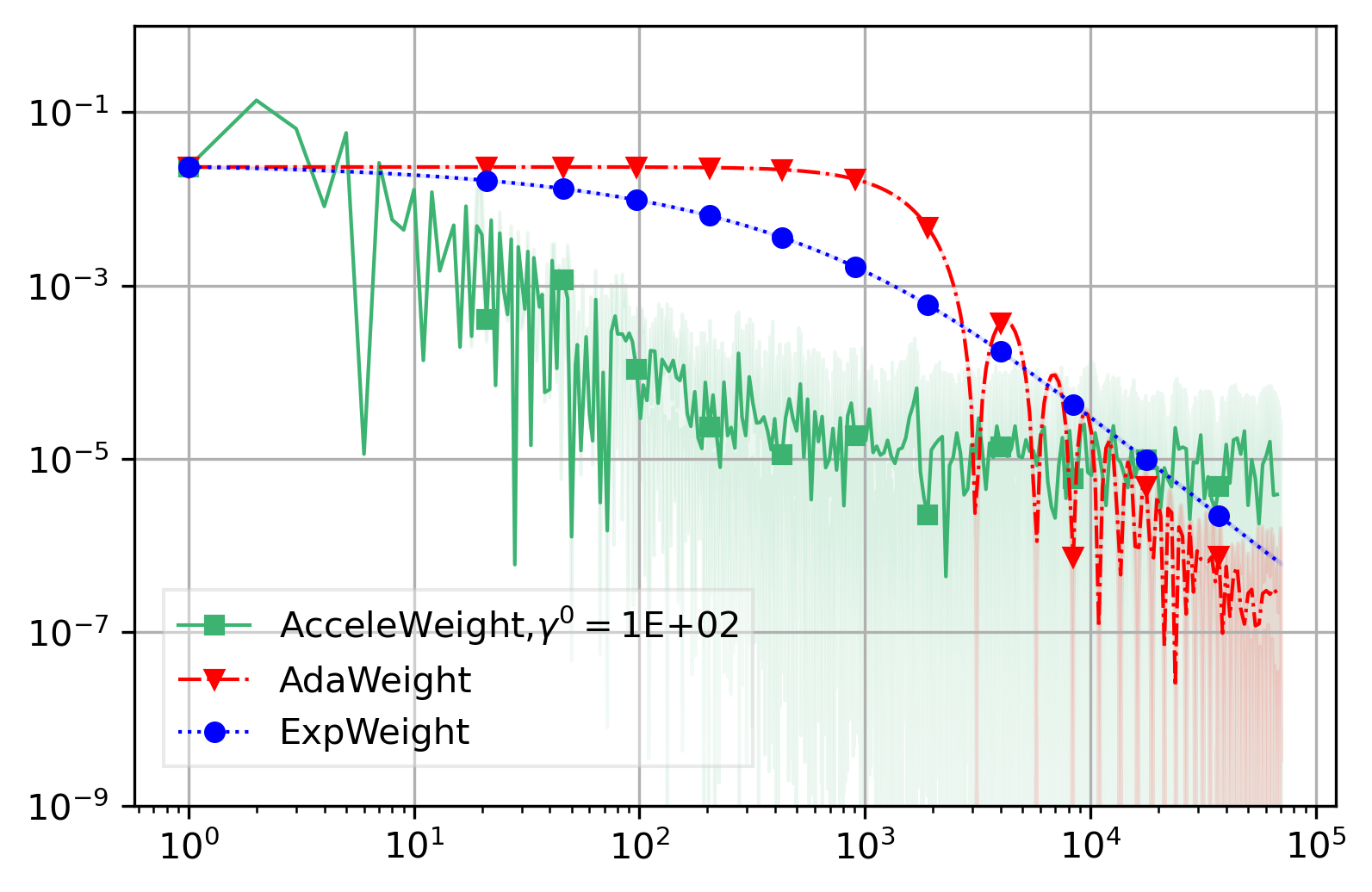}};
			\node[below=of img, node distance=0cm, yshift=1.25cm, xshift =0.2cm] {\scriptsize${\nRuns}$};
			%	 		\node[below=of img, node distance=0cm, yshift=1cm, xshift =0.2cm] {\textit{(a)} In the \textbf{stochastic} regime};
			\node[left=of img, node distance=0cm, rotate=90, anchor=center,xshift=0cm, yshift=-1.1cm] {\scriptsize${\gap(\nRuns)}$};
		\end{tikzpicture}
		\caption{\footnotesize  Log-log plot of $\gap(\nRuns)$ in stochastic setting}
		\label{fig:parallel_stochastic}
	\end{subfigure}%
	\vspace{-.5ex}
	\caption{Convergence speed of \ac{AEW}, \ac{EW}, \acceleweight in a parallel network.}
	\vspace{1ex}
	\label{fig:parallel}
\end{figure}

First, we consider a \emph{static environment}. Particularly, the costs on edges are determined via fixed linear cost functions such that when being routed with the same load, one path has a higher cost than the other. We plot out the evolution of $\gap_{\textrm{\ac{AEW}}}(\nRuns)$, $\gap_{\ac{EW}}(\nRuns)$, and $\gap_{\textrm{\ac{XLEW}}}(\nRuns)$ in \cref{fig:parallel_static}. We observe that in this static setting, all these terms converge toward 0; in other words, all three algorithms converge towards equilibria of the game. Importantly, we observe that \ac{AEW} and \acceleweight enjoy an accelerated convergence speed that is much faster than that of \ac{EW}; this coincides with theoretical results in the previous sections. Note that if \acceleweight is run with a badly-tuned initial step-size $\gamma^\beforestart$ that is too small or too large (\eg when $\gamma^\beforestart = 1$ or $\gamma^\beforestart = 1 \textrm{e}-8$), it requires a long warming-up phase and might converge slowly. We also observe that in \cref{fig:parallel}, $\gap_{\ac{EW}}(\nRuns)$ does not seem to fluctuate as much as $\gap_{\textrm{\ac{AEW}}}(\nRuns)$ and $\gap_{\textrm{\ac{XLEW}}}(\nRuns)$, this comes from the fact that $\gap_{\ac{EW}}(\nRuns)$ is computed from the time-averaged flow profiles of outputs of \ac{EW}. We recall that these time-averaged flows are not the flows recommended by \ac{EW}.

Second, we consider a \emph{stochastic} setting. Particularly, the costs on edges are altered by adding noises generated randomly from a zero-mean normal distribution. To have a better representation of these uncertainties, we run each algorithm in 5 different instances of the noises' layouts. We take the averaged results among these instances and plot out the evolution of $\gap_{\textrm{\ac{AEW}}}(\nRuns)$, $\gap_{\ac{EW}}(\nRuns)$ and $\gap_{\textrm{\ac{XLEW}}}(\nRuns)$ in \cref{fig:parallel_stochastic}. In this setting, we observe that while \ac{AEW} and \ac{EW} converge toward equilibria, \acceleweight fails to do so. Moreover, in this stochastic environment, the accelerated rate is no longer obtainable: \ac{AEW} and \ac{EW} converge with the same rate (of order $\bigoh(1/\sqrt{\nRuns})$). This confirms the theoretical results of \ac{AEW}, \ac{EW} and \acceleweight.

\subsection{Experiments on real-world datasets}
\label{sec:real-world}
In this section, we present several experiments using a real-world dataset collected and provided in \cite{dataset2021} (with free license). This dataset contains the road networks of different cities in the world. The congestion model in this dataset is assumed to follow the BPR cost functions whose coefficients are estimated a priori. The purpose of our experiments is to measure the equilibrium convergence of our proposed methods with the presence of uncertainties and no knowledge on the cost functions.

Due to their per-iteration complexity, the naive implementations of \ac{EW}, \acceleweight and \ac{AEW} in Algorithms~\ref{alg:EW}, \ref{alg:XLEW}, \ref{alg:AEW} take an extremely long running time when being run in these real-world networks. Therefore, to deal with these large-scale networks, we consider the distributed model (presented in \cref{sec:distr_setup}) and run \ac{ALW} in place of \ac{AEW}. As a benchmark, we consider a variant of \ac{EW} implemented with the classical weight-pushing technique. These implementations enjoy a per-iteration complexity that is polynomial in terms of the networks' sizes.\footnote{In this section, to improve the visibility of numerical results, we choose to only use \ac{EW} as the benchmark and do not report the performance of \acceleweight that does not guarantee an equilibrium convergence in stochastic settings.}

Before presenting the obtained experimental results (in \cref{sec:num_exp_large}), we first address a computational issue in using weight-pushing ideas in large-scale networks. As far as we know, this issue has not been reported in any previous works (prior to this work, weight-pushing is mostly analyzed theoretically and real-world implementations have not been provided). We formally address this computational issue and introduce a quick-fix solution in \cref{sec:compu_issue} (readers who are eager to see the numerical experiments can skip this section).

\subsubsection{Computational issues of weight-pushing with large weights.}
\label{sec:compu_issue}
 The \ppm sub-algorithm that we constructed in \cref{sec:adapush-algo} involves the pushing-backward phase where a backward score (denoted by $\scorebw{}{}{}$) is assigned on each vertex in the network. In theory, this score can be computed efficiently. In practice though, when the magnitude of costs values are large, the input weights of \ppm is proportional to the negative accumulation of costs (that decreases quickly to $-\infty$) and hence, the computation of $\scorebw{}{}{}$ involves a division of two infinitesimally small numbers; this is often unsolvable by computers.  

 To resolve this issue, instead of keeping track of $\scorebw{}{}{}$, we keep track of its logarithm. Particularly, fix an \ac{OD} pair $\pair$, for each vertex $\vertex$ and each outgoing edge $\edge \in \Out{\vertex}{\pair}$, let us denote $\logscore{\vertex} \defeq \log(\scorebw{\vertex}{}{})$ and $\edgescore{\edge} = \logscore{\vertex} +  \Weight^{\pair}_{\edge} $, then from Line~\ref{line:SP-score} of Algorithm~\ref{alg:scorepush}, it can be computed by
 \begin{align}
 	 	\logscore{\vertex} & = \log \bracks*{\sum_{\edge \in \Out{\vertex}{\pair}} \exp\parens*{\logscore{\vertex} + \Weight^{\pair}_{\edge} }} \nonumber\\
 	 	& = \max_{\edgealt \in \Out{\vertex}{\pair}} (\edgescore{\edgealt}) +  \log \bracks*{\sum_{\edge \in \Out{\vertex}{\pair}} \exp\parens*{\edgescore{\edge} - \max_{\edgealt \in \Out{\vertex}{\pair}} (\edgescore{\edgealt}) } }. \label{eq:compu_issue}
 \end{align}
The expression in \eqref{eq:compu_issue} allows computers to compute $\logscore{\vertex}$ without any issues even when the magnitude of costs and $\Weight^{\pair}_{\edge}$ is large. Finally, the \acl{dis} profile at Line~\ref{line:SP-dflow} of the pushing-backward phase of \ppm can be computed as $\precom^{\pair}_{\vertex, \edge} = {1} \big/{ \exp(	\logscore{\vertex} -   \edgescore{\edge})}$.

\subsubsection{Experimental results.}
\label{sec:num_exp_large}
%----------------------------------------------------------------------
%% Experiments figure begins here
We consider several instances in the dataset \cite{dataset2021} representing the urban traffic networks of different cities. The primitive parameters of these networks are summarized in \cref{tab:networks}. More information on these networks are given at \url{https://github.com/bstabler/TransportationNetworks}. Note that in previous work, this dataset is used mostly for social-costs optimization; in our knowledge, no work has used this dataset for equilibrium searching problems.

%\scshape
\begin{table}[h!]
	\centering
	\caption{\centering \footnotesize Networks' sizes of several real-world instances}\label{tab:networks}
\begin{tabular}{lccccc}
	\toprule
	 Network name
	%		\citep{BEL06}
	& $\#$ of vertices 
	%		\citep{KBB15}
	& $\#$ of edges
	%		\citep{KLBC19}
	& $\#$ of \ac{OD} pairs
	%		\citep{Nes15}
	& $\min$ demands
	& $\max$ demands
	\\
	\midrule
	SiouxFalls
	& 24
	& 76
	& 528
	& 100
	& 4400
	\\
	Eastern-Massachusetts
	& 74
	& 258
	& 1113
	& 0.5
	& 957.7
	\\
	 Berlin-Friedrichshain
	& 224
	& 523
	& 506
	& 1.05
	& 107.53
	\\
	Anaheim
	& 416
	& 914
	& 1406
	& 1
	& 2106.5
	\\
	\bottomrule
\end{tabular}
\end{table}

For our experiments, in each network, we run \ac{ALW} and \ac{EW} in $\nRuns_{\max} = 10\textrm{e}4$ time epochs. 

%Note again that \ac{ALW} is an implementation of \ac{AEW}: the recommendation made by \ac{ALW} at each time epoch is a distributed flow that induces the same BMW-potential $\cost$-value as the recommendation made by \ac{AEW} (cf. \cref{thm:adapush}). For this reason, in the remainder of this section, we continue using the term $\gap_{\textrm{\ac{AEW}}}(\nRuns)$ when analyzing the equilibrium convergence of \ac{ALW} (that is the same as \ac{AEW}). 

\paragraph{Static environment.} First, we consider the static setting where in each time epoch, for each network, the costs on edges are determined by a BPR function with coefficients given by the dataset. The results are reported in \cref{fig:static}, particularly as follows:

\begin{enumerate}[leftmargin=0pt]
	\item In \cref{fig:SiouxFall_static}, \cref{fig:EM_static}, \cref{fig:Berlin_F_static} and \cref{fig:Anaheim_static}, we plot out the evolution of $\gap_{\ac{ALW}}(\nRuns)$ and $\gap_{\ac{EW}}(\nRuns)$ in different networks. Again, both algorithms converge toward equilibria. Importantly, the results in these large-scale networks show the superiority of \ac{ALW} (and \ac{AEW}) in the convergence speed in comparison with \ac{EW}.  We also make several side-comments as follows. First, the evolution of $\gap_{\textrm{\ac{ALW}}}(\nRuns)$ appears to involve less fluctuations than that in the small-size network considered in \cref{sec:toy-example}. This phenomenon is trivially explicable: in the toy-example with only two paths, any small changes in the implemented flow profile might lead to a large impact on the costs; in real-world instances, moving a mass of traffic from one path to the others does not make so much differences in the total induced costs accumulated from a large number of paths. Second, as we consider networks with larger sizes, both \ac{ALW} and \ac{EW} require a longer warming-up phase (where $\gap(\nRuns)$ does not decrease significantly) before picking up good recommendations and eventually converge. Albeit the case, this warming-up phase is still reasonably small and it does not restrict the applicability of our proposed algorithms. 
	\item To validate the rate of convergence (in terms of $\nRuns$), we plot out the evolution of the terms $\nRuns^2 \cdot \gap_{\ac{ALW}}(\nRuns)$ and $\nRuns^2 \cdot \gap_{\ac{EW}}(\nRuns)$ in \cref{fig:SiouxFall_static_T2}, \cref{fig:EM_static_T2}, \cref{fig:Berlin_F_static_T2} and \cref{fig:Anaheim_static_T2}. In these plots, $\nRuns^2 \cdot \gap_{\ac{ALW}}(\nRuns)$ approach a horizontal line; this confirms that the convergence speed of \ac{ALW} (also of \ac{AEW}) are precisely in order $\bigoh (\nRuns^2)$. On the other hand, \ac{EW} does not reach this convergence order (and hence, $\nRuns^2 \cdot \gap_{\ac{EW}}(\nRuns)$ continue increasing).
\end{enumerate}

\paragraph{Stochastic environment.} In this setting, the cost of edges are altered by adding random noises generated from zero-mean normal distributions. For validation purposes, for each network, we run \ac{ALW} and \ac{EW} in 5 instances (of noises' layout) and report the averaged results across these instances in \cref{fig:stochastic}.  

\begin{enumerate}[leftmargin=0pt]
	\item In \cref{fig:SiouxFall_stoch}, \cref{fig:EM_stoch}, \cref{fig:Berlin_F_stoch} and \cref{fig:Anaheim_stoch}, we plot out the evolution of $\gap_{\ac{EW}}(\nRuns)$ and $\gap_{\textrm{\ac{ALW}}}(\nRuns)$. These terms tend to zero as $\nRuns$ increases; this confirms that \ac{EW} and \ac{ALW} converge toward equilibria in this stochastic setting. Moreover, in this setting, the convergence rate of \ac{ALW} (also of \ac{AEW}) and \ac{EW} are of the same order. 
	\item To justify the convergence speed of the considered algorithms, in \cref{fig:SiouxFall_stoch_T2}, \cref{fig:EM_stoch_T2}, \cref{fig:Berlin_F_stoch_T2} and \cref{fig:Anaheim_stoch_T2}, we plot our the terms $\sqrt{\nRuns} \cdot \gap_{\textrm{\ac{ALW}}}(\nRuns)$ and $\sqrt{\nRuns} \cdot \gap_{\ac{EW}}(\nRuns)$. These terms approach horizontal lines as $\nRuns$ increases. This reaffirms the fact that the speed of convergence of \ac{EW} and \ac{AEW} are $\bigoh(1/\sqrt{\nRuns})$ in the stochastic regime. This result is consistent with our theoretical results.
\end{enumerate}

\paragraph{On the elapsed time of \ac{ALW}.} We end this section with an interesting remark. Even in the largest network instance in our experiments (Anaheim), it only takes \ac{ALW} around \emph{4 seconds to finish one round of learning iteration} and to output a route recommendation (the computations are conducted on a machine with the following specs: Intel Core(TM) i7-9750H CPU \@ 2.60GHz and 8GB RAM). For most of the applications in urban traffic routing, the scale of time between fluctuations of networks' states is often much larger than this elapsed time (it might take hours or even days for a significant change to happen). This highlights the implementability and practicality of our proposed methods, even in networks with much larger~sizes.

%% Experiments figure ends here
%----------------------------------------------------------------------

%%%-------------------------------------------
%   Figures
%----------------------------------------------

\def\sizesub{0.48\linewidth}
\def\figureheight{0.16\textheight}
\def\yshift{1.25cm}
\def\xshift{0.2cm}
\def\ylabelshift{-1.1cm}

\begin{figure}[htb!]
	\centering
	\captionsetup{justification=raggedright, singlelinecheck=false}
	%----------------------------------
	%          SiouxFalls
	%----------------------------------
	\begin{subfigure}[b]{\sizesub}
		\begin{tikzpicture}
			\node (img){\includegraphics[height = \figureheight]{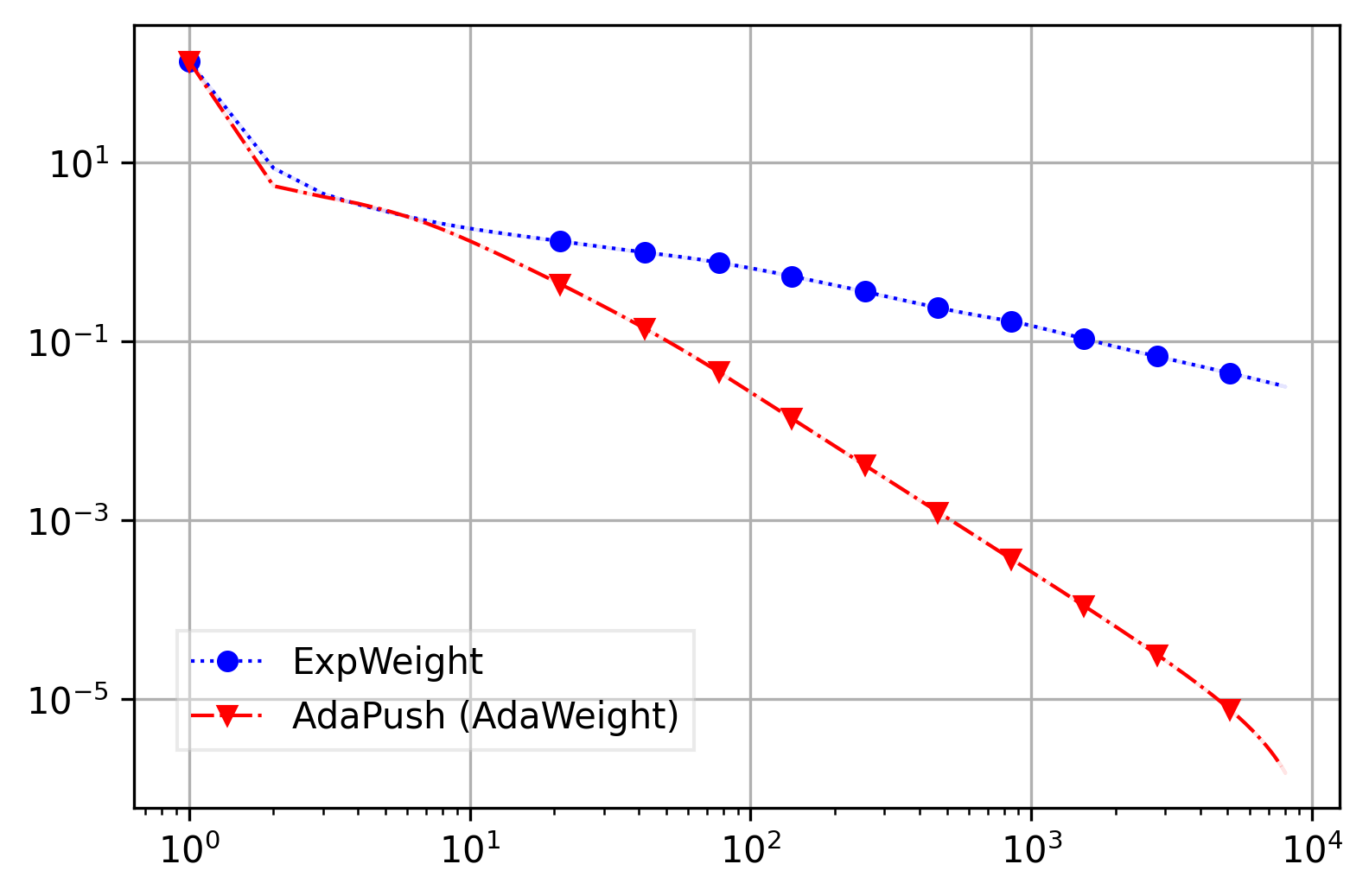}};
			\node[below=of img, node distance=0cm, yshift=\yshift, xshift =\xshift] {\scriptsize${\nRuns}$};
			\node[left=of img, node distance=0cm, rotate=90, anchor=center,xshift=0cm, yshift=\ylabelshift] {\scriptsize${\gap(\nRuns)}$};
		\end{tikzpicture}
		\caption{SiouxFalls: Log-log plot of $\gap(\nRuns)$}
		\label{fig:SiouxFall_static}
	\end{subfigure}%
	\begin{subfigure}[b]{\sizesub}
		\begin{tikzpicture}
			\node (img){\includegraphics[height = \figureheight]{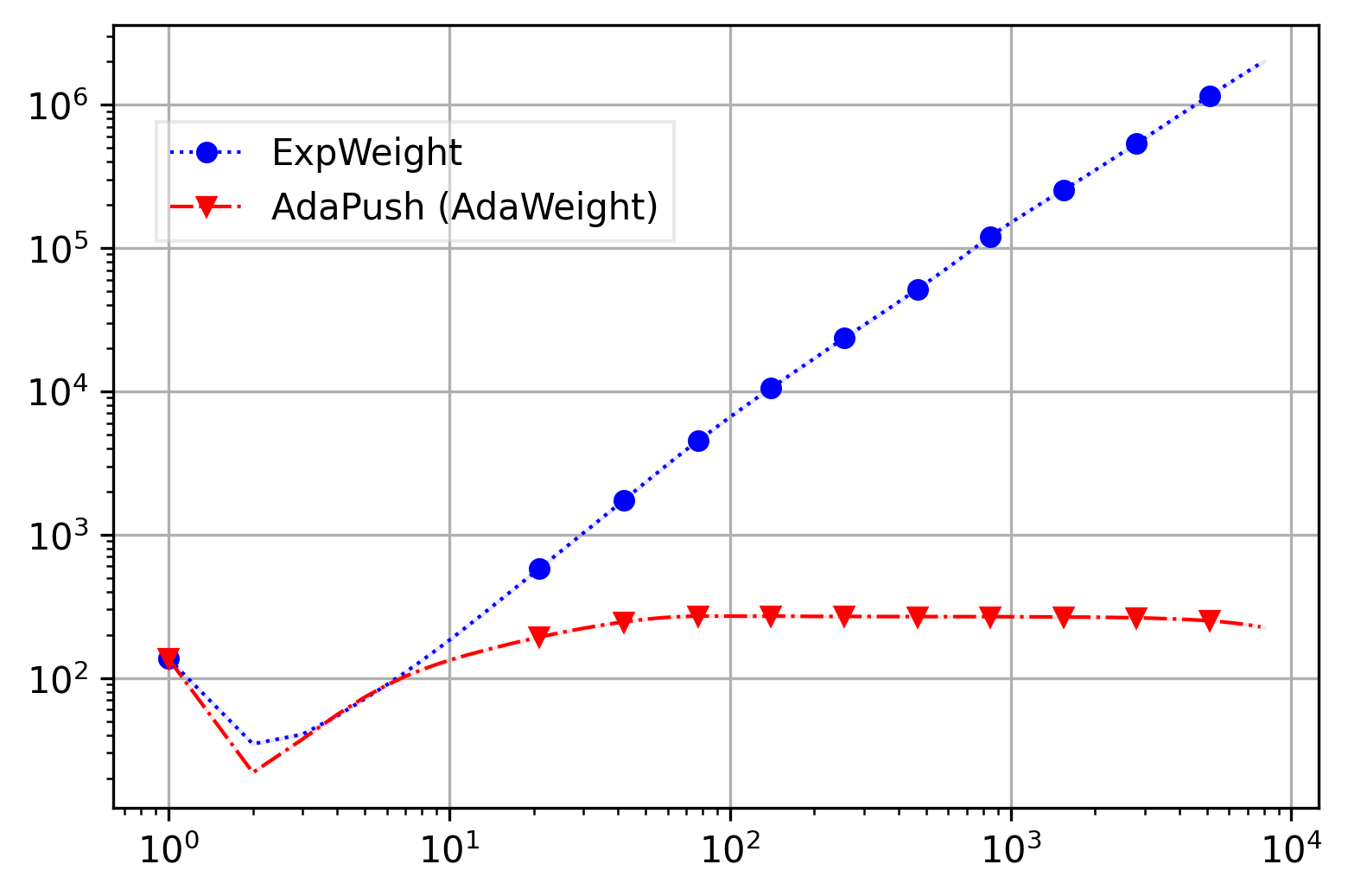}};
			\node[below=of img, node distance=0cm, yshift=\yshift, xshift =\xshift] {\scriptsize${\nRuns}$};
			\node[left=of img, node distance=0cm, rotate=90, anchor=center,xshift=0cm, yshift=\ylabelshift+2mm] {\scriptsize$\nRuns^2 \cdot {\gap(\nRuns)}$};
		\end{tikzpicture}
		\caption{SiouxFalls: Log-log plot of $\nRuns^2 \cdot \gap(\nRuns)$}\label{fig:SiouxFall_static_T2}
	\end{subfigure}

	%----------------------------------
	%          Eastern-Massachussettes
	%----------------------------------

	\begin{subfigure}[b]{\sizesub}
		\begin{tikzpicture}
			\node (img){\includegraphics[height = \figureheight]{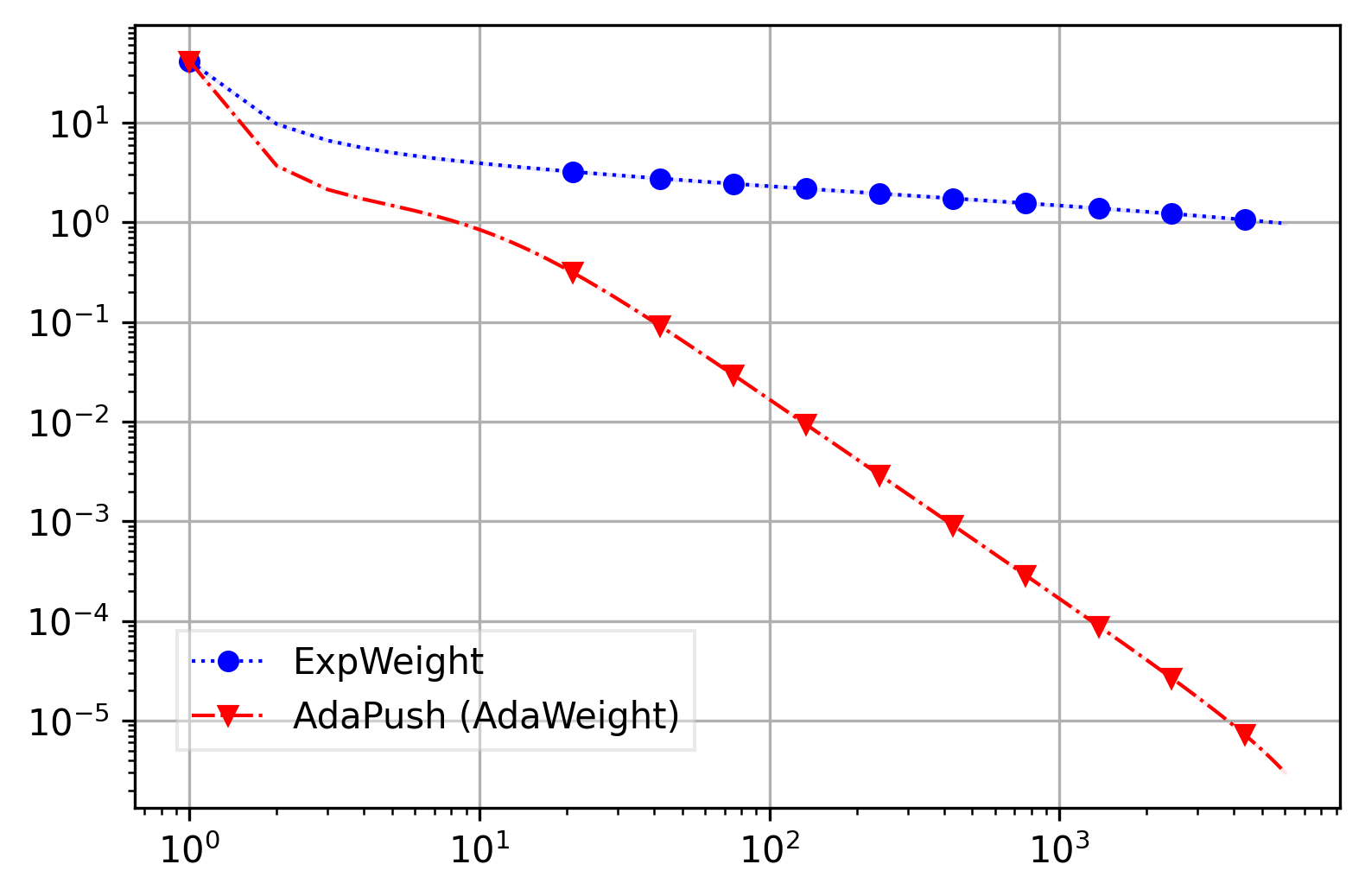}};
			\node[below=of img, node distance=0cm, yshift=\yshift, xshift =\xshift] {\scriptsize${\nRuns}$};
			\node[left=of img, node distance=0cm, rotate=90, anchor=center,xshift=0cm, yshift=\ylabelshift] {\scriptsize${\gap(\nRuns)}$};
		\end{tikzpicture}
		\caption{Eastern-Massachusetts: Log-log plot of $\gap(\nRuns)$}
		\label{fig:EM_static}
	\end{subfigure}%
	\begin{subfigure}[b]{\sizesub}
		\begin{tikzpicture}
			\node (img){\includegraphics[height = \figureheight]{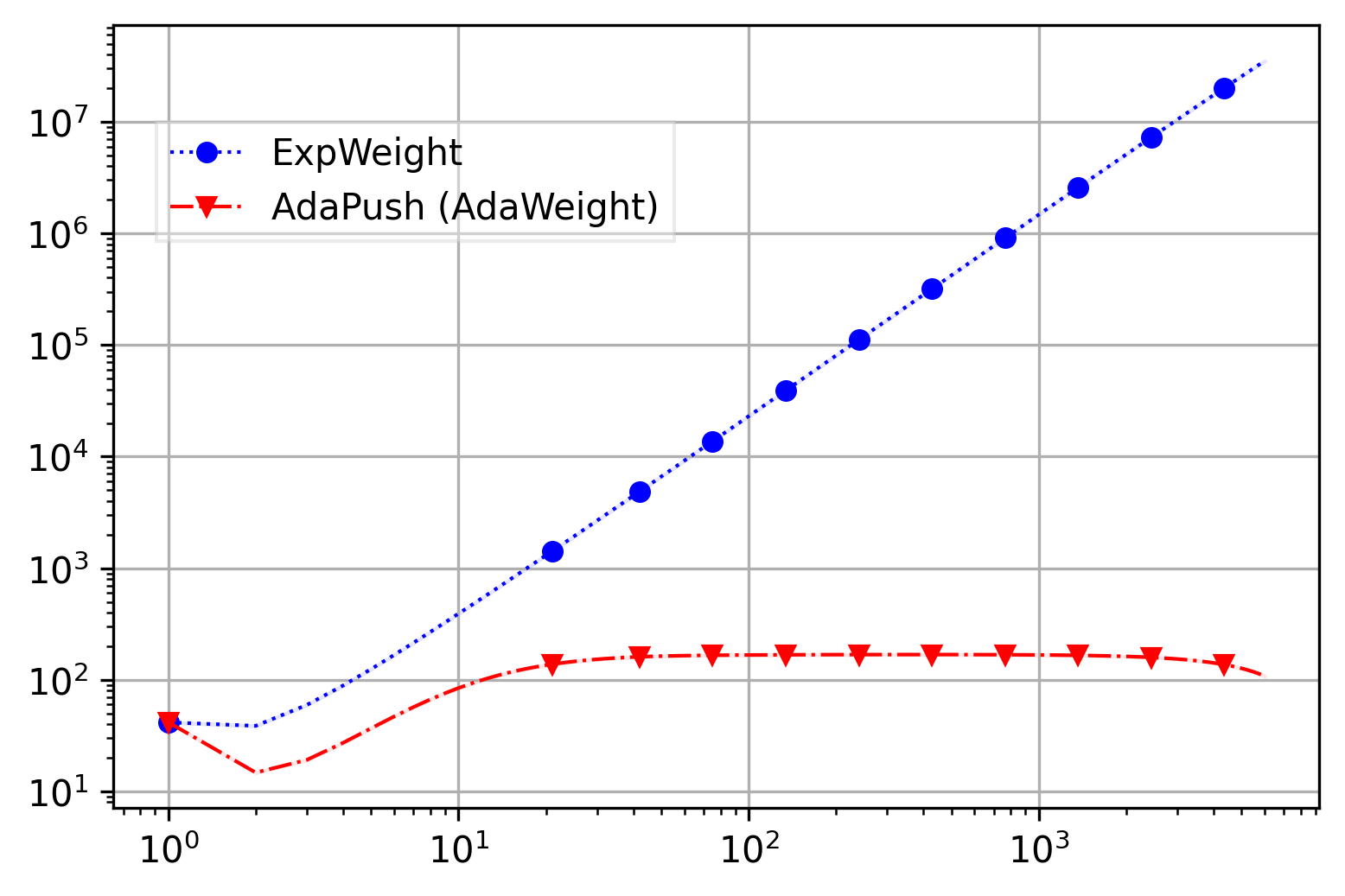}};
			\node[below=of img, node distance=0cm, yshift=\yshift, xshift =\xshift] {\scriptsize${\nRuns}$};
			\node[left=of img, node distance=0cm, rotate=90, anchor=center,xshift=0cm, yshift=\ylabelshift+2mm] {\scriptsize$\nRuns^2 \cdot {\gap(\nRuns)}$};
		\end{tikzpicture}
		\caption{Eastern-Massachusetts: Log-log plot of $\nRuns^2 \cdot \gap(\nRuns)$}\label{fig:EM_static_T2}
	\end{subfigure}

	%----------------------------------
	%          Berlin-Friedrichshain
	%----------------------------------

	\begin{subfigure}[b]{\sizesub}
		\begin{tikzpicture}
			\node (img){\includegraphics[height = \figureheight]{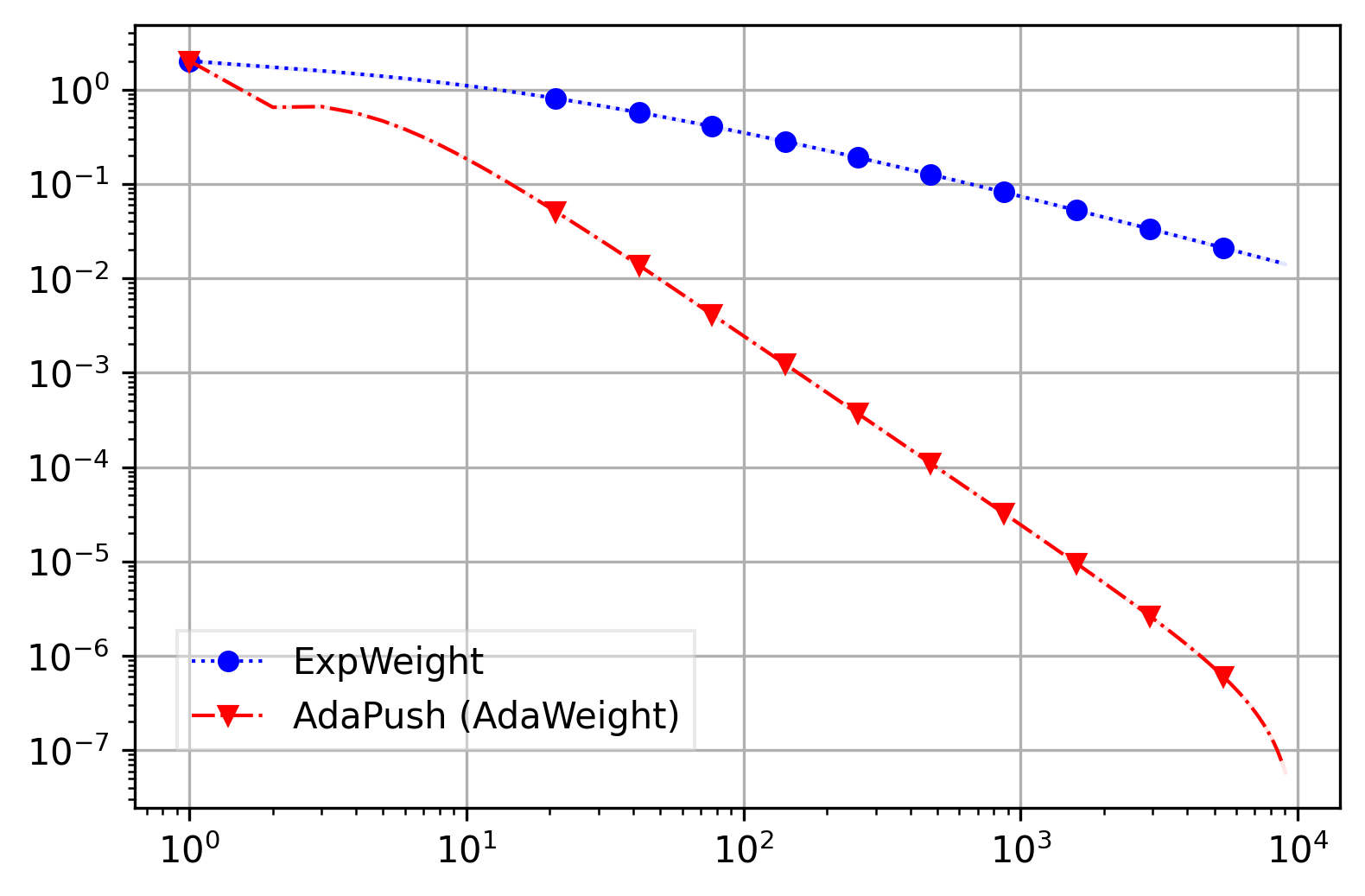}};
			\node[below=of img, node distance=0cm, yshift=\yshift, xshift =\xshift] {\scriptsize${\nRuns}$};
			\node[left=of img, node distance=0cm, rotate=90, anchor=center,xshift=0cm, yshift=\ylabelshift] {\scriptsize${\gap(\nRuns)}$};
		\end{tikzpicture}
		\caption{Berlin-Friedrichshain: Log-log plot of $\gap(\nRuns)$}
		\label{fig:Berlin_F_static}
	\end{subfigure}%
	\begin{subfigure}[b]{\sizesub}
		\begin{tikzpicture}
			\node (img){\includegraphics[height = \figureheight]{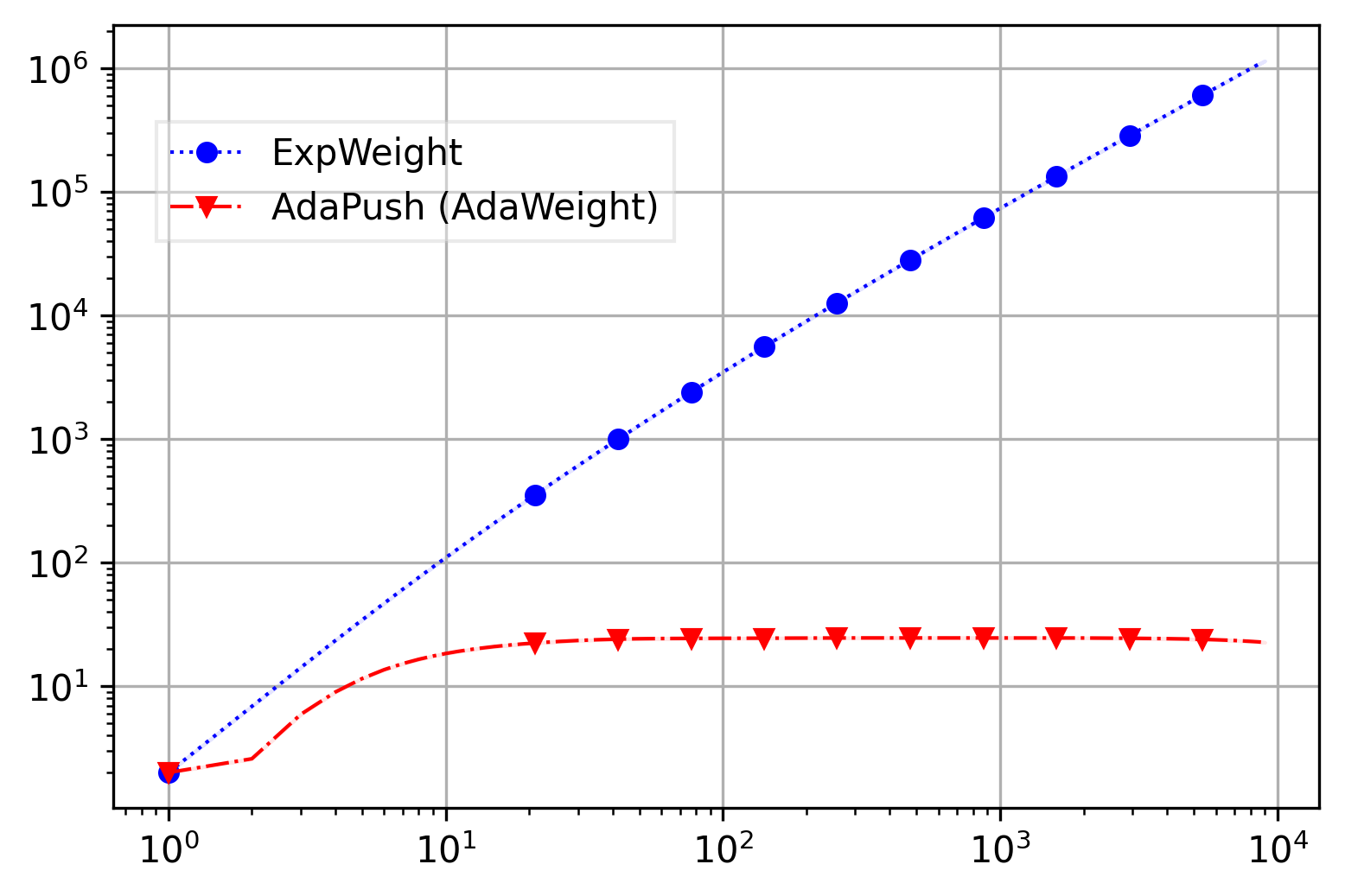}};
			\node[below=of img, node distance=0cm, yshift=\yshift, xshift =\xshift] {\scriptsize${\nRuns}$};
			\node[left=of img, node distance=0cm, rotate=90, anchor=center,xshift=0cm, yshift=\ylabelshift+2mm] {\scriptsize$\nRuns^2 \cdot {\gap(\nRuns)}$};
		\end{tikzpicture}
		\caption{Berlin-Friedrichshain: Log-log plot of $\nRuns^2 \cdot \gap(\nRuns)$}\label{fig:Berlin_F_static_T2}
	\end{subfigure}

	%----------------------------------
	%          Anaheim
	%----------------------------------
	\begin{subfigure}[b]{\sizesub}
		\begin{tikzpicture}
			\node (img){\includegraphics[height = \figureheight]{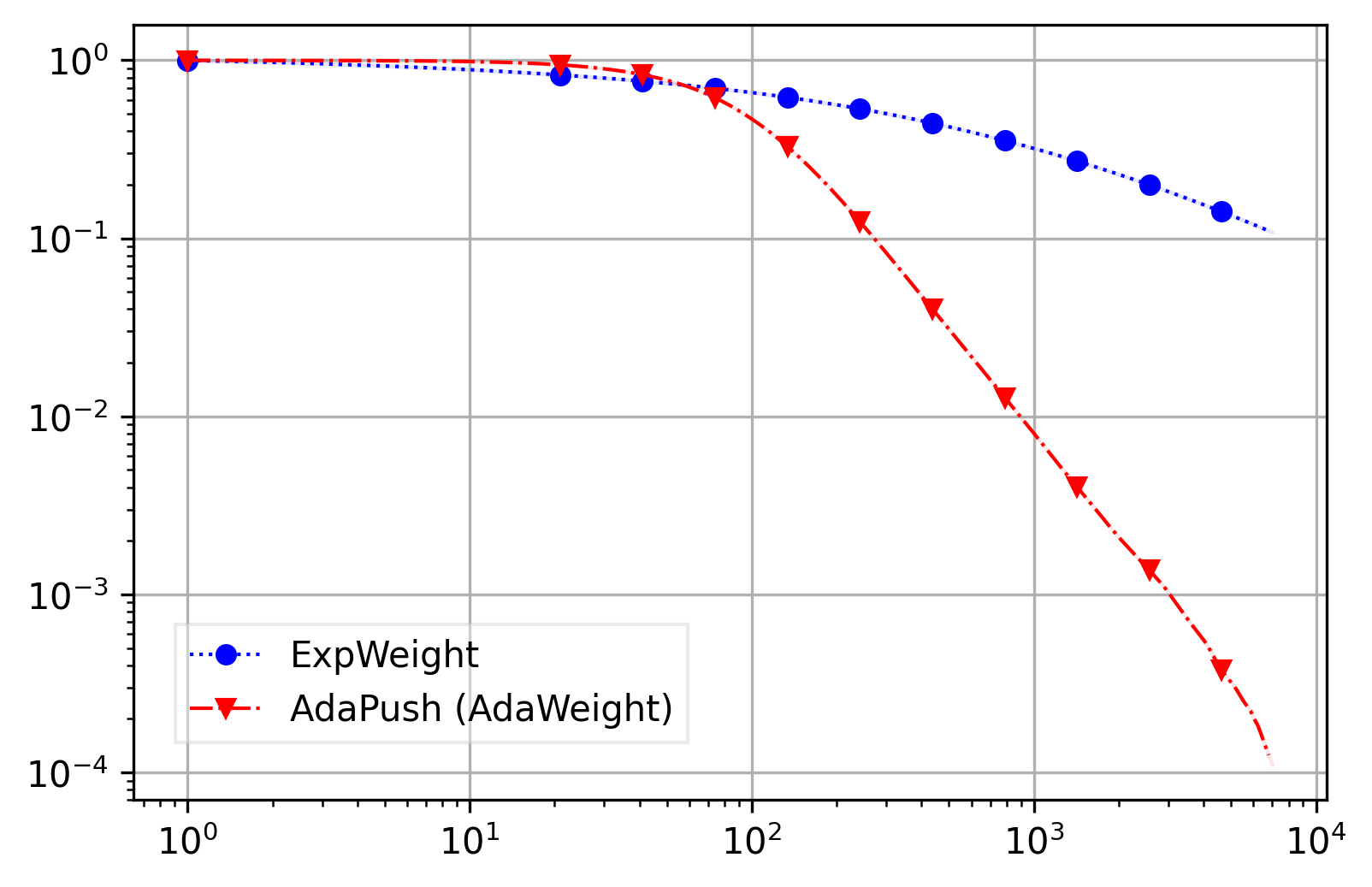}};
			\node[below=of img, node distance=0cm, yshift=\yshift, xshift =\xshift] {\scriptsize${\nRuns}$};
			\node[left=of img, node distance=0cm, rotate=90, anchor=center,xshift=0cm, yshift=\ylabelshift] {\scriptsize${\gap(\nRuns)}$};
		\end{tikzpicture}
		\caption{Anaheim: Log-log plot of $\gap(\nRuns)$}
		\label{fig:Anaheim_static}
	\end{subfigure}%
	\begin{subfigure}[b]{\sizesub}
		\begin{tikzpicture}
			\node (img){\includegraphics[height = \figureheight]{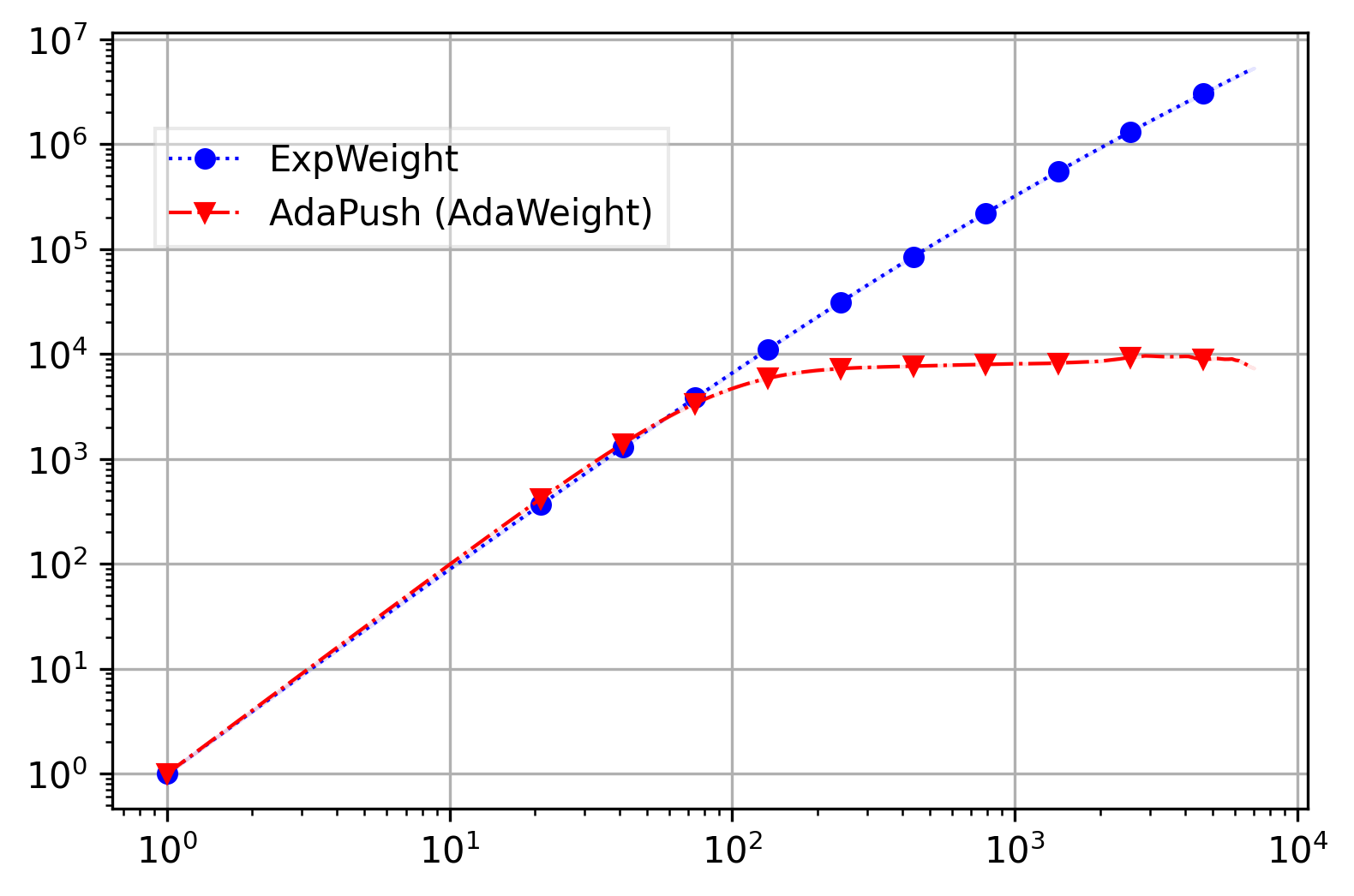}};
			\node[below=of img, node distance=0cm, yshift=\yshift, xshift =\xshift] {\scriptsize${\nRuns}$};
			\node[left=of img, node distance=0cm, rotate=90, anchor=center,xshift=0cm, yshift=\ylabelshift+2mm] {\scriptsize$\nRuns^2 \cdot {\gap(\nRuns)}$};
		\end{tikzpicture}
		\caption{Anaheim: Log-log plot of $\nRuns^2 \cdot \gap(\nRuns)$}\label{fig:Anaheim_static_T2}
	\end{subfigure}

	%----------------------------------
	%          Caption
	%----------------------------------
	\vspace{-.5ex}
	\caption{Convergence speed of \adapush (\ac{AEW}) and \ac{EW} in static environments.}
	\vspace{1ex}
	\label{fig:static}
\end{figure}

\begin{figure}[htb!]
	\centering
	\captionsetup{justification=raggedright, singlelinecheck=false}
	%----------------------------------
	%          SiouxFalls
	%----------------------------------
	\begin{subfigure}[b]{\sizesub}
		\begin{tikzpicture}
			\node (img){\includegraphics[height = \figureheight]{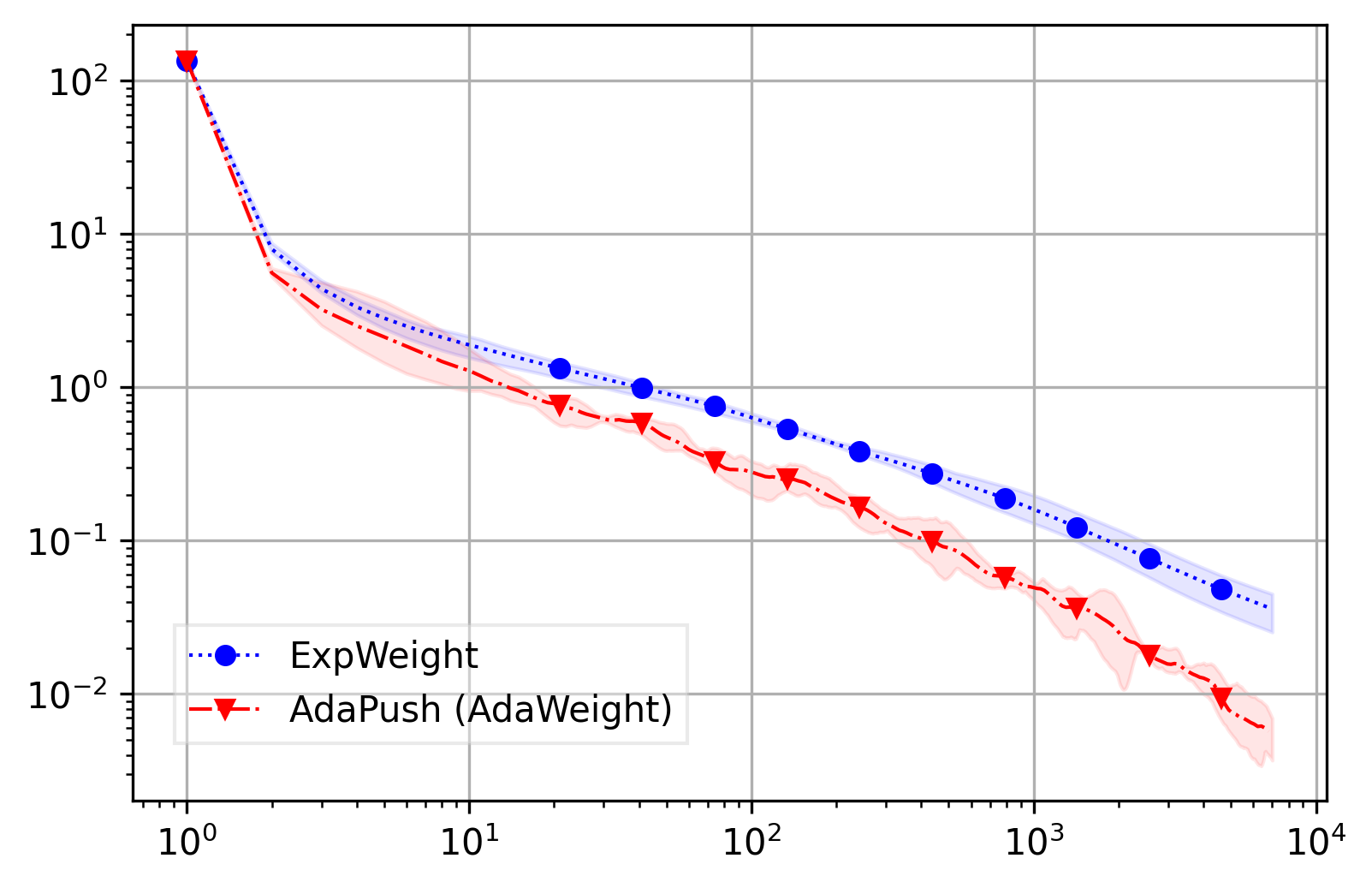}};
			\node[below=of img, node distance=0cm, yshift=\yshift, xshift =\xshift] {\scriptsize${\nRuns}$};
			\node[left=of img, node distance=0cm, rotate=90, anchor=center,xshift=0cm, yshift=\ylabelshift] {\scriptsize${\gap(\nRuns)}$};
		\end{tikzpicture}
		\caption{SiouxFalls: Log-log plot of $\gap(\nRuns)$}
		\label{fig:SiouxFall_stoch}
	\end{subfigure}%
	\begin{subfigure}[b]{\sizesub}
		\begin{tikzpicture}
			\node (img){\includegraphics[height = \figureheight]{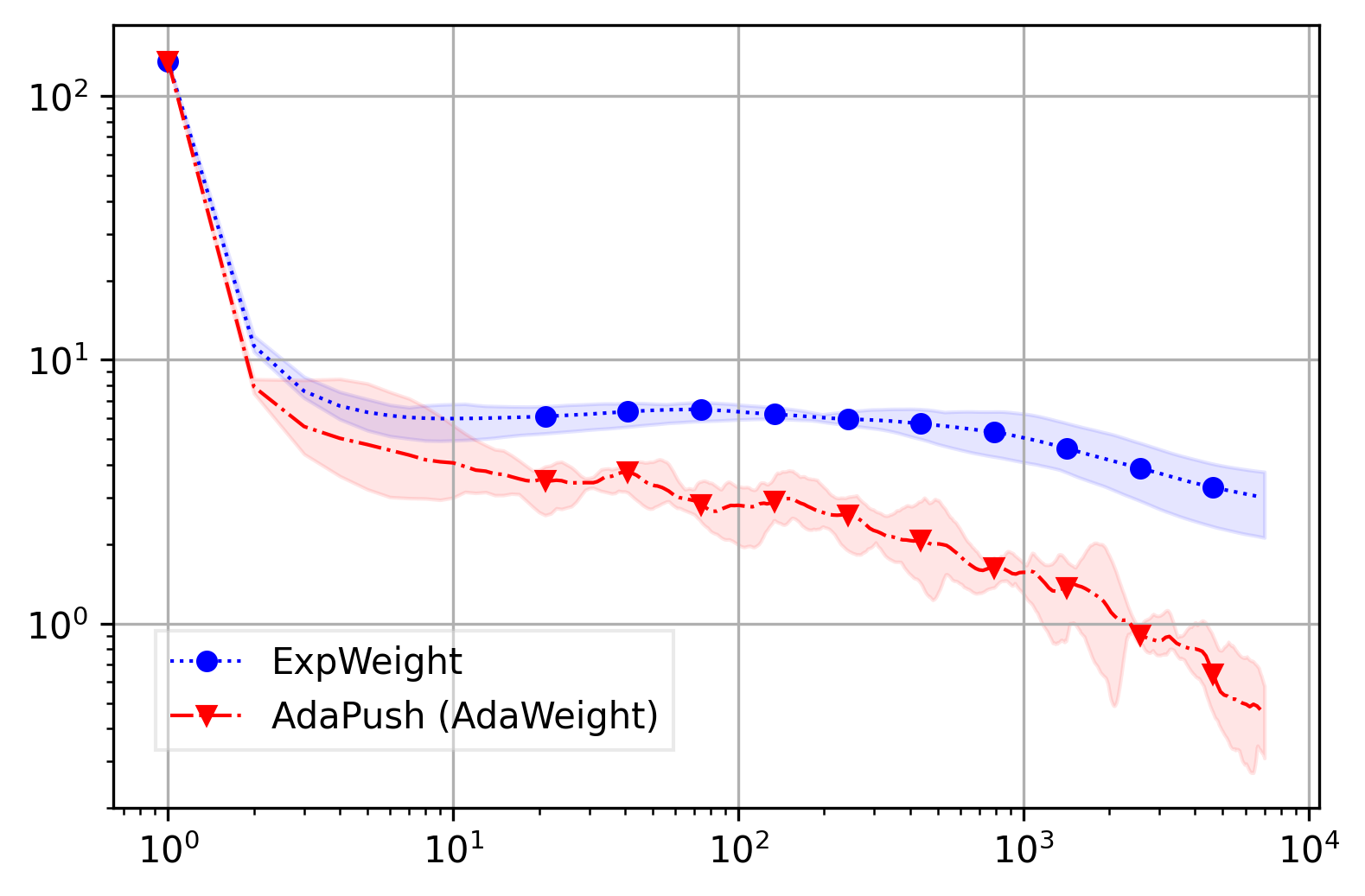}};
			\node[below=of img, node distance=0cm, yshift=\yshift, xshift =\xshift] {\scriptsize${\nRuns}$};
			\node[left=of img, node distance=0cm, rotate=90, anchor=center,xshift=0cm, yshift=\ylabelshift+2mm] {\scriptsize$\sqrt{\nRuns} \cdot {\gap(\nRuns)}$};
		\end{tikzpicture}
		\caption{SiouxFalls: Log-log plot of $\sqrt{\nRuns} \cdot \gap(\nRuns)$}\label{fig:SiouxFall_stoch_T2}
	\end{subfigure}
	
	%----------------------------------
	%          Eastern-Massachussettes
	%----------------------------------
	
	\begin{subfigure}[b]{\sizesub}
		\begin{tikzpicture}
			\node (img){\includegraphics[height = \figureheight]{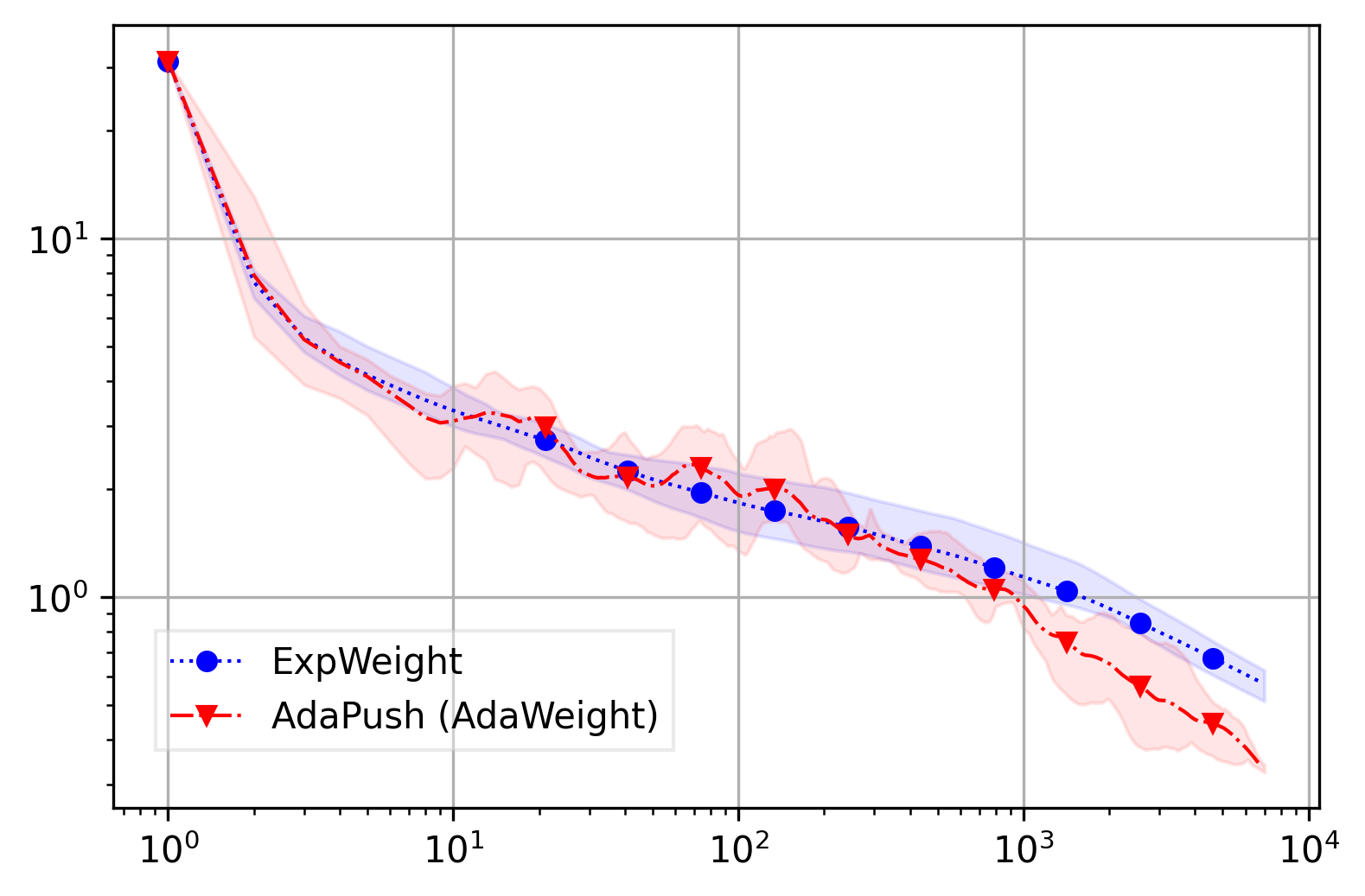}};
			\node[below=of img, node distance=0cm, yshift=\yshift, xshift =\xshift] {\scriptsize${\nRuns}$};
			\node[left=of img, node distance=0cm, rotate=90, anchor=center,xshift=0cm, yshift=\ylabelshift] {\scriptsize${\gap(\nRuns)}$};
		\end{tikzpicture}
		\caption{Eastern-Massachusetts: Log-log plot of $\gap(\nRuns)$}
		\label{fig:EM_stoch}
	\end{subfigure}%
	\begin{subfigure}[b]{\sizesub}
		\begin{tikzpicture}
			\node (img){\includegraphics[height = \figureheight]{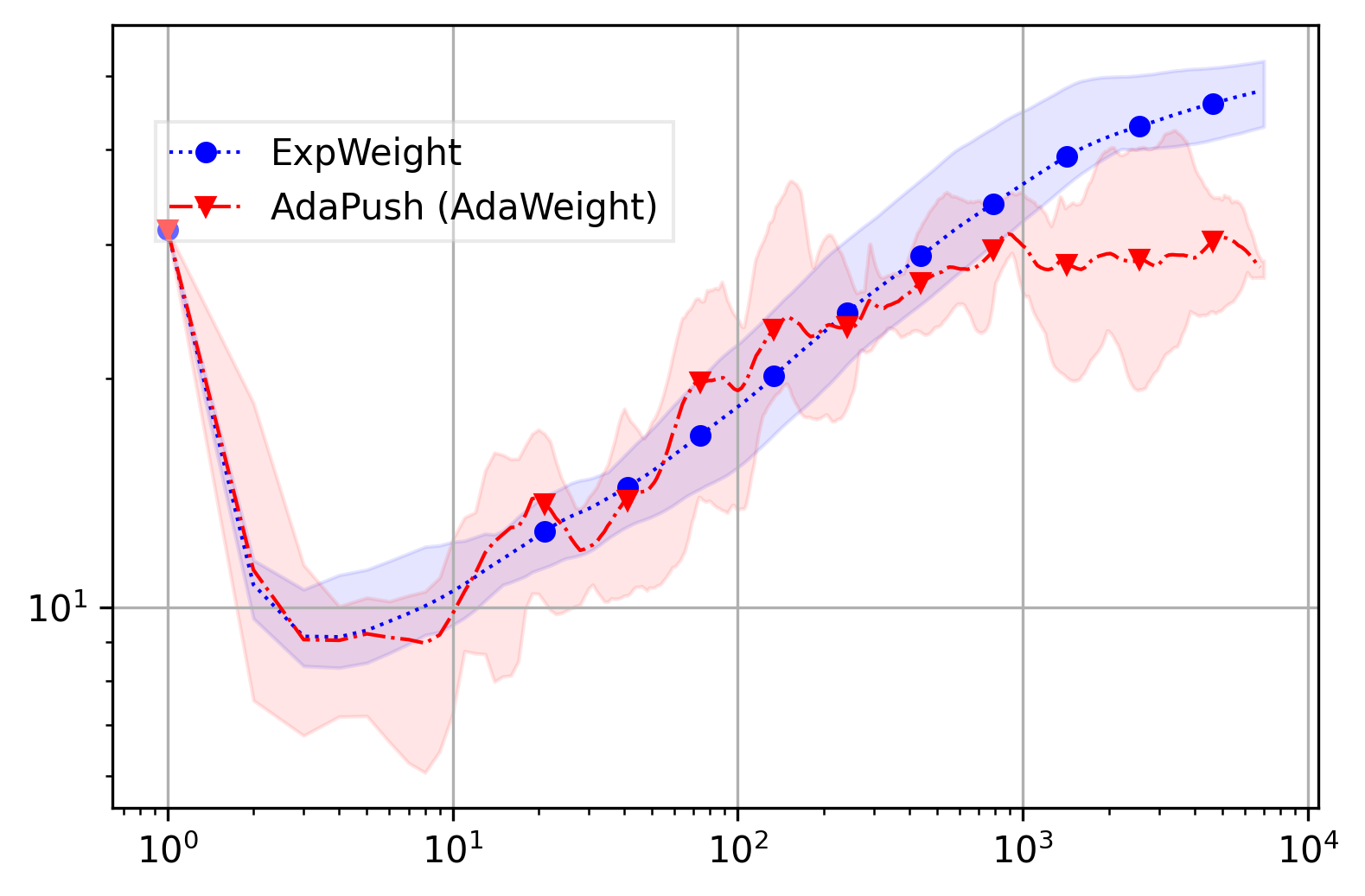}};
			\node[below=of img, node distance=0cm, yshift=\yshift, xshift =\xshift] {\scriptsize${\nRuns}$};
			\node[left=of img, node distance=0cm, rotate=90, anchor=center,xshift=0cm, yshift=\ylabelshift+2mm] {\scriptsize$\sqrt{\nRuns}\cdot {\gap(\nRuns)}$};
		\end{tikzpicture}
		\caption{Eastern-Massachusetts: Log-log plot of $\sqrt{\nRuns} \cdot \gap(\nRuns)$}\label{fig:EM_stoch_T2}
	\end{subfigure}
	
	%----------------------------------
	%          Berlin-Friedrichshain
	%----------------------------------

	\begin{subfigure}[b]{\sizesub}
		\begin{tikzpicture}
			\node (img){\includegraphics[height = \figureheight]{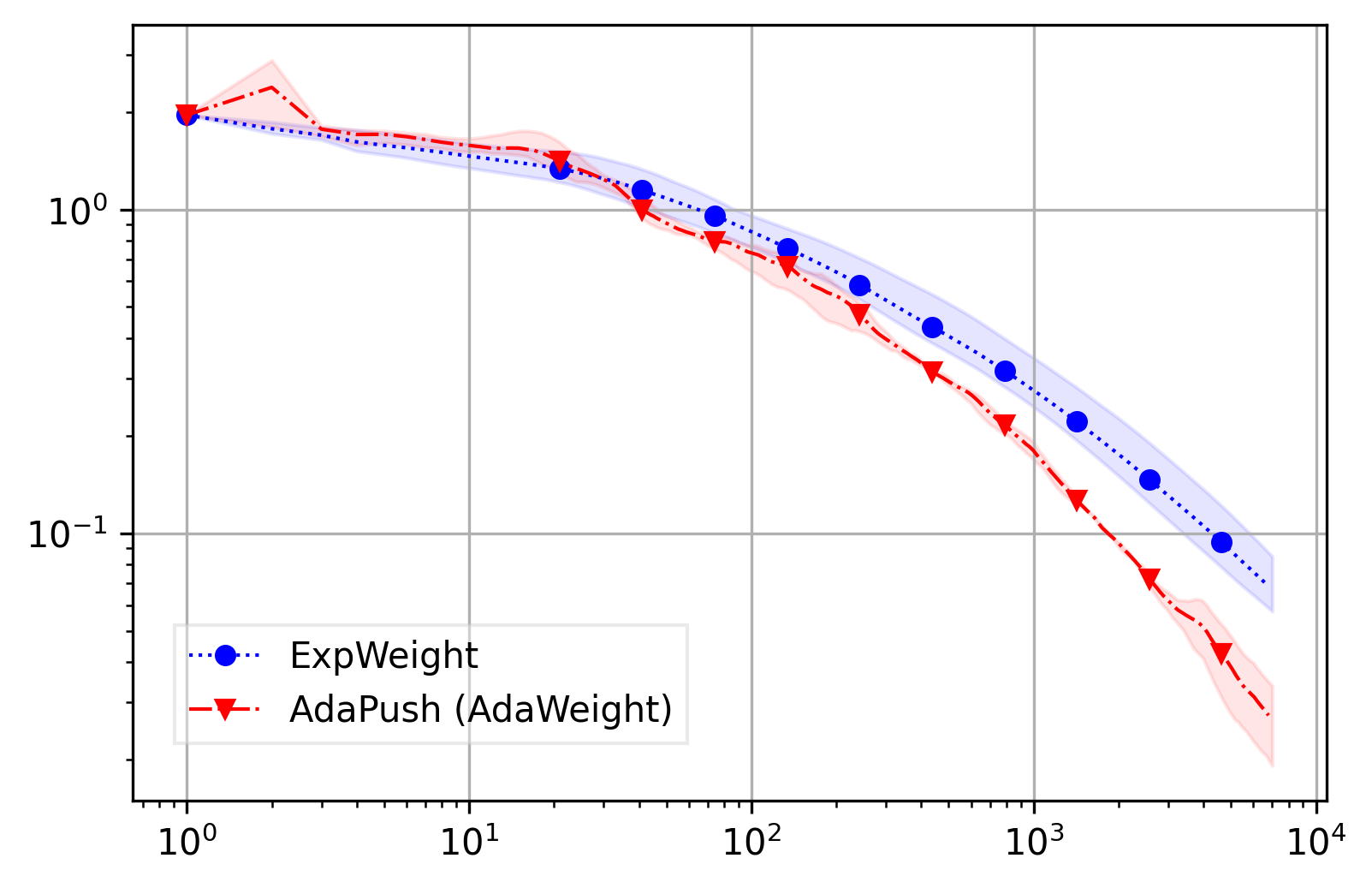}};
			\node[below=of img, node distance=0cm, yshift=\yshift, xshift =\xshift] {\scriptsize${\nRuns}$};
			\node[left=of img, node distance=0cm, rotate=90, anchor=center,xshift=0cm, yshift=\ylabelshift] {\scriptsize${\gap(\nRuns)}$};
		\end{tikzpicture}
		\caption{Berlin-Friedrichshain: Log-log plot of $\gap(\nRuns)$}
		\label{fig:Berlin_F_stoch}
	\end{subfigure}%
	\begin{subfigure}[b]{\sizesub}
		\begin{tikzpicture}
			\node (img){\includegraphics[height = \figureheight]{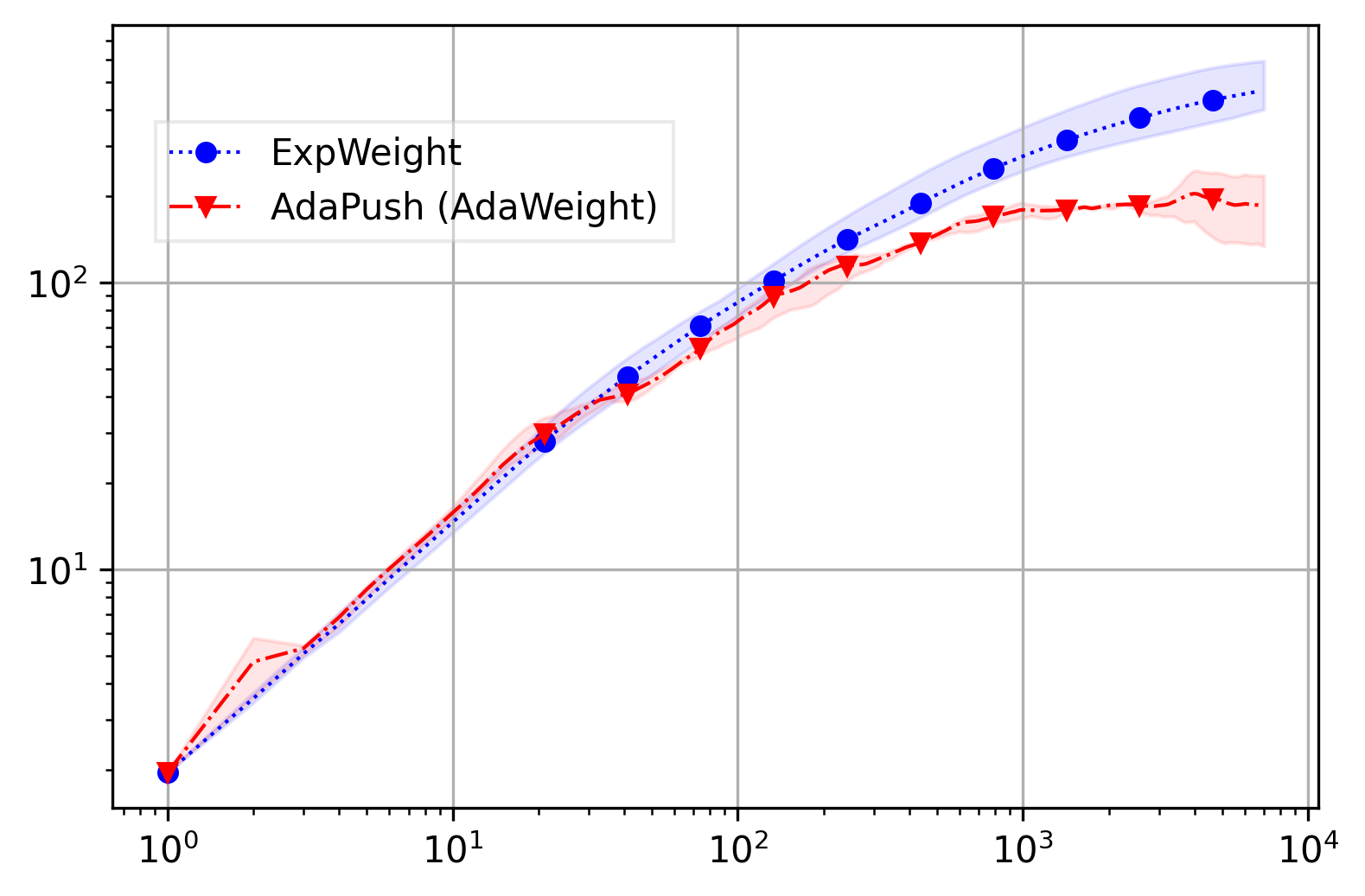}};
			\node[below=of img, node distance=0cm, yshift=\yshift, xshift =\xshift] {\scriptsize${\nRuns}$};
			\node[left=of img, node distance=0cm, rotate=90, anchor=center,xshift=0cm, yshift=\ylabelshift+2mm] {\scriptsize$\sqrt{\nRuns} \cdot {\gap(\nRuns)}$};
		\end{tikzpicture}
		\caption{Berlin-Friedrichshain: Log-log plot of $\sqrt{\nRuns} \cdot \gap(\nRuns)$}\label{fig:Berlin_F_stoch_T2}
	\end{subfigure}

	%----------------------------------
	%          Anaheim
	%----------------------------------
	\begin{subfigure}[b]{\sizesub}
		\begin{tikzpicture}
			\node (img){\includegraphics[height = \figureheight]{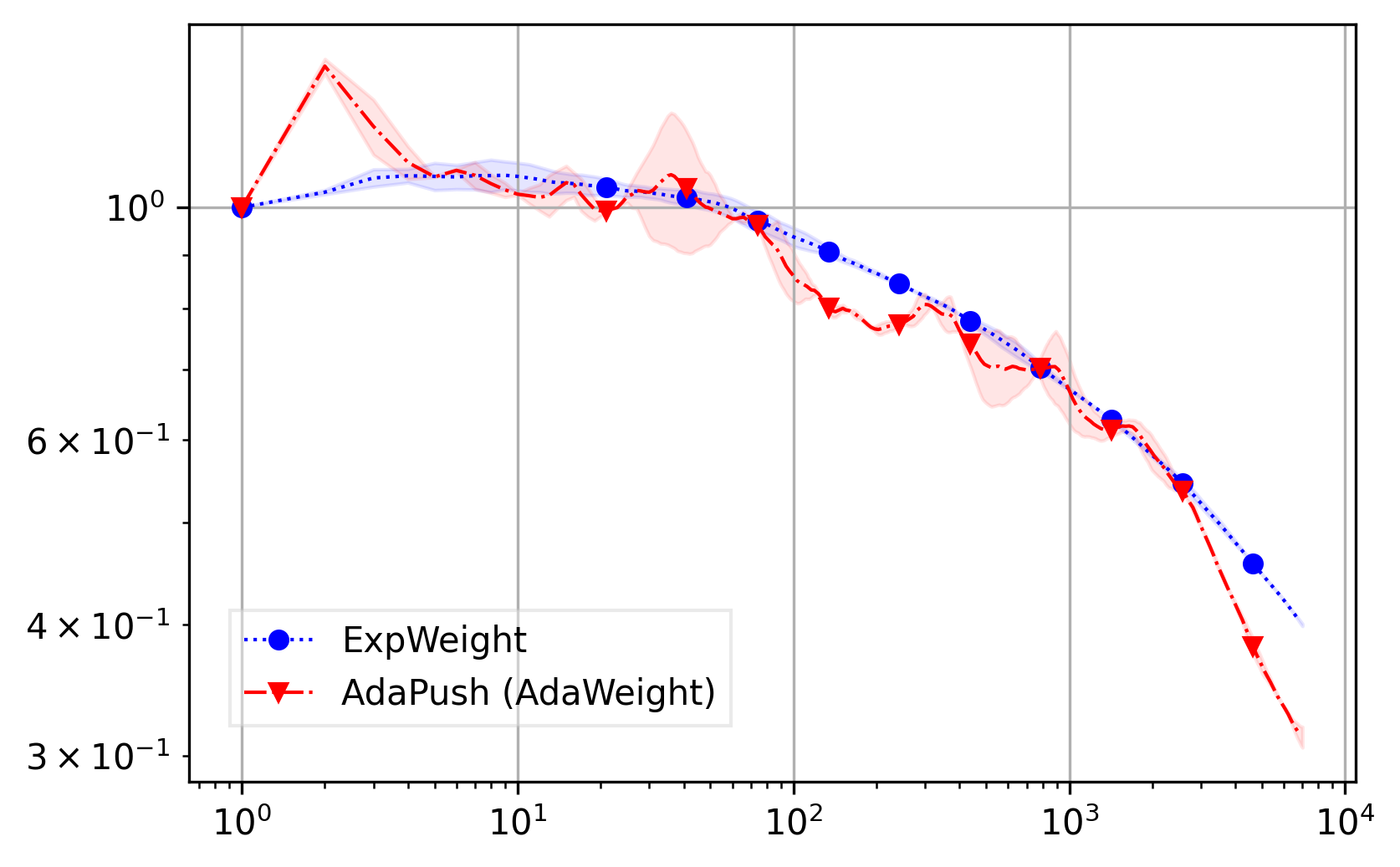}};
			\node[below=of img, node distance=0cm, yshift=\yshift, xshift =\xshift] {\scriptsize${\nRuns}$};
			\node[left=of img, node distance=0cm, rotate=90, anchor=center,xshift=0cm, yshift=\ylabelshift] {\scriptsize${\gap(\nRuns)}$};
		\end{tikzpicture}
		\caption{Anaheim: Log-log plot of $\gap(\nRuns)$}
		\label{fig:Anaheim_stoch}
	\end{subfigure}%
	\begin{subfigure}[b]{\sizesub}
		\begin{tikzpicture}
			\node (img){\includegraphics[height = \figureheight]{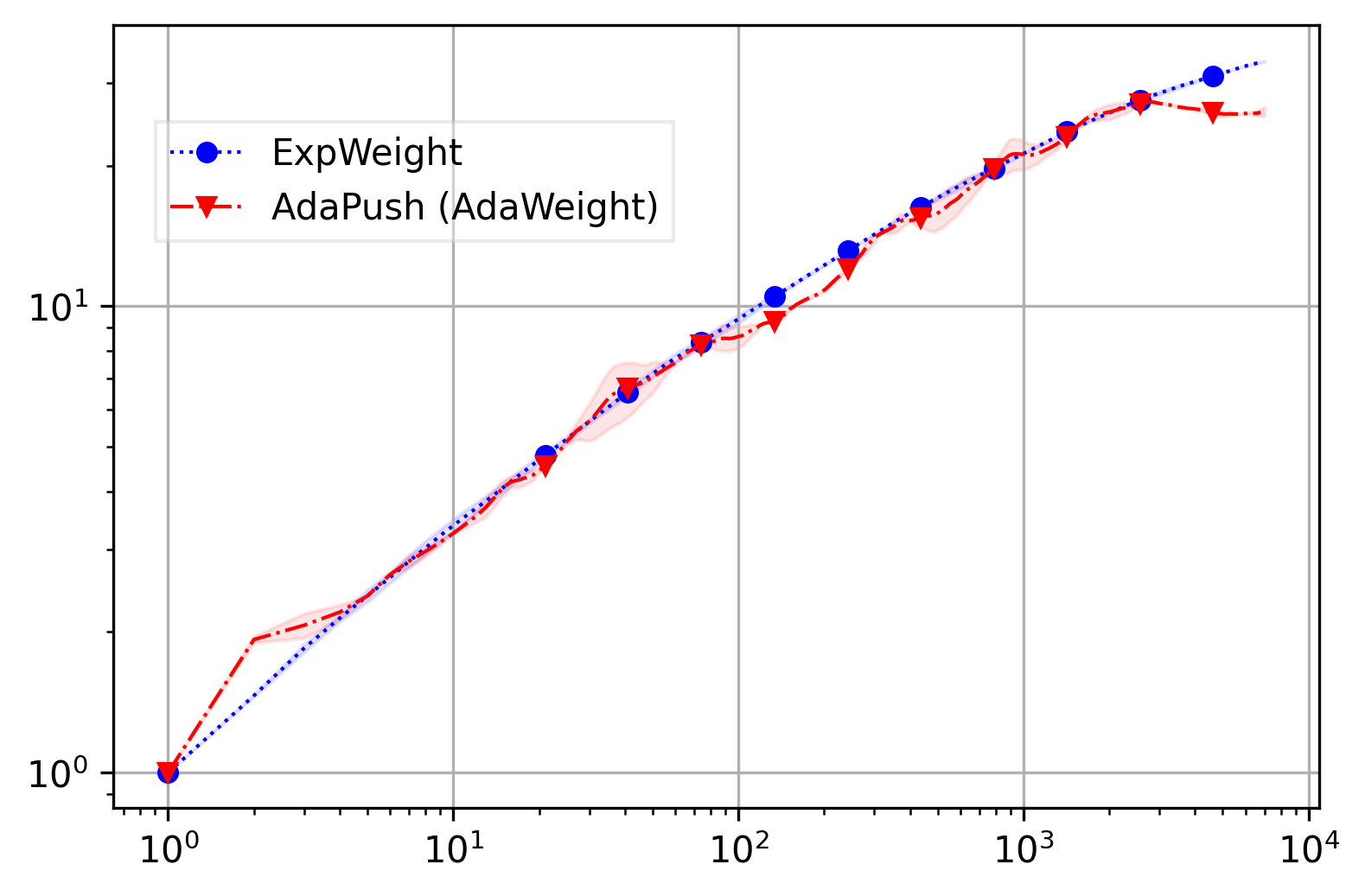}};
			\node[below=of img, node distance=0cm, yshift=\yshift, xshift =\xshift] {\scriptsize${\nRuns}$};
			\node[left=of img, node distance=0cm, rotate=90, anchor=center,xshift=0cm, yshift=\ylabelshift+2mm] {\scriptsize$\sqrt{\nRuns} \cdot {\gap(\nRuns)}$};
		\end{tikzpicture}
		\caption{Anaheim: Log-log plot of $\sqrt{\nRuns} \cdot \gap(\nRuns)$}\label{fig:Anaheim_stoch_T2}
	\end{subfigure}
	
	%-------------------------------
	%          Caption
	%----------------------------------
	\vspace{-.5ex}
	\caption{Convergence speed of \adapush (\ac{AEW}) and \ac{EW} in stochastic environments.}
	\vspace{1ex}
	\label{fig:stochastic}
\end{figure}
\vspace{2ex}
\newcounter{app}
\renewcommand{\theapp}{\Alph{app}}

\numberwithin{lemma}{section}		% for numbering  in the appendix
\numberwithin{proposition}{section}		% for numbering  in the appendix
\numberwithin{equation}{section}		% for numbering in the appendix
\appendix

%----------------------------------------------------------------------
%%% APP: AUXILIARY
%----------------------------------------------------------------------
%\section*{List of notation}
%\label{app:aux}
%\input{App-Aux.tex}

%----------------------------------------------------------------------
%%% APP: NONADAPTIVE
%----------------------------------------------------------------------
\section{Proof of \cref{thm:XLEW}}
\label{app:non-adaptive}
%----------------------------------------------------------------------
%%% APP: NONADAPTIVE
%----------------------------------------------------------------------
% !TEX root = ./Main.tex

\beginPM
Our goal in this appendix is to prove the $\bigof{\log\nRoutes \big/ \nRuns^{2}}$ equilibrium convergence rate of \ac{XLEW} in static environments.
%Since the environment is static, we have $\grad \meanpot(\flow) = \late(\flow;\sample)$ for all $\flow \in \flows$ so, in the sequel, we will use the two notations interchangeably.
In the following, we use the entropy regularizer $\hreg$ as defined in \cref{sec:proof_adaweight} and let $\curr[\state]$, $\curr[\ppoint]$ and $\curr[\stepalt]$ be defined as per \cref{alg:XLEW}.
With all this in hand, our proof of \cref{thm:XLEW} will proceed in two basic steps:

%----------------------------------------------------------------------
%%% Step 1: Lyapunov
%----------------------------------------------------------------------
\para{Step 1: Establish an energy function}

Building on the analysis of accelerated mirror descent algorithms \citep{AZO17,KBB15}, we will consider the energy fuction
\begin{equation}
\label{eq:energy}
\curr[\Delta]
	= \prev[\stepalt] \gap(\run)
		+ \dkl{\sol}{\curr[\ppoint]}
\end{equation}
where $\gap(\run) = \meanpot(\curr) - \min\meanpot$ and $\dkl{\sol}{\curr[\ppoint]}$ denotes the Kullback\textendash Leibler divergence between $\sol$ and $\curr[\ppoint]$.
\endedit

Our goal in the sequel will be to prove that $\curr[\Delta]$ is decreasing in $\run$.
Indeed, from \cref{prop:smooth} and the strong convexity of $\hreg$, we have
\PM{The last equality sign should be ``$\leq$'', correct? [I fixed but pls check]}\DQV{Correct. Comes from strong convexity.}
\begin{align}
\meanpot(\next)
	&\leq \meanpot(\curr[\ppointalt])
		+ \braket{\nabla\meanpot(\curr[\ppointalt])}{\next - \curr[\ppointalt]}
		+ \frac{\smooth}{2} \onenorm{\next - \curr[\ppointalt]}^{2}
	\notag\\
	&= \meanpot(\curr[\ppointalt])
		+ \braket{\nabla\meanpot(\curr[\ppointalt])}{\next - \curr[\ppointalt]}
		+ \frac{\smooth}{2} (1 - \curr[\step])^{2} \onenorm{\next[\ppoint] - \curr[\ppoint]}^{2}
	\notag\\
	&\leq \meanpot(\curr[\ppointalt])
		+ \braket{\nabla\meanpot(\curr[\ppointalt])}{\next - \curr[\ppointalt]}
		+ \smooth\strong (1 - \curr[\step])^{2} \dkl{\next[\ppoint]}{\curr[\ppoint]}. \label{eq:apply_lem_smoothdef}
\end{align}
Moreover, from \eqref{eq:apply_lem_smoothdef} and the convexity of $\meanpot$, for all $\sol\in\sols \defeq \argmin\meanpot$, we have
\PM{I rewrote several parts of the proof below for clarity, pls do an overall read.}
\begin{align}
\label{eq:proof_AMC_inequ_proof_1}
\gap(\run+1) - \curr[\step] \gap(\run)
	&= \meanpot(\next)
		- \bracks{\curr[\step] \meanpot(\curr) + \parens{1 -\curr[\step]} \meanpot(\sol)}
	\notag\\
	&\leq \meanpot(\next)
		- \meanpot(\ppointalt^{\run})
		+ \braket
			{\nabla\meanpot(\ppointalt^{\run})}
			{\curr[\ppointalt] - \curr[\step]\curr - (1- \curr[\step])\sol}
	\notag\\
	&\leq \braket{\nabla\meanpot(\curr[\ppointalt])}{\next - \curr[\ppointalt]}
		+ \smooth\strong (1 - \curr[\step])^{2} \dkl{\next[\ppoint]}{\curr[\ppoint]}
%	\notag\\
%	&\hspace{4em}
		+ \braket{\nabla\meanpot(\ppointalt^{\run})}{\curr[\ppointalt] - \curr[\step]\curr - (1- \curr[\step])\sol}
	\notag\\
	&= (1 - \curr[\step]) \braket{\nabla\meanpot(\ppointalt^{\run})}{\next[\ppoint] - \sol}
		+ \smooth \strong (1 - \curr[\step])^{2} \dkl {\next[\ppoint]}{\curr[\ppoint]}.
\end{align}
To proceed, let $\curr[\dpointalt] = \parens{1- \curr[\step]} \curr[\stepalt] \nabla\meanpot(\ppointalt^{\run})$.
Then,
\PMedit{from \cref{prop:pot-mean} and}
the update structure of \cref{alg:XLEW}, we have:
\begin{align}
& \braket{\nabla\hreg(\curr[\ppoint]) - \nabla\hreg(\next[\ppoint])}{\next[\ppoint] - \sol} \nonumber \\
	 =& \sum_{\pair \in \pairs}
		\log\parens*{
			\frac{\sum_{\routealt \in \routes^{\pair}}\curr[\ppoint]_{\routealt}\exp(-\curr[\dpointalt]_{\routealt})}{\mass{\pair}}}
			\sum_{\route\in\routes^{\pair}} \parens[\big]{\next[\ppoint]_{\route} - \sol_{\route}}
		+ \sum_{\pair \in \pairs} \sum_{\route \in \routes^{\pair}} \curr[\dpointalt]_{\route} \parens[\big]{\next[\ppoint]_{\route} - \sol_{\route}}
	\hspace{12em}
	\notag\\
	=& 0
		+ \braket{\curr[\dpointalt]}{\next[\ppoint] - \sol}
	\tag*{[since $ \sum_{\route \in \routes^{\pair}} \next[\ppoint]_{ \route} =\sum_{\route \in \routes^{\pair}} \sol_{ \route} = \mass{\pair}$]}
	\\
	= &(1- \curr[\step]) \curr[\stepalt] \braket{\nabla\meanpot(\ppointalt^{\run})}{\next[\ppoint] - \sol}. \label{eq:app_acceleweight_equality}
\end{align}
Thus, multiplying both sides of \eqref{eq:proof_AMC_inequ_proof_1} by $\curr[\stepalt]$ and combining them with \eqref{eq:app_acceleweight_equality}, we obtain
\begin{align}
\curr[\stepalt] \gap(\run+1) - \curr[\step]\curr[\stepalt] \gap(\run)
	&\leq \braket{\nabla\hreg(\curr[\ppoint]) - \nabla\hreg(\next[\ppoint])}{\next[\ppoint] - \sol}
		+ \curr[\stepalt] \strong\smooth (1 - \curr[\step])^{2} \dkl{\next[\ppoint]}{\curr[\ppoint]}
	\notag\\
	&= \dkl{\sol}{\curr[\ppoint]}
		- \dkl{\sol}{\next[\ppoint]}
		+ \parens[\big]{\curr[\stepalt] \strong\smooth (1 - \curr[\step])^{2} - 1}
			\dkl{\next[\ppoint]}{\curr[\ppoint]}.
\label{eq:proof_AMD_conclu}
\end{align}

Now, by the update rule of $\curr[\stepalt]$ in \cref{line:XLEW_stepalt} of \cref{alg:XLEW},
the choice of $\stepalt^{0}$ in \cref{line:XLEW_initial}
and
the update rule of $\curr[\step]$ in \cref{line:XLEW_step},
we get
\begin{equation}
\curr[\stepalt] \strong\smooth (1 - \curr[\step])^{2}
	= \frac{\curr[\stepalt]}{\beforeinit[\stepalt]} \parens*{1 - \frac{\prev[\stepalt]}{\curr[\stepalt]}}^{2}
	= \frac{\parens*{\prev[\stepalt] - \curr[\stepalt]}^{2}}{\curr[\stepalt] \beforeinit[\stepalt]}
	= 1.
\end{equation}
Therefore, the last term in \eqref{eq:proof_AMD_conclu} vanishes and we can rewrite \eqref{eq:proof_AMD_conclu} as
\begin{equation}
\label{eq:energy-decreasing}
\curr[\stepalt] \gap(\run+1) + \dkl{\sol}{\next[\ppoint]}
	\leq \prev[\stepalt] \gap(\run) + \dkl{\sol}{\curr[\ppoint]}.
\end{equation}
This shows that $\curr[\Delta] \leq \prev[\Delta] \leq \dotsm \leq \init[\Delta]$ (by convention, we set $\stepalt^{-1} =0$), \ie $\curr[\Delta]$ is decreasing in $\run$, as claimed.
\hfill
\qed

%----------------------------------------------------------------------
%%% Step 2: Upper bound
%----------------------------------------------------------------------
\para{Step 2: Upper-bounding the equilibrium gap}

By iterating \eqref{eq:energy-decreasing}, we readily obtain
\begin{equation}
\beforelast[\stepalt] \gap(\nRuns) 
	\leq \last[\Delta]
%	\leq \beforelast[\Delta]
	\leq \dotsm
	\leq \init[\Delta]
	= \dkl{\sol}{\init[\ppoint]}
\label{eq:proof_AMC_last}
\end{equation}
and hence
\begin{equation}
\gap(\nRuns)
	\leq {\dkl{\sol}{\init[\ppoint]}} \big/ {\beforelast[\stepalt]}
\end{equation}
To finish the proof, we need to bound $\beforelast[\stepalt]$ and $\dkl{\sol}{\init}$ from above.

We begin by noting that
\begin{equation}
\sqrt{\strong\smooth \prev[\stepalt]}
	= \sqrt{\strong\smooth \curr[\stepalt]}
		\sqrt{1- \frac{1}{\sqrt{\curr[\stepalt] \strong\smooth}}}
	\leq \sqrt{\strong\smooth \curr[\stepalt]}
		\cdot \parens*{1 - \frac{1}{2 \sqrt{\strong\smooth \curr[\stepalt]}}}
	= \sqrt{\strong\smooth \curr[\stepalt]}
		- \frac{1}{2}
\end{equation}
so $\sqrt{\strong\smooth \curr[\stepalt]} \geq \sqrt{\strong\smooth \prev[\stepalt]} + 1/2$.
Therefore, telescoping this last bound, we get 
\begin{align*}
\sqrt{\strong\smooth \stepalt^{\nRuns-1}}
%	\geq \sqrt{\strong\smooth \stepalt^{\nRuns-2}} + \frac{1}{2} \ge \ldots
	\geq \sqrt{\strong\smooth \beforeinit[\stepalt]} + \frac{\nRuns-\start}{2}
	= \frac{\nRuns+\start}{2}
	\geq \frac{\nRuns}{2}
\end{align*}
and hence
\begin{equation}
\stepalt^{\nRuns-1}
	> {\nRuns^{2}} \big/ \parens{4 \strong\smooth} >0. \label{eq:XLEW_lb}
\end{equation}

Second, from the choice of $\beforeinit[\dpoint]$ and $\beforeinit[\step]$ in \cref{line:XLEW_initial} of \cref{alg:XLEW}, it follows that $\init_{\route} = \init[\ppoint]_{\route} = \mass{\pair} \big/ \nRoutes_\pair$ for all $\route \in \routes^{\pair}$, $\pair\in\pairs$.
\PM{There were some index inconsistencies here with $\route$ and $\pair$.
I think I patched things up \textendash\ but pls check\dots}\DQV{done}
We thus get $\braket{\nabla\hreg \parens{\init}}{\init-\sol} = 0$ so $\dkl{\sol}{\init} = \hreg(\sol) - \hreg(\init) \leq \max\hreg - \min\hreg$.
Moreover, by a straightforward calculation, we get $\max\hreg = \masssum \log(\massmax)$ and $\min\hreg = \masssum \log(\nRoutes/\masssum)$, so
\PM{Replaced the ineqs with eqs in this sentence \textendash\ am I wrong?}
\begin{align}
\dkl{\sol}{\init}
%	\leq \masssum \log(\massmax) - \min_{\flow \in \flows}\hreg \parens*{\flow}
	\leq \masssum \log(\nRoutes\massmax/\masssum) .\label{eq:XLEW_ub}
\end{align}
Thus, combining \cref{eq:XLEW_ub,eq:XLEW_lb,eq:proof_AMC_last} and recalling that $\strong \defeq \massmax \nPairs$ \PM{When did we prove this?}
(\ie the strongly-convexity constant of $\hreg$) and the fact that $\masssum = \sum_{\pair \in \pairs} \mass{\pair} \le \nPairs \massmax$ we finally obtain
\begin{equation}
\gap(\nRuns)
	\leq \frac
		{4\smooth \nPairs \massmax \masssum \log\parens*{\nRoutes\massmax \big/ \masssum}}
		{\parens*{\nRuns-\start}^{2}}
	\leq \frac
		{4\smooth \nPairs^{2} \massmax^{2} \log\parens*{\nRoutes \massmax \big/ \masssum}}
		{\parens*{\nRuns-\start}^{2}}
\end{equation}
and our proof of \cref{thm:XLEW} is complete.
\hfill
\qed

%----------------------------------------------------------------------
%%% APP: ADAWEIGHT
%----------------------------------------------------------------------
\section{Proof of \cref{eq:lem:proof_adaEW}}
\label{app:adaweights}
\label{sec:app_proof_lem_proof_adaEW}
\newmacro{\proofR}{R}

	Let us denote $\proofR^{\run}:= \sum_{\runalt =1}^{\run} \step^\runalt$. By Line~\ref{line:AEW-recom-flow} of Algorithm~\ref{alg:AEW}, for any $\run$, we have: $\curr[\precom] =  \frac{\proofR^{\run}}{\curr[\step]} \state^{\run} - \frac{\proofR^{\run-1}}{\curr[\step]} \state^{\run-1}$. As a consequence,
	\begin{align}
		\sum_{\run=1}^{\nRuns}{\curr[\step]}\inner{\curr[\precom]-\flowbase}{\nabla \meanpot(\state^{\run})} = & \sum_{\run=1}^{\nRuns}{\curr[\step]}\inner*{\frac{\proofR^{\run}}{\curr[\step]} \state^{\run} - \frac{\proofR^{\run-1}}{\curr[\step]} \state^{\run-1} -\flowbase}{\nabla \meanpot(\state^{\run})} \nonumber\\
		%
%		=& \sum_{\run=1}^{\nRuns}{\run}\inner*{\frac{\proofR^{\run}}{\run} \parens*{\state^{\run} -\flowbase} - \frac{\proofR^{\run-1}}{\run} \parens*{\state^{\run-1} -\flowbase}}{\nabla \meanpot(\state^{\run})} \tag*{(since $\proofR^{\run} - \proofR^{\run-1} = \run$)} \nonumber\\
%		%
%		= & \sum_{\run=1}^{\nRuns} \bracks*{\proofR^{\run} \inner*{\state^{\run} -\flowbase}{\nabla \meanpot(\state^{\run})} -   \proofR^{\run-1} \inner*{\state^{\run-1} -\flowbase}{\nabla \meanpot(\state^{\run})} } \nonumber\\
%		%
%		= & \sum_{\run=1}^{\nRuns} \bracks*{\proofR^{\run} \inner*{\state^{\run} -\flowbase}{\nabla \meanpot(\state^{\run})} - \proofR^{\run-1} \inner*{\state^{\run} -\flowbase}{\nabla \meanpot(\state^{\run})}+  \proofR^{\run-1} \inner*{\state^{\run} -\state^{\run-1}}{\nabla \meanpot(\state^{\run})} } \nonumber\\
		%
		=& \sum_{\run=1}^{\nRuns} \bracks*{      \proofR^{\run-1} \inner*{\state^{\run} -\state^{\run-1}}{\nabla \meanpot(\state^{\run})} + \curr[\step] \inner*{\state^{\run} -\flowbase}{\nabla \meanpot(\state^{\run})}     } \nonumber\\
		\ge &   \sum_{\run=1}^{\nRuns} \proofR^{\run-1} \bracks*{\meanpot(\state^{\run}) - \meanpot(\state^{\run-1}) } +  \sum_{\run=1}^{\nRuns}  \curr[\step]\bracks*{\meanpot(\state^{\run}) - \meanpot(\flowbase) }  \nonumber\\
		%
%		= & \sum_{\run=1}^{\nRuns}  \run\bracks*{\meanpot(\state^{\run}) - \meanpot(\flowbase) }  + \sum_{\run=1}^{\nRuns-1} \run \bracks*{ \meanpot(\state^{\nRuns}) - \meanpot(\state^{\run})} \\
		%
%		=& \last[\step] \bracks*{\meanpot(\state^{\nRuns}) - \meanpot(\flowbase) } + \sum_{\run=1}^{\nRuns-1} \curr[\step] \bracks*{\meanpot(\state^{\nRuns}) - \meanpot(\flowbase)} 
		%
		=& \sum_{\run=1}^{\nRuns} \curr[\step] \bracks*{\meanpot(\state^{\nRuns}) - \meanpot(\flowbase)}.  \label{eq:app_adaEW_LemKavis}
	\end{align}
	Here, the last equality is achieved via telescopic sum. Now, when we choose $\curr[\step] = \run$ (as indicated in \cref{thm:AEW}), we notice that  $\proofR^{\run} > \frac{\nRuns^2}{2}$. Divide two sides of \eqref{eq:app_adaEW_LemKavis} by $\proofR^{\run}$ and taking expectations, we obtain precisely \eqref{eq:lem:proof_adaEW}.
\qed
\bibliographystyle{informs2014}
\bibliography{../bibtex/Bibliography-DQV,../bibtex/Bibliography-PM}

\end{document}